\newcommand{\bigoh}{\mathcal{O}}
\newcommand{\cB}{{\cal{B}}}
\newcommand{\cQ}{\mathcal{Q}}
\newcommand{\cT}{\mathcal{T}}
\newcommand{\cA}{\cal{A}}
\newcommand{\defproblem}[3]{
  \vspace{1mm}
\noindent\fbox{
  \begin{minipage}{0.96\textwidth}
  \begin{tabular*}{\textwidth}{@{\extracolsep{\fill}}lr} #1  &  \\ \end{tabular*}
  {\bf{Input:}} #2  \\
  {\bf{Question:}} #3
  \end{minipage}
  }
  \vspace{1mm}
}
\newcommand{\fa}{$f$-factor\ }
\newcommand{\cgfa}{\textsc{Connected $g$-Bounded $f$-factor}}
\newcommand{\mac}{minimal alternating circuit\ }
\newcommand{\macs}{minimal alternating circuits\ }
\renewcommand{\P}{\ensuremath{\mathsf{P}}}
\newcommand{\ceil}[1]{\lceil #1 \rceil}
\newcommand{\polylog}[1]{\mathsf{polylog}(#1)}
\newcommand{\NPC}{\ensuremath{\mathsf{NPC}}}{}
\newcommand{\NP}{\ensuremath{\mathsf{NP}}}{}
\newcommand{\QP}{\ensuremath{\mathsf{QP}}}{}
\renewcommand{\P}{\ensuremath{\mathsf{P}}}
\def\renewtheorem#1{%
  \expandafter\let\csname#1\endcsname\relax
  \expandafter\let\csname c@#1\endcsname\relax
  \gdef\renewtheorem@envname{#1}
  \renewtheorem@secpar
}
\def\renewtheorem@secpar{\@ifnextchar[{\renewtheorem@numberedlike}{\renewtheorem@nonumberedlike}}
\def\renewtheorem@numberedlike[#1]#2{\newtheorem{\renewtheorem@envname}[#1]{#2}}
\def\renewtheorem@nonumberedlike#1{  
\def\renewtheorem@caption{#1}
\edef\renewtheorem@nowithin{\noexpand\newtheorem{\renewtheorem@envname}{\renewtheorem@caption}}
\renewtheorem@thirdpar
}
\def\renewtheorem@thirdpar{\@ifnextchar[{\renewtheorem@within}{\renewtheorem@nowithin}}
\def\renewtheorem@within[#1]{\renewtheorem@nowithin[#1]}
\newtheorem{Fact}[theorem]{Fact}
\newtheorem{observation}[theorem]{Observation}
\newcommand{\Nat}{\mathbb{N}}
\renewenvironment{proof}[1][\relax]{\par
%  \pushQED{\qed}%
  \normalfont \topsep6\p@\@plus6\p@\relax
  \trivlist
  \item[\hskip\labelsep\itshape
    \ifx#1\relax \proofname\else\proofname{} of #1\fi\@addpunct{.}]\ignorespaces
}{%
%  \popQED\endtrivlist\@endpefalse
}\makeatother
\begin{document}
\title{On the Complexity Landscape of  Connected $f$-Factor Problems} 
%\titlerunning{On the Complexity Landscape of  Connected $f$-Factor Problems}
\author{R. Ganian$^1$ \and N. S. Narayanaswamy$^2$ \and S. Ordyniak$^3$ \and C. S. Rahul$^4$ \and M. S. Ramanujan$^1$}

\authorrunning{Ganian, Narayanaswamy, Ordyniak, Rahul, Ramanujan}

\institute{$^1$Algorithms and Complexity Group, TU Wien, Vienna, Austria\\
  \texttt{rganian@gmail.com, ramanujan@ac.tuwien.ac.at}\\
$^2$Indian Institute of Technology Madras, Chennai, India\\
\texttt{swamy@cse.iitm.ac.in}\\
$^3$Department of Computer Science, University of Sheffield, Sheffield, UK\\
  \texttt{sordynian@gmail.com}\\
$^4$Faculty of Mathematics, Informatics and Mechanics, University of Warsaw, Poland\\
  \texttt{rahulcs@mimuw.edu.pl}  
  }

\maketitle
\begin{abstract}
Let $G$ be an undirected simple graph having $n$ vertices and let $f$ be a function defined to be $f:V(G)\rightarrow \{0,\dots, n-1\}$. An $f$-factor of $G$ is a spanning subgraph $H$ such that
  $d_H(v)=f(v)$ for every vertex $v\in V(G)$. The subgraph \(H\)
  is called a \emph{connected \(f\)-factor} if, in addition, $H$ is connected. A classical result of Tutte (1954) is the polynomial time algorithm to check
  whether a given graph has a specified $f$-factor. However, checking for the presence of a \emph{connected} $f$-factor is easily seen to generalize \textsc{Hamiltonian Cycle} and hence is \NP-complete. In fact, the \textsc{Connected $f$-Factor} problem remains \NP-complete even when we restrict $f(v)$ to be  at least $n^{\epsilon}$ for each vertex $v$ and $\epsilon<1$; on the other side of the spectrum of nontrivial lower bounds on $f$, the problem is known to be polynomial time solvable when $f(v)$ is at least $\frac n3$ for every vertex $v$. 

In this paper, we extend this line of work and obtain new complexity results based on restrictions on the function $f$.
In particular, we show that when $f(v)$ is restricted to be at least $\frac{n}{(\log n)^c}$,  the problem can be solved in quasi-polynomial time in general and in randomized polynomial time if $c\leq 1$. Furthermore, we show that when $c>1$, the problem is \NP-intermediate.
\keywords{connected $f$-factors, quasi-polynomial time algorithms, randomized algorithms, \NP-intermediate, exponential time hypothesis}
\end{abstract}

\section{Introduction}
%\blfootnote{Supported by the Indo-German Max Planck Center for Computer Science grant for the year 2013-2014 in the area of Algorithms and Complexity.}
The concept of $f$-factors is fundamental in graph theory, dating back to the 19th century, specifically to the work of Petersen~\cite{JP1891}. In modern terminology, an $f$-factor is defined as a spanning subgraph which satisfies degree constraints (given in terms of the degree function $f$) placed on each vertex of the graph~\cite{DW00}. Some of the most fundamental results on $f$-factors were obtained by Tutte, who gave sufficient and necessary conditions for the existence of $f$-factors~\cite{WT52}. In addition, he developed a method for reducing the $f$-factor computation problem to the  perfect matching~\cite{WT54} problem, which gives a straightforward polynomial time algorithm for the problem of deciding the existence of an $f$-factor. There are also several detailed surveys on $f$-factors of graphs, for instance by Chung and Graham~\cite{CG81}, Akiyama and Kano~\cite{JK85}, Lov{\'a}sz and Plummer~\cite{LP86}. 

Aside from work on general $f$-factors, substantial attention has been devoted to the variant of $f$-factors where we require the subgraph to be connected (see for instance the survey articles by Kouider and Vestergaard~\cite{KV05} and Plummer~\cite{PM07}). Unlike the general $f$-factor problem, deciding the existence of a connected $f$-factor is \NP-complete~\cite{GJ79,CC90}. It is easy to see that the connected $f$-factor problem (\textsc{Connected $f$-Factor}) generalizes \textsc{Hamiltonian Cycle} (set $f(v)=2$ for every vertex $v$), and even the existence of a deterministic single-exponential (in the number of vertices) algorithm is open for the problem~\cite{PhilipR14}.

The \NP-completeness of this problem has motivated several authors to study the  {\sc Connected $f$-Factor} for various restrictions on the function $f$. Cornelissen et al.~\cite{BN13} showed that \textsc{Connected $f$-Factor} remains \NP-complete even when $f(v)$ is at least $n^{\epsilon}$ for each vertex $v$ and any constant $\epsilon$ between 0 and 1. Similarly, it has been shown that the problem is polynomial time solvable when $f(v)$ is at least $\frac n3$~\cite{NR15} for every vertex $v$. Aside from these two fairly extreme cases, the complexity landscape of \textsc{Connected $f$-Factor} based on lower bounds on the function $f$, has largely been left uncharted. 

\begin{table}[t]
\begin{center}
\begin{tabular}{|l|cc|}
\hline
  $f(v) \geq $ & Complexity Class & \\
  \hline
%  $\forall c\geq 2$ & NPC~\cite{?}& \\ \hline
  $n^\epsilon, \forall \epsilon>0$ & \NPC~\cite{BN13} &\\ \hline
  $\frac{n}{\polylog{n}}$ & \QP~ (Theorem \ref{thm:polylog})&\\ \hline
  $\frac{n}{\log n}$ & RP~ (Theorem \ref{thm:log})&\\ \hline
  $\frac{n}{c}, \forall c\geq 3$ & P~~(Theorem \ref{thm:constant})&\\
  \hline
\end{tabular}

	\end{center}
	
	\caption{The table depicting the known as well as new results on the complexity landscape of the  \textsc{Connected $f$-Factor} problem.}\label{tab:landscape}
\end{table}

\noindent \textbf{Our results and techniques.}
%In this paper, we give a nearly complete classification of the complexity of the problem through the lens of the degree function $f$. To this end, 
In this paper, we provide new results (both positive and negative) on solving 
\textsc{Connected $f$-Factor} based on lower bounds on the
range of $f$.
Since we study the complexity landscape of \textsc{Connected $f$-Factor} through the lens of the function $f$, it will be useful to formally capture bounds on the function $f$ via an additional ``bounding'' function $g$. To this end, we introduce the connected $g$-Bounded $f$-factor  problem (\textsc{Connected $g$-Bounded $f$-Factor}) below:

\begin{center}
\vspace{-0.2cm}
  \begin{boxedminipage}[t]{0.99\textwidth}
  \begin{quote}
  \textsc{Connected $g$-Bounded $f$-Factor}\\ \nopagebreak
  \emph{Instance}: An $n$-vertex undirected simple graph $G$ and a mapping \\$f:V(G)\rightarrow \Nat$ such that $f(v)\geq \frac n {g(n)}$.\\ \nopagebreak
  \emph{Task}: Find a connected $f$-factor $H$ of $G$.
  %Find a connected subgraph $H$ of $G$ 
 % such that each vertex $v\in V(G)$ satisfies $d_H(v)=f(v)$.
\end{quote}
\end{boxedminipage}
\end{center}

First, we obtain a polynomial time algorithm for {\sc Connected $f$-Factor} when  $f(v)$ is at least $\frac n {c}$ for every vertex $v$ and any constant $c>1$. This result generalizes the previously known polynomial time algorithm for the case when $f(v)$ is at least $\frac n 3$. This is achieved thanks to a novel approach for the problem, which introduces a natural way of converting one $f$-factor to another by exchanging a set of edges. Here we formalize this idea using the notion of \emph{Alternating Circuits}. These allow us to focus on a simpler version of the problem, where we merely need to ensure connectedness across a coarse partition of the vertex set. Furthermore, we extend this approach to obtain a quasi-polynomial time algorithm for the {\sc Connected $f$-Factor}  when $f(v)$ is at least $\frac n {\polylog{n}}$ for every vertex. To be precise, we prove the following two theorems (see Section \ref{sec:prelim} for an explanation of the function $g$ in formal statements).
%These techniques as well as the result itself, Theorem~\ref{thm:constant}, are detailed in Section~\ref{sec:overview} and~\ref{sec:det-alg}.

%    \begin{restatable}{theorem}{constantthm}\label{thm:constant}
% For every function $g(n)\in\bigoh(1)$, \textsc{Connected $g$-Bounded $f$-Factor} can be solved in polynomial time.
%    \end{restatable}

\begin{theorem}\label{thm:constant}
  For every function $g(n)\in\bigoh(1)$, \textsc{Connected $g$-Bounded $f$-Factor} can be solved in polynomial time.
\end{theorem}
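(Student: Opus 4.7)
The plan is to combine Tutte's polynomial time $f$-factor algorithm with a structural exchange argument based on alternating circuits. First, I will apply Tutte's algorithm to obtain, in polynomial time, some $f$-factor $H$ of $G$ (and output NO if none exists). Fix a constant $c$ with $g(n)\leq c$, so $f(v)\geq n/c$ for every $v$. Since all $f(v)\geq n/c$ neighbors of $v$ in $H$ must lie in the connected component of $H$ containing $v$, every component of $H$ has at least $n/c+1$ vertices. Consequently $H$ has at most $c$ connected components, a constant depending only on $g$.

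The components of $H$ then induce a coarse partition $\mathcal{P}=\{C_1,\ldots,C_k\}$ of $V(G)$ with $k\leq c$. The idea, as hinted at in the introduction, is to reduce the problem to ensuring connectedness \emph{across} this coarse partition: I look for a modified $f$-factor $H'$ whose inter-component edges connect $C_1,\ldots,C_k$ into a single ``super-connected'' structure while keeping each $H'[C_i]$ connected, in which case $H'$ is a connected $f$-factor.

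To produce $H'$ from $H$, I will use alternating circuits. The symmetric difference of any two $f$-factors decomposes into edge-disjoint closed walks alternating between $H$-edges and non-$H$-edges, and toggling such a walk preserves all vertex degrees. I will enumerate short candidate circuits between pairs of parts: for instance, a length-$4$ circuit with $H$-edges $uv\in H[C_i]$, $wx\in H[C_j]$ and non-$H$-edges $vw,ux$ in $G$ crossing the cut. Toggling such a circuit produces a new $f$-factor in which $C_i$ and $C_j$ are joined. Because $k=\bigoh(1)$, only a constant number of merges suffice, and each merge requires only $\mathrm{poly}(n)$ enumeration, so the overall running time is polynomial.

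The main obstacle I foresee is ensuring that toggling a circuit does not disconnect a part internally: the removed $H$-edges in $C_i$ or $C_j$ could be bridges of $H[C_i]$ or $H[C_j]$. I will address this by using \emph{minimal} alternating circuits (in the sense suggested by the paper's framework) and exploiting the high minimum degree $n/c$: either a non-bridge internal $H$-edge is always available for toggling, or one extends to a longer alternating circuit that simultaneously merges across $\mathcal{P}$ and repairs any internal disconnection, using the large slack in $f$. Formalizing this interplay between inter-component and intra-component connectivity is the delicate part of the argument; the remainder is routine given the constant bound on $k$.
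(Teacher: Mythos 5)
Your high-level plan (start from a Tutte $f$-factor, bound the number of components by the constant $c$, then use alternating-circuit exchanges to improve connectivity) is aligned with the paper's framework in spirit, but the step you flag as ``delicate'' is a genuine gap, not a routine detail, and it is precisely where the paper's argument takes a different and more careful route.

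The specific problem: when you toggle an alternating circuit that joins $C_i$ and $C_j$, the $H$-edges that get removed inside $C_i$ and $C_j$ may be bridges, splitting those parts. You propose to either always find a non-bridge internal $H$-edge or to ``extend to a longer alternating circuit that simultaneously merges across $\mathcal{P}$ and repairs any internal disconnection.'' Neither of these claims is established, and neither is obviously true: a connected graph with minimum degree $\geq 2$ can still have bridges, and there is no a priori bound on how many additional disconnections a longer alternating circuit causes. Moreover, your assertion that ``only a constant number of merges suffice'' implicitly assumes that each toggle monotonically reduces the number of components, which fails exactly when such a repair is needed. This is not a matter of tedious formalization; it is the crux.

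The paper resolves this by relaxing the invariant rather than trying to preserve it. It reduces \textsc{Connected $g$-Bounded $f$-Factor} to a weaker subproblem, \textsc{Partition Connector}: given a partition $\mathcal Q$, find \emph{any} $f$-factor $H'$ (possibly internally disconnected) such that $H'/\mathcal Q$ is connected. For constant $|\mathcal Q|$, this is solved in polynomial time by brute-forcing over all spanning trees of $G/\mathcal Q$ (at most $\binom{|E(G)|}{|\mathcal Q|-1}$ of them) and, for each, checking via Tutte reduction whether an $f$-factor containing those edges exists (Fact~\ref{fac:enforcedg} and Lemma~\ref{lem:alg-part-conn-poly}). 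If the resulting $H'$ is internally disconnected, the algorithm does not try to repair it; it instead refines $\mathcal Q$ by splitting each part into the components of $H'$ restricted to it, and iterates. The heart of the proof is then Lemma~\ref{lem:minimal-circuits} (using minimal alternating circuits to ``push'' $H'$ toward the previous $f$-factor so that each vertex loses at most $2(|\mathcal Q'|-1)$ internal neighbors, property (M4)) combined with Lemma~\ref{lem:bound-part-size}, a counting argument which uses (M4) and the bound $f(v)\geq n/g(n)$ to show the parts cannot shrink below $n/g(n)-2g(n)^2$, hence the partition never has more than $g(n)+1$ parts and the iteration terminates after at most $g(n)+1$ rounds. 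Without an argument of this kind controlling how much worse the connectivity can get in a single round, your approach is incomplete: the ``constant number of merges'' claim has no justification, and the method as stated could loop or blow up.
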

   
\begin{theorem}\label{thm:polylog}
  For every $c>0$ and function $g(n)\in\bigoh((\log n)^c)$, \textsc{Connected $g$-Bounded $f$-Factor} can be solved in  time $n^{(\log{n})^{\alpha (c)}}$ where $\alpha(c)\in \bigoh(1)$.
%   	$n^{\log^{\bigoh(1)} n}$.
\end{theorem}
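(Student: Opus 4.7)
The plan is to extend the approach of Theorem~\ref{thm:constant} from a constant number of components of an arbitrary $f$-factor to a polylogarithmic one. First, invoke Tutte's classical reduction to compute some $f$-factor $H_0$ of $G$ in polynomial time. Because $f(v)\geq n/g(n)$ for every vertex $v$, every connected component of $H_0$ must contain strictly more than $n/g(n)$ vertices, so $H_0$ has at most $k\leq g(n)\in\bigoh((\log n)^c)$ components $C_1,\dots,C_k$. Any connected $f$-factor $H^*$, if one exists, differs from $H_0$ by a symmetric difference that decomposes into alternating circuits (the standard exchange lemma presumably established earlier with the notion of alternating circuits). Thus the problem reduces to finding a collection of alternating circuits whose simultaneous application transforms $H_0$ into a factor connected across the coarse partition $\{C_1,\dots,C_k\}$.

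The algorithmic core is to treat each $C_i$ as a super-node and to search for a set of at most $k-1$ alternating circuits whose contractions form a spanning connected subgraph of the super-graph. One would enumerate, in quasi-polynomial time, all candidate sequences of $k-1$ alternating circuits of length bounded by some polynomial in $k$; for each candidate, verify in polynomial time that it yields a valid connected $f$-factor. Since $k=\bigoh((\log n)^c)$, enumerating $n^{\bigoh(k)}$ short candidate circuits and checking all $\bigoh(k^{k})$ ways to assemble them into a spanning "merging tree" on the super-nodes yields total running time $n^{(\log n)^{\alpha(c)}}$ for a constant $\alpha(c)$ depending only on $c$. When $c\in\bigoh(1)$ this is the quasi-polynomial bound claimed in the statement.

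The main obstacle is a structural lemma asserting that whenever the current factor is disconnected, there exists an alternating circuit of length $\bigoh(k)$ that merges two of its components. This is where the density hypothesis $f(v)\geq n/g(n)$ enters crucially: each super-node $C_i$ has total $f$-degree at least $|C_i|\cdot n/g(n)$, giving an abundance of $H_0$-edges inside $C_i$ and a large bipartite set of non-$H_0$ edges leaving $C_i$. A double-counting (or Hall-type) argument on these candidate edges forces the existence of a short alternating circuit bridging two components, analogous to the $\bigoh(1)$-length circuits used for constant $g$ in Theorem~\ref{thm:constant}, but with the bound scaling as a polynomial in $k$. A secondary difficulty is that successively applied circuits may interact destructively, potentially unmerging previously merged components; this is handled by processing merges iteratively and, at each step, restricting attention to alternating circuits whose edge support is disjoint from previously modified edges, which the same abundance estimate still permits. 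Iterating $k-1$ such merges and combining the enumeration bounds completes the argument.
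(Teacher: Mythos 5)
Your proposal captures the general setting (compute some $f$-factor $H_0$, observe it has at most $g(n)$ components, use alternating circuits to re-wire it) but it diverges from the paper's argument at two crucial points, and these divergences are genuine gaps rather than alternate routes.

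First, the structural lemma you rely on — that whenever the current factor is disconnected there is an alternating circuit of length $\bigoh(k)$ bridging two components — is not established and is not what the density hypothesis gives you. The paper never bounds the \emph{length} of an alternating circuit; it bounds the \emph{number} of (minimal) alternating circuits one needs. Concretely, Lemma~\ref{lem:minimal-circuits} extracts from the symmetric difference of two $f$-factors a set of at most $|\mathcal{Q}'|-1$ edge-disjoint minimal alternating circuits, and a \emph{minimal} alternating circuit can remove at most two edges at each vertex, so switching with all of them drops any vertex degree by at most $2(|\mathcal{Q}'|-1)$. The circuits themselves can be long; short circuits are neither needed nor generally available. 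Moreover, for the \textsc{Partition Connector} subroutine the paper does not enumerate alternating circuits at all: it enumerates candidate spanning trees of the quotient graph $G/\mathcal{Q}$ — that is, sets of $|\mathcal{Q}|-1$ cross-edges, giving $\bigoh(n^{2(|\mathcal{Q}|-1)})$ candidates — and for each candidate checks in polynomial time whether there is an $f$-factor containing those edges (Fact~\ref{fac:enforcedg}, Lemma~\ref{lem:alg-part-conn-poly}). This is simpler and avoids your unproven Hall-type claim entirely.

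Second, and more fundamentally, your proposal treats a single round of ``merge the $k$ components of $H_0$'' as sufficient. It is not: an $f$-factor that connects the partition $\{C_1,\dots,C_k\}$ may well have each part $C_i$ split into several pieces internally, so the result is still disconnected. The heart of the paper's argument (Observation~\ref{obs:characterization}, Lemma~\ref{lem:correctness}, and properties (M1)--(M4)) is an \emph{iterative refinement}: after producing an $f$-factor $H_i$ connecting $\mathcal{Q}_i$, one refines $\mathcal{Q}_i$ to $\mathcal{Q}_{i+1}$ along the components of $H_i$ restricted to each part, and solves \textsc{Partition Connector} again on the finer partition. The degree bound $2(|\mathcal{Q}_i|-1)$ from minimal alternating circuits is exactly what controls this refinement: combined with $f(v)\geq n/g(n)$ it forces each part to stay of size $\Omega(n/g(n))$, so the number of parts (and hence the number of rounds) stays $\bigoh(g(n))$ (Lemmas~\ref{lem:min-degree-comp} and~\ref{lem:bound-part-size}). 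This iterated-refinement mechanism is entirely missing from your argument, and without it, the scheme does not actually produce a connected $f$-factor. I would suggest reworking the proposal around (i) the spanning-tree enumeration for \textsc{Partition Connector} rather than alternating-circuit enumeration, and (ii) the refinement loop with the degree-loss bound rather than a one-shot merge.
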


Second, we build upon these new techniques to obtain a \emph{randomized polynomial time} algorithm which solves \textsc{Connected $f$-Factor} in the more general case where $f(v)$ is lower-bounded by $\frac n {g(n)}$ for every vertex $v$ and $g(n)\in \bigoh({\log n})$. For this, we also require  algebraic techniques that have found several applications in the design of fixed-parameter and exact algorithms for similar problems \cite{CyganNPPRW11,Wahlstrom13,GutinWY13,PhilipR14}.
Precisely, we prove the following theorem.

\begin{theorem}\label{thm:log}
  For every function $g(n)\in\bigoh(\log n)$, \textsc{Connected $g$-Bounded $f$-Factor} can be solved in polynomial time with constant error probability.
\end{theorem}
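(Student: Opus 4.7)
The starting point is Tutte's classical polynomial-time algorithm, which produces some (arbitrary) $f$-factor $H$ of $G$; the remaining task is to transform $H$ into a \emph{connected} $f$-factor. Building on the alternating-circuit framework underlying Theorems~\ref{thm:constant} and~\ref{thm:polylog}, this amounts to locating a collection of alternating circuits whose simultaneous application to $H$ merges all of its components into one. Because $f(v)\geq n/g(n)$ with $g(n)\in\bigoh(\log n)$, any $f$-factor $H$ already has at most $\bigoh(\log n)$ connected components, so the ``merging object'' we search for is parameterized by a logarithmic-sized structure over the coarse component partition of $H$.

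The enumeration strategy behind Theorem~\ref{thm:polylog} would cost $n^{\bigoh(\log n)}$, which is not polynomial. I would therefore replace enumeration by an algebraic existence test. Associate an indeterminate $x_e$ with each edge $e\in E(G)$ and construct a polynomial $P(\mathbf{x})$ whose nonvanishing is equivalent to the existence of the desired family of alternating circuits, and hence of a connected $f$-factor in $G$. Crucially, $P$ should admit a representation as a polynomial-sized determinant or Pfaffian, following the algebraic paradigm of~\cite{CyganNPPRW11,Wahlstrom13,GutinWY13,PhilipR14}. Evaluating $P$ at a uniformly random assignment over a field of size $n^{\bigoh(1)}$ and appealing to the Schwartz--Zippel lemma then yields a one-sided-error decision procedure running in polynomial time.

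To recover an explicit solution once existence has been certified, I would use the standard self-reducibility loop: iterate over the edges of $G$, attempt to forbid each edge in turn and re-run the decision test, deleting the edge exactly when the resulting instance still admits a nonzero $P$ with high probability. After at most $|E(G)|$ rounds, the retained edges form a connected $f$-factor. Standard amplification by $\bigoh(\log n)$ independent repetitions of each randomized evaluation brings the overall failure probability below any desired constant.

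The main technical obstacle is the explicit construction of $P$. Degree constraints fit naturally into Tutte-matrix/Pfaffian formulations via the classical reduction of $f$-factor to matching, but the global connectivity constraint across the $\bigoh(\log n)$ components of $H$ is harder: the natural route is a cut-and-count style inclusion--exclusion summing over the $2^{\bigoh(\log n)}=\mathrm{poly}(n)$ subsets of components, exploiting parity cancellation to kill disconnected contributions. Arranging this sum so that each summand is itself a polynomial-sized Pfaffian, while simultaneously maintaining the alternating-circuit bookkeeping needed to preserve the $f$-factor structure, is where the bulk of the technical work of Theorem~\ref{thm:log} would reside.
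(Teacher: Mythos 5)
Your proposal correctly identifies two of the paper's key ingredients: the algebraic certification test via the Tutte matrix of the $f$-blowup, evaluated at random points and analyzed through the Schwartz--Zippel lemma, and the cut-and-count inclusion--exclusion over $2^{\bigoh(\log n)}$ subsets of parts with parity cancellation killing unwanted contributions. You also correctly anticipate the self-reduction loop to go from decision to search.

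However, there is a genuine gap in your one-shot formulation. You propose a polynomial $P(\mathbf{x})$ ``whose nonvanishing is equivalent to the existence of \ldots a connected $f$-factor in $G$,'' with the inclusion--exclusion ranging over the subsets of components of a single initial $f$-factor $H_0$. This cannot work: an $f$-factor that crosses every cut of the form $(\cQ(I),\overline{\cQ(I)})$, where $\cQ$ is the component partition of $H_0$, is \emph{not} necessarily connected --- its own components may refine the parts of $\cQ$. In other words, the poly-sized family of cuts you sum over certifies only that the new $f$-factor \emph{connects the partition} $\cQ$, which is strictly weaker than being connected. Certifying full connectedness via inclusion--exclusion would require summing over all $2^{\Theta(n)}$ vertex-bipartitions, which is not polynomial.

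The paper resolves this by \emph{not} trying to certify connectedness in one shot. Instead it introduces the relaxed \textsc{Partition Connector} problem and wraps the algebraic decision procedure inside an iterative-refinement framework (Theorem~\ref{thm:mainalg}): it maintains a sequence of pairs $(H_i,\cQ_i)$ satisfying properties (M1)--(M4), where each round only needs an $f$-factor connecting the current partition $\cQ_i$, and the partition is then refined using the components of the new $f$-factor inside each part. The crucial and nontrivial piece you are missing is Lemma~\ref{lem:minimal-circuits}: by switching along a bounded set of minimal alternating circuits, the new $f$-factor $H_i$ can be taken to agree with $H_{i-1}$ on all but $2(|\cQ_i|-1)$ edges at each vertex inside each part. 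This control is exactly what is needed in Lemmas~\ref{lem:min-degree-comp} and~\ref{lem:bound-part-size} to keep every $|\cQ_i|$ bounded by $g(n)+1\in\bigoh(\log n)$ and to keep the number of refinement rounds logarithmic, which in turn keeps both the running time polynomial (since each \textsc{Partition Connector} call costs $\bigoh^{*}(2^{|\cQ|})$) and the accumulated error probability constant. Without this refinement-and-switching machinery, neither the degree bound on the polynomial, nor the number of randomized calls, nor the size of the inclusion--exclusion sum remains under control.
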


   We remark that the randomized algorithm in the above theorem has one-sided error with `Yes' answers always being correct.
%The algorithm is formalized in Theorem~\ref{thm:log} and its proof, as well as the required notions and lemmas, are provided in Section~\ref{sec:random}.
%
%Before proceeding to our lower bounds, we mention that as a direct consequence of the developed techniques, we also obtain a quasipolynomial algorithm whenever $f(n)$ is lower-bounded by $\frac n {log^{\bigoh(1)} n}$. In particular, we prove that such instances can be solved in time $n^{\log^{\bigoh(1)} n}$. 
%We also show that all these results also hold in the case of weighted graphs, where the objective is to compute a \emph{minimum weight} connected $f$-factor if one exists.
Finally, we obtain 
a lower bound result for \textsc{Connected $f$-Factor} when $f(n)$ is at least $\frac n {(\log n)^{c}}$ for $c>1$. Specifically, in this case we show that the problem is in fact \NP-intermediate, assuming the Exponential Time Hypothesis \cite{RF01} holds. Formally speaking, we prove the following theorem.

\begin{restatable}{theorem}{lowerbound}
  \label{thm:lowerbound}
  For every $c>1$ and for every $g(n)\in\Theta((\log n)^c)$, \textsc{Connected $g$-Bounded $f$-Factor} is neither in \P\ nor \NP-hard unless the Exponential Time Hypothesis fails.
\end{restatable}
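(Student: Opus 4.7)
The theorem makes two claims: under ETH, the problem is neither \NP-hard nor in \P. For the first claim, Theorem~\ref{thm:polylog} places our problem in \QP\ (since $\Theta((\log n)^c) \subseteq \bigoh((\log n)^c)$), admitting an $n^{(\log n)^{\alpha(c)}}$-time algorithm; were the problem also \NP-hard, then composing a polynomial-time reduction from $3$-\SAT\ with this algorithm would decide $3$-\SAT\ on $m$-sized instances in $2^{\polylog{m}} = 2^{o(m)}$ time, contradicting ETH.

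For the second claim, I would construct a polynomial-time padding reduction from \textsc{Connected $f$-Factor} with $f(v) \geq n^{\epsilon}$ (\NP-hard for every $\epsilon \in (0,1)$ by~\cite{BN13}) to our problem. Given $(G, f)$ with $n = |V(G)|$ and $f(v) \geq n^\epsilon$, attach to each $v \in V(G)$ a fresh clique $C_v$ on $M = \Theta(2^{n^{1/c}})$ vertices together with all $M$ edges between $v$ and $C_v$; define $f'(v) = f(v) + M$ for $v \in V(G)$ and $f'(u) = M$ for each $u \in \bigcup_v C_v$. Every clique vertex has exactly $M$ neighbors in $G'$ and requires degree $M$, so all edges incident to clique vertices are forced into any $f'$-factor $H'$, and hence $H' \cap E(G)$ is an $f$-factor of $G$. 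Moreover, since each $C_v$ is attached to the rest of $G'$ only through the single vertex $v$, any $V(G)$-to-$V(G)$ path in $H'$ can be shortened to avoid $\bigcup_v C_v$; hence $H'$ is connected iff $H' \cap E(G)$ is. A short calculation with $N = n(1+M)$ verifies $M \geq N/(\log N)^c$, so $(G', f')$ is a legitimate instance of our problem.

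If our problem were in \P, the padded instance would be solvable in $\mathrm{poly}(N) = 2^{O(n^{1/c})}$ time, yielding a subexponential-in-$n$ algorithm for \textsc{Connected $f$-Factor} with $f(v) \geq n^\epsilon$. Composing this with the polynomial-time reduction of~\cite{BN13} from $3$-\SAT, and choosing $\epsilon$ sufficiently small (depending on $c$) so that the combined blow-up stays subexponential in the $3$-\SAT\ size $m$, would then yield a $2^{o(m)}$-time algorithm for $3$-\SAT, contradicting ETH. The main technical obstacle is parameter calibration: $M$ must be large enough that $M \geq N/(\log N)^c$, which forces $M$ to be exponential in $n^{1/c}$, yet $N = 2^{O(n^{1/c})}$ must remain subexponential in $m$ after chaining the reduction of~\cite{BN13}. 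The hypothesis $c > 1$ provides exactly the needed slack $1/c < 1$ so that a suitable $\epsilon$ exists; a secondary but essential design choice is to attach the cliques one-per-vertex rather than via a shared gadget or a bipartite join, which prevents the padding itself from creating shortcuts that would artificially connect different components of $H' \cap E(G)$.
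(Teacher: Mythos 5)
Your handling of the first claim (not \NP-hard) matches the paper's Lemma~\ref{lem:non-npc} exactly: place the problem in \QP\ via Theorem~\ref{thm:polylog}, then observe that an \NP-hard problem in \QP\ would give a $2^{\polylog m} = 2^{o(m)}$ algorithm for 3-\SAT, contradicting ETH.

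For the second claim (not in \P), your padding gadget --- attach a private clique $C_v$ of superpolynomial size to each original vertex $v$, raise $f$ accordingly, so every edge touching $\bigcup_v C_v$ is forced and connectedness of the padded $f'$-factor collapses onto connectedness of $H' \cap E(G)$ --- is precisely the construction the paper uses in Lemma~\ref{lem:non-p}, and your parameter choice $M = \Theta(2^{n^{1/c}})$ is calibrated correctly (the paper's $s = 2^{(z/c_1)^{1/(1+\epsilon)}}/z$ plays the same role with $c = 1+\epsilon$). You also correctly identified that $c > 1$ is exactly what makes $n^{1/c} = o(n)$ and hence keeps the padded instance subexponential. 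The genuine difference is your choice of \emph{source problem}: you reduce from \textsc{Connected $f$-Factor} with $f(v)\geq n^\epsilon$ (NP-hard by~\cite{BN13}), whereas the paper reduces directly from \textsc{Hamiltonian Cycle}.

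That choice introduces a real, though fixable, gap. To conclude a $2^{o(m)}$ algorithm for 3-\SAT, you chain the~\cite{BN13} reduction with your padding and need the composite blow-up to stay subexponential in $m$. This requires the~\cite{BN13} reduction to have polynomial blow-up $m^{d(\epsilon)}$ with $d(\epsilon) \to 1$ as $\epsilon \to 0$, so that $d(\epsilon) < c$ is achievable. But the cited result is a bare \NP-completeness statement --- the degree of its polynomial blow-up is not part of the claim, and invoking ``choose $\epsilon$ small enough'' silently assumes a quantitative property of~\cite{BN13} you have not verified. The paper sidesteps this entirely by reducing from \textsc{Hamiltonian Cycle}, which \emph{is} a \cfa\ instance (with $f\equiv 2$), has a standard linear-size reduction from 3-\SAT, and thus carries a clean $2^{o(z)}$ ETH lower bound with no intermediate step. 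Applying your own clique-padding directly to a \textsc{Hamiltonian Cycle} instance on $z$ vertices with $M = \Theta(2^{z^{1/c}})$ gives $N = z(1+M)$, $\mathrm{poly}(N) = 2^{O(z^{1/c})} = 2^{o(z)}$, and closes the argument; this is essentially the paper's route and would make your proof rigorous without further appeal to~\cite{BN13}.
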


We detail the known as well as new  results on the complexity landscape of \textsc{Connected $f$-Factor} in
Table~\ref{tab:landscape}.

\noindent \textbf{Organization of the paper.}
After presenting required definitions and preliminaries in Section~\ref{sec:prelim}, we proceed to the key technique and framework used for our algorithmic results, which forms the main part of Section~\ref{sec:overview}. In Section~\ref{sec:det-alg}, we obtain both of our deterministic algorithms, which are formally given as Theorem~\ref{thm:constant} (for the polynomial time algorithm) and Theorem~\ref{thm:polylog} (for the quasi-polynomial time algorithm). 
Section~\ref{sec:random} then concentrates on our randomized polynomial time algorithm, presented in Theorem~\ref{thm:log}. 
Finally, Section~\ref{sec:npi} focuses on ruling out (under established complexity assumptions) both \NP-completeness and inclusion in \P\ possibilities for \textsc{Connected $g$-Bounded $f$-Factor} for all polylogarithmic functions $g$ in $\Theta((\log n)^c)$. 

\section{Preliminaries}\label{sec:prelim}
\subsection{Basic Definitions}

We use standard definitions and notations from West \cite{DW00}. The notation 
$d_G(v)$ denotes the \textbf{degree} of a vertex $v$ in a graph $G$. Similarly, $N_{G}(v)$ represents the set of vertices adjacent to $v$  in $G$. A \textbf{component} in a graph is a maximal subgraph that is connected. Note that the set of components in a graph uniquely determines a partition of the vertex set. A \textbf{circuit} in a graph is a cyclic sequence $v_0,e_1,v_1,\cdots ,e_k,v_k=v_0$ where each $e_i$ is of the form $\{v_{i-1},v_i\}$ and occurs at most once in the sequence. An \textbf{Eulerian circuit} in a graph is a circuit in which each edge in the graph appears. Any graph having an Eulerian circuit is called an {\em Eulerian} graph. 
 
Let $V'$ be a subset of the vertices in the graph $G$. The \textbf{vertex induced subgraph} $G[V']$ is the graph over vertex set $V'$ containing all the edges in $G$ whose endpoints are both in $V'$.  Given $E'\subseteq E(G)$, $G[E']$ is the \textbf{edge induced subgraph} of $G$ whose edge set is $E'$ and  vertex set is the set of all vertices incident on edges in $E'$. 

Given two subgraphs $G_1$ and $G_2$ of $G$, the graph $G_1\bigtriangleup G_2$ is the subgraph $G[E(G_1)\bigtriangleup E(G_2)]$. The union of the graphs $G_1,G_2,\ldots,G_r$ is the graph $\bigcup_i^r G_i$ whose vertex set is $\bigcup_i^r V(G_i)$  and edge set is $\bigcup_i^r E(G_i)$.

Given a partition $\mathcal Q=\{Q_1,Q_2,\dots,Q_r\}$ of the vertex set of $G$, the quotient graph $G/\mathcal Q$ is constructed as follows: The vertex set of $G/\mathcal Q$ is $\mathcal Q$. Corresponding to each edge $(u,v) $ in $ G$ where $u $ in $ Q_i$, $v$  in $ Q_j$, $i \neq j$, add an edge $(Q_i,Q_j)$ to $G/\mathcal Q$. Thus,  $G/\mathcal Q$ is a multigraph without loops. For a subgraph $G'$ of $G$, we say $G'$ \textbf{connects} a partition $\mathcal Q$ if $G'/\mathcal Q$ is connected. Further, we address the graph $G'$ to be a \textbf{partition connector}. A \textbf{refinement} $\mathcal Q'$ of a  partition $\mathcal Q$ is a partition of $V(G)$ where each part $Q'$ in $\mathcal Q'$ is a subset of some part $Q$ in $\mathcal Q$. This notion of partition refinement was used, e.g., by Kaiser \cite{TK12}. A {\em spanning tree of the quotient graph $G/\mathcal Q$} refers to a subgraph $T$ of  $G$ with $|\mathcal Q|$-1 edges that connects $\mathcal Q$. 
The following lemma will later be used in the analysis of the error probability of our randomized algorithm.
\begin{lemma}
  %\begin{lemma}[$\star$]}
  \label{lem:prob-bound}
  The following holds for every $n,c \in \mathbb{N}$ with $n>c$\emph{:}
  \begin{equation*}
    1-\frac{(c\ceil{\log n})^2}{n^2} \leq \left ( 1-\frac{1}{n^2} \right)^{c\log n} 
  \end{equation*}
  % \begin{equation*}
  %   1-\frac{(c\ceil{\log n})^4}{n^2} \leq \left ( 1-\frac{(c\ceil{\log n})^2}{n^2} \right)^{c\log n} 
  % \end{equation*}  
\end{lemma}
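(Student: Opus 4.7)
The plan is to reduce the claimed inequality to a single application of the real-exponent form of Bernoulli's inequality, followed by a trivial slackening.

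First, I would verify that Bernoulli's inequality $(1+y)^{r}\geq 1+ry$, valid for $y\geq -1$ and real $r\geq 1$, applies with $y=-1/n^{2}$ and $r=c\log n$. The hypotheses $c,n\in\mathbb{N}$ and $n>c$ force $c\geq 1$ and $n\geq 2$, so (under the usual base-$2$ convention for $\log$ consistent with the paper's polylogarithmic discussion) the exponent $c\log n$ is at least $1$, and $y=-1/n^{2}\in[-1,0)$. Bernoulli then yields
$$\left(1-\frac{1}{n^{2}}\right)^{c\log n}\;\geq\;1-\frac{c\log n}{n^{2}}.$$

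Second, I would argue that the stated lower bound $1-(c\lceil\log n\rceil)^{2}/n^{2}$ is no larger than the bound just obtained; equivalently, that $(c\lceil\log n\rceil)^{2}\geq c\log n$. Since $\lceil\log n\rceil\geq \log n\geq 1$ and $c\geq 1$, we have $(c\lceil\log n\rceil)^{2}=c^{2}\lceil\log n\rceil^{2}\geq c\log n$, so
$$1-\frac{(c\lceil\log n\rceil)^{2}}{n^{2}}\;\leq\;1-\frac{c\log n}{n^{2}}\;\leq\;\left(1-\frac{1}{n^{2}}\right)^{c\log n},$$
which is precisely the claim.

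I do not expect any genuine obstacle. The only point worth a moment's care is the side condition $c\log n\geq 1$ for the real-exponent Bernoulli, and that is secured by $n\geq 2$ together with $c\geq 1$. If one wished to sidestep the base-$2$ convention entirely, the same chain of inequalities goes through after replacing Bernoulli by the elementary observation that for $0<r<1$ and $0<y<1$ one has $(1-y)^{r}\geq 1-y$, reducing the problem to the trivial bound $1-1/n^{2}\geq 1-(c\lceil\log n\rceil)^{2}/n^{2}$.
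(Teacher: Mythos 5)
Your proof is correct and takes a genuinely different, and considerably shorter, route than the paper's. The paper avoids any analytic inequality: it rewrites $(1-1/n^2)^{c\log n}$ as a ratio of powers, replaces the exponent $c\log n$ by the (larger) integer $c\lceil\log n\rceil$ to make the binomial theorem applicable, expands $(n^2-1)^{c\lceil\log n\rceil}$, shows the terms of the alternating tail are decreasing in absolute value, and bounds the tail crudely by $(c\lceil\log n\rceil)^2(n^2)^{c\lceil\log n\rceil - 1}$; this is where the squared numerator in the statement comes from. You instead invoke the real-exponent form of Bernoulli's inequality in one line, obtaining the tighter bound $1 - (c\log n)/n^2$, and then observe that the paper's stated bound is weaker because $(c\lceil\log n\rceil)^2 \geq c\log n$ when $c\geq 1$ and $\lceil\log n\rceil\geq 1$, both of which follow from $n>c\geq 1$. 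The trade-off is that the paper's argument is self-contained (just the binomial theorem and term-by-term comparison, no appeal to real-exponent Bernoulli or convexity), whereas yours is a few lines at the price of citing a standard analytic fact. Your proof also makes transparent why the bound in the lemma is not tight, which the paper's presentation obscures. Your fallback for the case $0<c\log n<1$ is also fine: there $(1-1/n^2)^{c\log n}\geq 1-1/n^2 \geq 1-(c\lceil\log n\rceil)^2/n^2$ since $c\lceil\log n\rceil\geq 1$.
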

\begin{proof}
% We begin by proving the first inequality. 
Using simple term manipulations, we obtain:
  \begin{equation}\label{eqn:prob-bound-1}
    \left ( 1-\frac{1}{n^2} \right)^{c\log n} = \left ( \frac{n^2-1}{n^2} \right)^{c\log n} =  \frac{(n^2-1)^{c\log n}}{(n^2)^{c\log n}}
  \end{equation}
  Since $\frac{n^2-1}{n^2}<1$, it follows that:
  \begin{equation}
  \label{eqn:int}
\frac{(n^2-1)^{c\lceil \log n \rceil}}{(n^2)^{c\lceil \log n \rceil}} \leq   \frac{(n^2-1)^{c\log n}}{(n^2)^{c\log n}} 
  \end{equation}
By using the binomial formula, we obtain:
  \begin{eqnarray}
    (n^2-1)^{c\ceil{\log n}} & = & \sum_{i=0}^{c\ceil{\log n}}\binom{c\ceil{\log n}}{i}(n^2)^{c\ceil{\log n}-i}(-1)^i \nonumber \\
    & = & n^{2c\ceil{\log n}}+\sum_{i=1}^{c\ceil{\log n}}\binom{c\ceil{\log n}}{i}(n^2)^{c\ceil{\log n}-i}(-1)^i \nonumber \\
    & = & n^{2c\ceil{\log n}}-\sum_{i=1}^{c\ceil{\log n}}-\binom{c\ceil{\log n}}{i}(n^2)^{c\ceil{\log n}-i}(-1)^i \nonumber
    \label{eqn:prob-bound-2}
  \end{eqnarray}
  To obtain an upper bound on $\sum_{i=1}^{c\ceil{\log n}}-\binom{c \ceil{\log
    n}}{i}(n^2)^{c\ceil{\log n}-i}(-1)^i$, we show next that the absolute values
  of the terms in the sum are decreasing with increasing $i$.
  \begin{eqnarray*}
    |\binom{c\ceil{\log n}}{i}(n^2)^{c\ceil{\log n}-i}(-1)^i| & = & \binom{c\ceil{\log n}}{i}(n^2)^{c\ceil{\log n}-i}\\
    & \geq & \frac{c\ceil{\log n}}{n^2}\binom{c\ceil{\log n}}{i}(n^2)^{c\ceil{\log n}-i}\\
    & \geq & c\ceil{\log n}\binom{c\ceil{\log n}}{i}(n^2)^{c\ceil{\log n}-(i+1)}\\
        & \geq & \binom{c\ceil{\log n}}{i+1}(n^2)^{c\ceil{\log n}-(i+1)}
  \end{eqnarray*}
  The first inequality above holds because $n>c$.
  Hence, we obtain.
      \begin{eqnarray*}
    \sum_{i=1}^{c\ceil{\log n}}-\binom{c\ceil{\log n}}{i}(n^2)^{c\ceil{\log n}-i}(-1)^i & \leq &
    \sum_{i=1}^{c\ceil{\log n}}\binom{c\ceil{\log n}}{i}(n^2)^{c\ceil{\log n}-i} \\
     & \leq & (c \ceil{\log n})\cdot (c \ceil{\log n}) (n^2)^{c\ceil{\log n}-1} \\    
     & = & (c \ceil{\log n})^2 (n^2)^{c\ceil{\log n}-1}
  \end{eqnarray*}
    Putting the above back into Equation~\ref{eqn:prob-bound-1} and
  using Inequality~\ref{eqn:int} together with the expressions above, we obtain:
  \begin{eqnarray*}\label{eqn:prob-bound-2}
    \left ( 1-\frac{1}{n^2} \right)^{c\log n} & \geq & 
      \frac{(n^2-1)^{c\ceil{\log n}}}{(n^2)^{c\ceil{\log n}}} \\
    & = & \frac{n^{2c\ceil{\log n}}-\sum_{i=1}^{c\ceil{\log n}}-\binom{c\ceil{\log
          n}}{i}(n^2)^{c\ceil{\log n}-i}(-1)^i}{(n^2)^{c\ceil{\log n}}} \\
    & \geq & \frac{n^{2c\ceil{\log n}}-(c \ceil{\log n})^2 (n^2)^{c\ceil{\log n}-1}}{(n^2)^{c\ceil{\log n}}} \\
    & \geq & 1-\frac{(c\ceil{\log n})^2}{n^2}
  \end{eqnarray*}
The above concludes the proof of the lemma.
\qed \end{proof}

The function $g$ we deal with is always a positive real-valued function, defined on the set of positive integers.
For the cases we consider, the function always takes a value greater than 1. Unless otherwise mentioned,  $g(n)$ is in $\bigoh(\polylog{n})$.  Whenever $g$ is part of the problem definition, the target set of the function $f$ is the set of integers  $\{\lceil n/g(n) \rceil,\ldots,n-1 \}$. Consequently, we have the following fact.
\begin{Fact}
Let $G$ be a graph and let $f(v)\geq n/g(n)$ for each $v$ in $V(G)$. If $H$ is an $f$-factor of $G$,  then the number of components in $H$ is at most $\ceil{g(n)}-1$.
\label{fact:component-count}
\end{Fact}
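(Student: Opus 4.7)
The plan is to prove the bound by a direct counting argument that combines the per-vertex degree lower bound with the fact that in any component of $H$, every vertex's neighbors in $H$ must lie inside that component.

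First I would observe that if $C$ is the vertex set of any component of the $f$-factor $H$, then for every $v\in C$ all $d_H(v)=f(v)$ neighbors of $v$ in $H$ already belong to $C$. Together with $v$ itself this forces
\[
|C|\;\geq\; f(v)+1\;\geq\; \frac{n}{g(n)}+1.
\]
Next, since the components of $H$ partition $V(G)$, if $k$ denotes the number of components of $H$ then summing the lower bound above over the components yields
\[
n \;=\; \sum_{C}|C| \;\geq\; k\left(\frac{n}{g(n)}+1\right),
\]
so
\[
k \;\leq\; \frac{n}{\frac{n}{g(n)}+1} \;=\; \frac{n\,g(n)}{n+g(n)} \;<\; g(n),
\]
where the strict inequality uses $g(n)>0$.

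Finally I would convert the strict inequality $k<g(n)$ into the stated ceiling bound. If $g(n)$ is an integer, then $k<g(n)$ gives $k\leq g(n)-1=\lceil g(n)\rceil-1$; if $g(n)$ is not an integer, then $k<g(n)$ gives $k\leq \lfloor g(n)\rfloor=\lceil g(n)\rceil-1$. In either case $k\leq \lceil g(n)\rceil-1$, as required. There is no real obstacle here: the only thing to be slightly careful about is the final integrality step, since naively one might be tempted to write $\lfloor g(n)\rfloor$ and miss that the ceiling form is what is claimed; the case split above handles both possibilities uniformly.
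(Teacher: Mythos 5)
Your proof is correct, and it is the natural argument one would give: the paper states this as a ``Fact'' with no proof supplied, implicitly leaving exactly this counting argument (each component has at least $n/g(n)+1$ vertices, so there can be fewer than $g(n)$ of them) to the reader. Your handling of the final integrality step, splitting on whether $g(n)$ is an integer, is a nice bit of care that closes the only place where one might be sloppy.
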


%We use two crucial subroutines from the literature-
% {\bf Tutte's-Reduction($G$,$f$)} is a subroutine which outputs an $f$-factor of $G$ (if one exists) using the reduction in \cite[example 3.3.12]{DW00}.  
%% {\bf Modified-Tutte's-Reduction($G$,$f$)} is an extension of Tutte's-Reduction($G$,$f$), which computes a minimum weighted $f$-factor of the input weighted graph $G$ by reducing it to the problem of finding a minimum weighted perfect matching \cite{EJ65,JE65}. 
%We assume that  the above subroutine returns an  empty graph if it fails to compute an $f$-factor.
%To start with, we discuss properties of $f$-factors which will be
%used extensively in the algorithm.  We start by computing a minimum
%weighted $f$-factor $H$ using the modified Tutte's
%reduction. Further, depending on the number of components and by
%exploiting the triangle inequality \cite{BG:2006} we compute a
%solution satisfying the approximation bounds.

\subsection{Colored Graphs, (Minimal) Alternating Circuits, and $f$-Factors}

%Recall the definition of the \textsc{Connected $g$-Bounded $f$-Factor} problem.
%Given an instance $(G,f)$ of \textsc{Connected $g$-Bounded $f$-Factor}, a subgraph $H$ of $G$ is an \emph{$f$-factor} if $d_H(v)=f(v)$ for each $v\in V(G)$; sometimes we also use the term $f$-factor to refer to $E(H)$. 

A graph $G$ is colored if each edge in $G$ is assigned a color from the set $\{red,blue\}$.
 In a colored graph $G$, we use $R$ and $B$ to denote spanning subgraphs of $G$ whose edge sets are the set of red edges ($E(R)$) and blue edges ($E(B)$) respectively. We use this coloring in our algorithm to distinguish between edge sets of two distinct $f$-factors of the same graph $G$. A crucial computational step in our algorithms is to consider the symmetric difference between edge sets of two distinct $f$-factors and perform a sequence of edge exchanges preserving the degree of each vertex. The following definition is  used extensively in our algorithms.
 \begin{definition}
   A colored graph $A$ is  an \textbf{alternating circuit} if there
   exists an Eulerian circuit in $A$, where each pair of consecutive
   edges are of different colors.
\end{definition}

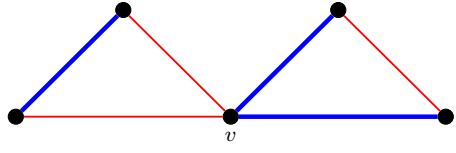
\begin{figure}
  \centering
  \begin{tikzpicture}
      \tikzstyle{every node}=[]
      \tikzstyle{gn}=[circle,fill=black,inner sep=2pt,draw=black, node distance=2cm]
      \tikzstyle{every edge}=[draw, line width=2pt]
      \tikzstyle{re}=[draw, line width=0.7pt, red]
      \tikzstyle{be}=[draw, line width=1.7pt, blue]
      \draw
      node[gn] (c11) {}
      node[gn, below left of=c11] (c12) {}
      node[gn, below right of=c11, label=below:$v$] (v) {}

      node[gn, above right of=v] (c21) {}
      node[gn, below right of=c21] (c22) {}
      ;
      \draw
      (c11) edge[re] (v)
      (v) edge[re] (c12)
      (c11) edge[be] (c12)

      (c21) edge[be] (v)
      (v) edge[be] (c22)
      (c21) edge[re] (c22)
      ;
    \end{tikzpicture}
    \label{fig:alt-cir}
    \caption{The alternating circuit $A_{11}$ illustrating that not every
      alternating circuit can be decomposed into edge-disjoint
      alternating circuits that are cycles. Here the blue edges are marked with thick lines and the red
      edges are thin.}
\end{figure}

Clearly, an alternating circuit has an even number of edges and is
connected. Further, $d_{R}(v)=d_{B}(v)$ for each $v$ in $A$. A
\textbf{minimal alternating circuit} $A$ is an alternating circuit
where each vertex $v$ in $A$ has at most two  red edges incident to
$v$. Note that alternating circuits, as opposed to Eulerian circuits,
cannot always be decomposed into edge-disjoint alternating circuits
that are cycles. As
an example, consider for each $r_1,r_2\geq 1$ the alternating circuit $A_{r_1r_2}$ which 
consists of two (edge-disjoint) cycles of length $2r_1+1$ and length $2r_2+1$, respectively, that share
one common vertex $v$. Let the coloring of the edges of $A_{r_1r_2}$ be as 
illustrated in Figure~\ref{fig:alt-cir}. Informally, the edges of
%  the following alternating circuit $A_r$ for every
%$r\geq 1$: $A_r$ consists of two (edge-disjoint) cycles of length
%$2r+1$ each that share
%one common vertex $v$, and the coloring of the edges of $A_r$ is
%illustrated in Figure~\ref{fig:alt-cir}; informally, the edges of
both cycles are colored in an alternating manner along each cycle so
that the edges of the first cycle incident to $v$ have the same color,
which is distinct from the color given to the edges incident with $v$
in the second cycle.
Every alternating circuit in $A_{r_1r_2}$ contains all edges of $A_{r_1r_2}$
and cannot be decomposed further into smaller alternating circuits.

% It may be interesting to note that one can construct a class of  minimal alternating circuits containing $n-1$ vertices, $n$ edges, and two edge disjoint odd cycles of length $n/2$ each, where $n=4r+6$ for some integer $r\geq 0$, under two possible colorings for each $n$. This is a class of alternating circuits that are neither cycles nor further decomposable into minimal alternating circuits.  
%We emphasize that an empty graph having a single vertex is an alternating circuit.

%We will also use the following lemma.
\begin{Fact}\label{fac:enforcedg}
 Let $S$ be a subset of $E(G)$. An $f$-factor $H$ of $G$ containing all the edges in
 $S$, if one exists, can be computed in polynomial time. 
\end{Fact}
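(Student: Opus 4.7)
The plan is to reduce the problem to the classical (unrestricted) $f$-factor decision/search problem, which is solvable in polynomial time via Tutte's reduction to perfect matching. First I would check a simple feasibility condition: for every $v \in V(G)$ compute $d_S(v)$, the number of edges of $S$ incident to $v$. If $d_S(v) > f(v)$ for some $v$, then no $f$-factor of $G$ containing $S$ can exist, and we return ``no''.

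Assuming feasibility holds, I would construct the instance $(G', f')$ where $G' := (V(G), E(G) \setminus S)$ and $f'(v) := f(v) - d_S(v)$ for every $v$. I would then invoke Tutte's polynomial time algorithm on $(G', f')$ to either find an $f'$-factor $H'$ of $G'$ or determine that none exists. If $H'$ exists, the algorithm returns $H := H' \cup S$ (viewed as a spanning subgraph of $G$); otherwise it reports that no $f$-factor of $G$ containing $S$ exists.

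Correctness is a direct two-way correspondence. For the forward direction, if $H$ is an $f$-factor of $G$ containing every edge of $S$, then $H \setminus S$ uses only edges of $G'$ and at each vertex $v$ has degree $f(v) - d_S(v) = f'(v)$, so it is an $f'$-factor of $G'$. Conversely, if $H'$ is any $f'$-factor of $G'$, then $H' \cup S$ is well defined (since $E(G')$ and $S$ are disjoint by construction), uses only edges of $G$, contains $S$, and at each $v$ has degree $f'(v) + d_S(v) = f(v)$; thus it is an $f$-factor of $G$ containing $S$.

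There is no real obstacle here: the only delicate point is the bookkeeping on degrees, which is why we explicitly delete $S$ from $G$ before adjusting $f$ (so that the edges of $S$ are neither double-counted nor reintroduced by the call to Tutte's algorithm). All steps above run in polynomial time, giving the claimed bound.
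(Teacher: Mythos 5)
Your proposal is correct and follows essentially the same route as the paper: delete $S$ from $G$, decrement $f$ by the $S$-degrees to form $f'$, solve the resulting ordinary $f'$-factor instance by Tutte's reduction, and add $S$ back to the solution. The paper states this reduction in one sentence without the feasibility check or the explicit two-way verification, but the underlying argument is the same.
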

The fact follows from the observation that a candidate for $H\setminus S$ can be computed from an
$f'$-factor  $H'$ of the spanning subgraph $G\setminus S$, where $f'(v)=f(v)-d_{G[S]}(v)$ for each $v\in V(G)$.
% It is easy to see that one can compute an $f'$-factor $H'$ of the
% spanning subgraph $G'$ of $G$ having edge set $E(G)\setminus S$, and
% plug in the edges in $S$ to $H'$ to get $H$, where
% $f'(v)=f(v)-d_{G[S]}(v)$ for each $v\in V(G)$.
Note that given a
partition $\mathcal Q=\{X,V\setminus X\}$ of $V(G)$, one can check for
the existence of an $f$-factor connecting $\mathcal Q$ in polynomial
time by iterating over each edge $e$ in the cut $[X,V\setminus X]_G$
and applying Fact \ref{fac:enforcedg} by setting $S=\{e\}$. Further,
this can be extended for any arbitrary partition $\mathcal Q$ of
constant size, or when we are provided with a spanning tree $S$ of
$G/\mathcal Q$ that is guaranteed to contain in some $f$-factor of
$G$.
\begin{definition}
Let $M$ and $H$ be two subgraphs of $G$ where each component in $M$ is Eulerian. Let $\textsf c: E(M)\rightarrow \{red,blue\}$ be the  unique  coloring function which colors the edges in $E(M)\cap E(H)$ with color red and those in $E(M)\setminus E(H)$ with color blue.  The subgraph $M$ is called a \textbf{switch} %(see Figure \ref{}) 
on $H$ if every component of the colored graph obtained by applying $\textsf c$ on $M$, is an alternating circuit.
\end{definition}

\begin{definition}
For a subgraph $M$ which is a switch on another subgraph $H$ of $G$,   we define \textbf{Switching($H$,$M$)} to be the subgraph $M\bigtriangleup H$ of $G$.
\end{definition}
We use switching as an operator where the role of the second operand is to bring in specific edges to the first, retaining the degrees of vertices by omitting some less significant edges. One can easily infer that if the result of applying the coloring function $\textsf c$ to $M$ is a minimal alternating circuit, then the switching operation replaces at most two edges incident on each vertex $v$ in $H$. 

\begin{Fact}
  Let $A$ be an alternating circuit and $S$ be a subset of edges in $A$.
  There is a polynomial time algorithm that outputs a set $\mathcal M$ of edge
  disjoint minimal alternating circuits in $A$, each of which has at
  least one edge from $S$ and such that every edge in $S$ is contained
  in some minimal alternating circuit in $\mathcal M$.
  \label{fac:min-ac-set}
\end{Fact}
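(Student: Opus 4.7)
My plan is to handle the two output requirements of the fact — covering every edge of $S$ while keeping the returned minimal alternating circuits edge-disjoint — by first decomposing the whole edge set of $A$ into edge-disjoint minimal alternating circuits and then returning exactly those pieces that intersect $S$. Since the decomposition is a partition of $E(A)$ and $S\subseteq E(A)$, edge-disjointness, per-piece coverage of $S$, and global coverage of $S$ all follow for free from the filtering step.

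The core task is therefore a polynomial-time procedure that partitions any alternating circuit into minimal ones, and I would prove this by induction on $|E(A)|$. If $A$ is already minimal, the decomposition is $\{A\}$. Otherwise, pick a vertex $v$ with $d_R(v)=d_B(v)\geq 3$ and compute an alternating Eulerian circuit $C=e_1,e_2,\ldots,e_{2m}$ of $A$ via Hierholzer's algorithm. Each visit of $C$ to $v$ has one of only two transition types, namely \emph{red-in/blue-out} or \emph{blue-in/red-out}. Since $v$ is visited at least three times, pigeonhole supplies two visits at positions $p_a<p_b$ of the same type. I would then split $C$ into the two closed walks $A'=e_{p_a+1},\ldots,e_{p_b}$ and $A\setminus E(A')=e_{p_b+1},\ldots,e_{2m},e_1,\ldots,e_{p_a}$. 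Inside either walk, consecutive edges alternate in colour by inheritance from $C$; the only new adjacencies are the two wrap-around pairs $(e_{p_b},e_{p_a+1})$ and $(e_{p_a},e_{p_b+1})$, and a short case analysis shows that both pairs are alternating precisely because the two chosen visits share their transition type. Hence $A'$ and $A\setminus E(A')$ are alternating circuits, each strictly smaller than $A$, and recursing on both halves yields the claimed decomposition.

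For the running time I would note that each split step costs linear time (one Hierholzer call plus a linear scan identifying matching visit types at a single vertex) and strictly decreases the maximum edge count among the current pieces, so at most $|E(A)|$ split steps occur in total and the overall procedure runs in polynomial time. Once the decomposition $M_1,\ldots,M_t$ is in hand, I would output $\mathcal{M}=\SB M_i \SM E(M_i)\cap S\neq\emptyset \SE$.

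The main obstacle, to my mind, is the colour bookkeeping for the wrap-around adjacencies. It is tempting to split the Eulerian circuit at an arbitrary pair of visits to $v$, but a generic split only preserves balanced red/blue degrees and can easily produce walks that are not alternating at their wrap-around. Forcing the two chosen visits to share their transition type — which is where the ``two possible types, at least three visits'' pigeonhole step is used — is the subtle combinatorial observation that makes both halves of the split genuine alternating circuits simultaneously, and it is the crux of the whole argument.
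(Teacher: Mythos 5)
Your proof is correct, and the decompose-then-filter strategy (partition $E(A)$ into edge-disjoint minimal alternating circuits, then return the pieces that meet $S$) is the natural approach; the paper itself only cites an external reference here without spelling out the argument, so there is nothing concrete to contrast against. One small imprecision worth flagging: Hierholzer's algorithm, as usually stated, has no awareness of edge colours and will not in general output an \emph{alternating} Eulerian circuit. The standard fix is to pair each red edge with a blue edge arbitrarily at every vertex (possible since $d_R(v)=d_B(v)$), trace the induced transitions to partition $E(A)$ into alternating closed walks, and then --- exploiting connectivity of $A$ --- repeatedly merge two walks meeting at a common vertex $u$ by swapping one transition of each at $u$; this produces an alternating Eulerian circuit in polynomial time. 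With that substitution, the rest of your argument, including the pigeonhole on transition types at a vertex of red-degree at least $3$ and the wrap-around colour check for the two subwalks, is sound and goes through unchanged.
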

It is not difficult to see the proof of Fact \ref{fac:min-ac-set}. A
skeptical reader can refer ~\cite[Lemma 6]{NR15}. Note that given $S$
and $A$, $M$ is not unique. 

%Moreover, given $S$, $A$, and an $f$-factor $H'$  that does  not contain edges in $S$, and where each \mac{} $M$ in $\mathcal M$ is a switch on $H'$, one can compute an \fa{} $H$ containing all the edges in $S$ by using the switching operator.
\begin{Fact}
Let $H$ be an $f$-factor of $G$ and let $\mathcal{Q}$ be a partitioning of the vertex set of $G$. If $H/\mathcal Q$ is connected and $H[Q]$ is connected for each $Q$ in $\mathcal Q$, then $H$ is a connected $f$-factor.
\label{fac:connect-parts}
\end{Fact}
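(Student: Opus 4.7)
The plan is to verify that $H$ is connected, since $H$ is already an $f$-factor by hypothesis. To this end, I would pick an arbitrary pair of vertices $u,v\in V(G)$ and explicitly build a $u$-$v$ walk in $H$, which suffices to conclude connectedness.

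First I would locate $u$ and $v$ in the partition, namely let $Q_u,Q_v\in\mathcal Q$ be the parts containing $u$ and $v$, respectively. The easy case is $Q_u=Q_v$: the assumption that $H[Q_u]$ is connected immediately gives a $u$-$v$ path within $H[Q_u]\subseteq H$, finishing this case.

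For the case $Q_u\neq Q_v$, I would exploit the connectedness of $H/\mathcal Q$ to obtain a sequence of parts $Q_u=Q_0,Q_1,\dots,Q_k=Q_v$ with $\{Q_{i-1},Q_i\}$ an edge of $H/\mathcal Q$ for every $i$. By the definition of the quotient graph, each such edge lifts to an edge $\{x_i,y_i\}\in E(H)$ with $x_i\in Q_{i-1}$ and $y_i\in Q_i$. I would then stitch these crossing edges together using in-part paths: inside $H[Q_0]$ take a path from $u$ to $x_1$, inside $H[Q_i]$ (for $0<i<k$) take a path from $y_i$ to $x_{i+1}$, and inside $H[Q_k]$ take a path from $y_k$ to $v$. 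Each of these in-part paths exists because $H[Q_i]$ is connected by hypothesis. Concatenating these segments with the crossing edges $\{x_i,y_i\}$ produces a walk from $u$ to $v$ entirely inside $H$, and hence a $u$-$v$ path in $H$.

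There is no real obstacle here; the proof is a direct unpacking of the definitions of the quotient graph and of connectedness. The only point that deserves a sentence in the writeup is to note explicitly the correspondence between edges of $H/\mathcal Q$ and edges of $H$ crossing between parts, which is the only place where the construction of $G/\mathcal Q$ is actually invoked.
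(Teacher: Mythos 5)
Your proof is correct and is the natural unwinding of the definitions: handle the case $Q_u=Q_v$ directly via connectedness of $H[Q_u]$, and otherwise lift a path in $H/\mathcal Q$ to a walk in $H$ by interleaving crossing edges with in-part paths. The paper states this as a ``Fact'' and gives no proof, evidently regarding it as self-evident, so there is no alternative argument to compare against; yours is the standard one and fills that gap cleanly.
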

Fact \ref{fac:connect-parts} implies that if $H$ is not a connected $f$-factor and $H/\mathcal Q$ is connected then there exists some $Q\in \mathcal Q$ such that $H[Q]$ is not connected.  

\section{A Generic Algorithm for Finding Connected $g$-Bounded $f$-Factors}
% for the existence of Connected $f$-factors}
\label{sec:overview}

Our goal in this section is to present a generic algorithm for
\textsc{Connected $g$-Bounded $f$-Factor}. In particular, we in a certain sense reduce the question of solving \textsc{Connected $g$-Bounded $f$-Factor} to solving a related problem which we call \textsc{Partition Connector}. This can be viewed as a relaxed version of the original problem, since instead of a connected $f$-factor it merely asks for an $f$-factor which connects a specified partitioning of the vertex set. A formal definition is provided below.

\begin{center}
\vspace{-0.2cm}
  \begin{boxedminipage}[t]{0.99\textwidth}
  \begin{quote}
  \textsc{Partition Connector}\\ \nopagebreak
  \emph{Instance}: An $n$-vertex graph $G$, $f:V(G)\rightarrow \mathbb{N}$, and a partition $\mathcal Q$ of $V(G)$.\\ \nopagebreak
  \emph{Task}: Find an $f$-factor of $G$ that connects $\mathcal Q$.
\end{quote}
\end{boxedminipage}
\end{center}

The algorithms for solving \textsc{Partition Connector} are presented in the later parts of this article.  Specifically, a deterministic algorithm that runs in
quasi-polynomial time whenever $g(n) = \bigoh(\polylog{n})$ (Section~\ref{sec:det-alg}) and a
randomized polynomial time algorithm for the case when
$g(n)= \bigoh(\log n)$ (Section~\ref{sec:rand-alg}).

The majority of this section is devoted to proving the key Theorem~\ref{thm:mainalg} stated below, which establishes the link between \textsc{Partition Connector} and \textsc{Connected $g$-Bounded $f$-Factor}.

  \begin{theorem}\label{thm:mainalg}
  \begin{itemize}
  \item[$(a)$] Let $g(n) \in \bigoh(\polylog{n})$. If there is a deterministic algorithm running in time
    $\bigoh^{*}(n^{2(|\mathcal{Q}|-1)})$\footnote{We use
      $\bigoh^{*}(f(n))$ to denote $\bigoh(f(n)\cdot n^{\bigoh(1)})$,
      i.e., $\bigoh^*$ omits polynomial factors, for any function $f(n)$.} for \textsc{Partition Connector},
    then there is a deterministic quasi-polynomial time algorithm for
    {\cgfa} with running time $\bigoh^{*}(n^{2g(n)})$.
  \item[$(b)$] Let $g(n)\in \bigoh(\log n)$. If there exists a randomized algorithm running in time
    $\bigoh^{*}(2^{|\mathcal{Q}|})$ with error probability
    $\bigoh(|\mathcal{Q}|^2/n^2)$ for \textsc{Partition Connector}, then there
    exists a randomized polynomial time algorithm for \cgfa{} that
     has a
    constant error probability. %with error probability
                               %$\bigoh((g(n)^2/n^2)^{\log n})$.
  \end{itemize}
\end{theorem}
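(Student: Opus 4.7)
The plan is to exhibit a generic reduction from \cgfa{} to \textsc{Partition Connector} that works in both regimes, together with appropriate running-time and error analyses for parts $(a)$ and $(b)$.

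\textbf{Reduction.} First, invoke Tutte's polynomial-time algorithm to find some $f$-factor $H$ of $G$ (if none exists, answer "no"). If $H$ is already connected we are done, so assume otherwise and let $\mathcal{Q}$ be the partition of $V(G)$ into the vertex sets of the components of $H$; by Fact~\ref{fact:component-count}, $|\mathcal{Q}|\leq \lceil g(n)\rceil-1$. Call the \textsc{Partition Connector} oracle on $(G,f,\mathcal{Q})$; note that if $G$ has any connected $f$-factor $H^{*}$, then $H^{*}$ itself connects $\mathcal{Q}$, so the oracle will return some $f$-factor $H'$ with $H'/\mathcal{Q}$ connected. Form $M=H\bigtriangleup H'$ and color its edges with the map $\textsf{c}$ of the preliminaries (red for $E(M)\cap E(H)$, blue otherwise). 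Because $d_{H}(v)=d_{H'}(v)=f(v)$, the red and blue degree of each vertex in $M$ agree, so every component of $M$ is an alternating circuit. Let $S$ be the set of edges of $M$ that cross between distinct parts of $\mathcal{Q}$; this set is non-empty because $H'/\mathcal{Q}$ is connected while $H/\mathcal{Q}$ has no edges. Apply Fact~\ref{fac:min-ac-set} to extract edge-disjoint minimal alternating circuits $M_{1},\ldots,M_{t}$ covering $S$, and switch $H$ against $\bigcup_{i}M_{i}$ to obtain $H''=H\bigtriangleup(\bigcup_{i}M_{i})$. Switching preserves degrees, so $H''$ is still an $f$-factor, and because it contains an edge of $S$ its component partition is strictly coarser than $\mathcal{Q}$. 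Replace $H$ by $H''$ and iterate until $H$ is connected.

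\textbf{Running time and error analysis.} Since each outer iteration strictly decreases the number of components of the current $f$-factor, and the component count is bounded by $\lceil g(n)\rceil-1$ throughout (Fact~\ref{fact:component-count}), the loop terminates after $\bigoh(g(n))$ iterations. For part $(a)$, each oracle call takes $\bigoh^{*}(n^{2(|\mathcal{Q}|-1)})\leq \bigoh^{*}(n^{2(g(n)-1)})$, and all remaining operations (Tutte's reduction, symmetric difference, Eulerian decomposition, switching) cost polynomially, so the total running time is $\bigoh^{*}(g(n)\cdot n^{2(g(n)-1)})=\bigoh^{*}(n^{2g(n)})$. For part $(b)$, the hypothesis $g(n)\in\bigoh(\log n)$ gives $2^{|\mathcal{Q}|}=n^{\bigoh(1)}$, so each oracle call runs in polynomial time; a union bound over the $\bigoh(g(n))$ iterations yields a total error of $\bigoh(g(n)\cdot g(n)^{2}/n^{2})=\bigoh((\log n)^{3}/n^{2})=o(1)$, well below any fixed constant. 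The one-sided nature of the error is preserved because a returned $f$-factor can be verified in polynomial time.

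\textbf{Main obstacle.} The subtlest step is justifying that switching actually drops the component count rather than fragmenting some existing part. Adding a cross-cut edge trivially merges two parts, but the removed edges (those in $H\setminus H'$ lying inside some $M_{i}$) could, in principle, disconnect a part of $\mathcal{Q}$ inside $H''$. The plan is to exploit the minimality of each $M_{i}$: by definition, a minimal alternating circuit has at most two red edges per vertex, so switching removes at most two edges of $H$ at each vertex of $V(M_{i})$. Combined with the fact that $M_{i}$ is itself connected via alternating red--blue transitions, tracing this alternating walk shows that every vertex of $V(M_{i})$ lies in a single component of $H\bigtriangleup M_{i}$. Since each $M_{i}$ covers at least one edge of $S$, this single component merges two (or more) original parts of $\mathcal{Q}$, forcing the strict decrease. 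Formalizing this connectivity-preservation argument is the technical heart of the reduction.
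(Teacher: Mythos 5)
Your overall plan—reduce to \textsc{Partition Connector}, switch along minimal alternating circuits to stay close to the current $f$-factor, and iterate until connected—shares the main ingredients with the paper's proof, but the termination argument is fundamentally different from the paper's and, as written, contains a genuine gap.

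The central unproved claim is that each iteration \emph{strictly decreases the number of components} of the current $f$-factor. This is not true in general: after the switch $H''=H\bigtriangleup\bigl(\bigcup_i M_i\bigr)$, the cross edges of $S$ merge some parts of $\mathcal{Q}$, but the deleted red edges can simultaneously \emph{fragment} parts of $\mathcal{Q}$ into several new components, so the net count can stay the same or increase (Fact~\ref{fact:component-count} only caps it at $\lceil g(n)\rceil -1$, it does not impose monotonicity). Your proposed fix in the ``Main obstacle'' paragraph does not close this gap, for two reasons. First, the claim that ``every vertex of $V(M_i)$ lies in a single component of $H\bigtriangleup M_i$'' does not follow from minimality: the blue edges of an alternating circuit form a (near-)matching, not a connected spanning subgraph of $V(M_i)$, so the alternating walk itself does not connect $V(M_i)$ inside $H''$. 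Second, and more importantly, even if $V(M_i)$ were connected after switching, this says nothing about the vertices of a part $Q_j$ that lie \emph{outside} $V(M_i)$: removing up to two $H$-edges per vertex of $V(M_i)\cap Q_j$ can disconnect $Q_j\setminus V(M_i)$ from the rest of $Q_j$, creating extra components not touched by your argument.

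The paper sidesteps this issue entirely by never trying to make the component count drop. Instead, it maintains a sequence of \emph{strictly refining} partitions $\mathcal{Q}_0\subset\mathcal{Q}_1\subset\cdots$ (each $\mathcal{Q}_{i+1}$ splits parts of $\mathcal{Q}_i$ into the components of $H_i[Q]$), and uses the switching step (Lemma~\ref{lem:minimal-circuits}) only to guarantee the quantitative invariant (M4): $|N_{H_i}(v)\cap Q|\geq|N_{H_{i-1}}(v)\cap Q|-2(|\mathcal{Q}_i|-1)$. That invariant keeps the degree inside every part large (Lemma~\ref{lem:min-degree-comp}), which forces the parts to be large, which in turn bounds $|\mathcal{Q}_i|$ by $g(n)+1$ (Lemma~\ref{lem:bound-part-size}); since the refinements are strict, the sequence has length at most $g(n)+1$. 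So the true progress measure is a refining, bounded-size partition, not a shrinking component count. To repair your proposal along the paper's lines, you would need to replace the ``component count decreases'' monovariant by this refinement-with-degree-bound argument; the rest of your running-time and error bookkeeping (roughly $g(n)$ iterations, $|\mathcal{Q}|\leq g(n)+1$ per call, union bound over calls) then goes through essentially as you sketched.
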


\subsection{A generic algorithm for \textsc{Connected $g$-Bounded $f$-Factor}}

The starting point of our generic algorithm is the following observation. 
\begin{observation}\label{thm:characterization}
  \label{obs:characterization}
  Let $G$ be an undirected graph  and $f$ be a function $f:V(G)\rightarrow \mathbb N$. The graph $G$ has a connected $f$-factor
  if and only if for each partition $\mathcal Q$ of the vertex set
  $V(G)$, there exists an $f$-factor $H$ of $G$ that connects $\mathcal
  Q$. 
\end{observation}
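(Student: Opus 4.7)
The plan is to prove each direction separately, with both being essentially one-line arguments once set up correctly. For the forward direction, I would observe that if $H$ is a connected $f$-factor of $G$, then $H$ itself already witnesses the right-hand side for every partition $\mathcal{Q}$ of $V(G)$. Indeed, the quotient graph $H/\mathcal{Q}$ is obtained from $H$ by identifying the vertices within each part, an operation that can only preserve connectivity, so $H/\mathcal{Q}$ is connected and $H$ connects $\mathcal{Q}$ as required.

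For the backward direction, the key insight is that it suffices to invoke the hypothesis for a single, carefully chosen partition rather than reason about arbitrary $\mathcal{Q}$. I would pick the finest possible partition $\mathcal{Q}^{*} = \{\,\{v\} : v \in V(G)\,\}$ of $V(G)$ into singletons. By hypothesis there exists an $f$-factor $H$ that connects $\mathcal{Q}^{*}$, meaning $H/\mathcal{Q}^{*}$ is connected. Unwinding the definition of the quotient given in the preliminaries, when every part is a singleton the condition $i \neq j$ is satisfied by every pair of distinct endpoints, so each edge of $H$ contributes an edge of $H/\mathcal{Q}^{*}$ between the corresponding singleton parts, with no loops created and no two edges identified. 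Hence $H/\mathcal{Q}^{*}$ is isomorphic to $H$ itself, and the connectivity of $H/\mathcal{Q}^{*}$ directly translates to the connectivity of $H$, exhibiting a connected $f$-factor of $G$.

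I do not anticipate any real obstacle in formalising this observation; the only thing to notice is that the universal quantifier on $\mathcal{Q}$ in the hypothesis is trivialised by specialising to the singleton partition, which forces any $f$-factor connecting it to be connected on the nose. In particular, no use of Tutte's theorem, alternating circuits, or switching machinery is needed for this observation, making it a clean stepping stone toward Theorem~\ref{thm:mainalg}.
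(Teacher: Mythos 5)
Your proof is correct, and both directions are argued exactly as one would expect; the paper itself gives no explicit proof of this observation, treating it as self-evident, so there is nothing substantive to compare against. Your specialization to the singleton partition for the backward direction is precisely the "unwinding" the authors implicitly rely on.
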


We remark that for the running time analysis for our generic algorithm we assume that we are only dealing with 
instances of \textsc{Connected $g$-Bounded $f$-Factor}, where the number of vertices exceeds $6g(n)^4$. 
As $g(n)$ is in $\bigoh(\polylog{n})$, this
does not reduce the applicability of our algorithms, since there is a
constant $n_0$ such that $n\geq 6g(n)^4$ for every $n\geq n_0$; 
because $g(n)$ is part of the problem description, $n_0$ does not
depend on the input instance. Consequently, we can solve instances of
\textsc{Connected $g$-Bounded $f$-Factor} where $n < n_0$ by brute-force in constant time. We will
therefore assume without loss of generality in the following that $n \geq n_0$ and hence $n\geq
6g(n)^4$. 

Our algorithm constructs a  sequence  
$(H_0,\mathcal Q_0),\dotsc,(H_k,\mathcal Q_k)$ of pairs  which is {\em maximal} (cannot be extended further) satisfying the 
following properties:
\begin{enumerate}
\item[(M1)] Each $\mathcal Q_i, 0 \leq i \leq k $ is a partition of the vertex set $V(G)$, and $\mathcal Q_0 =\{V(G)\}$.
\item[(M2)]  Each $H_i, 0 \leq i \leq k$ is an $f$-factor of $G$, and $H_i$ connects $\mathcal Q_i$. 
\item[(M3)] For each $ 1 \leq i \leq k$, $\mathcal Q_i$ is a
  refinement of $\mathcal Q_{i-1}$ satisfying the following: 
  \begin{enumerate}
  \item[a)] Each part $Y$ in $\mathcal Q_i$ induces a component $H_{i-1}[Y]$ in $H_{i-1}[Q]$, for some $Q$ in $\mathcal Q_{i-1}$. 
  \item[b)] $\mathcal Q_i \neq \mathcal Q_{i-1}$.
  \end{enumerate}
\end{enumerate}

The following lemma links the existence of a connected $f$-factor to
the properties of maximal sequences satisfying (M1)--(M3).
\begin{lemma}
\label{lem:correctness}
  Let $(G,f)$ be an instance of \textsc{Connected $g$-Bounded $f$-Factor} and 
  let $\mathcal S=(H_0,\mathcal{Q}_0),\dotsc,(H_k,\mathcal{Q}_k)$ be a maximal sequence satisfying
  (M1)--(M3). Then, $G$ has a
  connected $f$-factor if and only if $H_k$ is a connected $f$-factor of $G$.
\end{lemma}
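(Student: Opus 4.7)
The plan is to prove both directions, with the forward implication being essentially immediate and the reverse implication proceeding by contradiction against the maximality of $\mathcal{S}$. The key tools will be Fact~\ref{fac:connect-parts} (to locate an internally disconnected part) and Observation~\ref{obs:characterization} (to supply an $f$-factor for any refined partition).

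First, if $H_k$ is a connected $f$-factor of $G$, then $G$ trivially has a connected $f$-factor, namely $H_k$ itself.

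For the converse, suppose $G$ admits a connected $f$-factor, and assume toward a contradiction that $H_k$ is not a connected $f$-factor of $G$. By (M2), $H_k$ is an $f$-factor of $G$ that connects $\mathcal{Q}_k$, so $H_k/\mathcal{Q}_k$ is connected. By Fact~\ref{fac:connect-parts}, the failure of $H_k$ to be connected forces the existence of some part $Q^{*} \in \mathcal{Q}_k$ such that $H_k[Q^{*}]$ is disconnected. Let $C_1,\dotsc,C_t$ (with $t\geq 2$) be the vertex sets of the components of $H_k[Q^{*}]$, and define
\[
\mathcal{Q}_{k+1} := \bigl(\mathcal{Q}_k \setminus \{Q^{*}\}\bigr) \cup \{C_1,\dotsc,C_t\}.
\]
By construction, $\mathcal{Q}_{k+1}$ is a partition of $V(G)$ that refines $\mathcal{Q}_k$ and differs from it (since $t\geq 2$). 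Moreover, each new part $C_j$ induces a component $H_k[C_j]$ of $H_k[Q^{*}]$, while every other part of $\mathcal{Q}_{k+1}$ coincides with a part of $\mathcal{Q}_k$ and hence trivially induces a union of components in some $H_k[Q]$; so condition (M3)(a) holds with respect to $(H_k,\mathcal{Q}_k)$.

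It remains to supply $H_{k+1}$. Since, by assumption, $G$ has a connected $f$-factor, Observation~\ref{obs:characterization} guarantees that for the partition $\mathcal{Q}_{k+1}$ there exists an $f$-factor $H_{k+1}$ of $G$ connecting $\mathcal{Q}_{k+1}$. Then $(H_{k+1},\mathcal{Q}_{k+1})$ satisfies (M1), (M2), and (M3) when appended to $\mathcal{S}$, contradicting the maximality of $\mathcal{S}$. Hence $H_k$ must itself be a connected $f$-factor of $G$, which proves the lemma.

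The only non-routine step is verifying that (M3)(a) is preserved by the splitting construction; this works because we refine exactly one part $Q^{*}$ and we refine it according to the components of $H_k[Q^{*}]$, which is exactly what (M3)(a) demands. Everything else is a direct invocation of earlier facts. \qed
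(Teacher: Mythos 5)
Your proof takes the same overall route as the paper — contradiction against maximality, Fact~\ref{fac:connect-parts} to locate a disconnected part, Observation~\ref{obs:characterization} to supply the next $f$-factor — but there is a genuine gap in how you build $\mathcal{Q}_{k+1}$. You refine only the single part $Q^{*}$ whose induced subgraph $H_k[Q^{*}]$ is disconnected, and you assert that the remaining parts satisfy (M3)(a) because each such part ``trivially induces a union of components in some $H_k[Q]$.'' That is not what (M3)(a) requires: it requires that $H_k[Y]$ be a \emph{single} component of some $H_k[Q]$, for every part $Y\in\mathcal{Q}_{k+1}$. If there is another part $Q'\in\mathcal{Q}_k$, $Q'\neq Q^{*}$, for which $H_k[Q']$ is also disconnected, then the unchanged part $Y=Q'$ violates (M3)(a), and the pair $(H_{k+1},\mathcal{Q}_{k+1})$ is not a valid successor. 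Nothing in (M1)--(M3) rules this out: (M3)(a) constrains $\mathcal{Q}_i$ relative to $H_{i-1}$, not relative to $H_i$, so $H_k$ restricted to the parts of $\mathcal{Q}_k$ need not be connected. Your own closing sentence flags the issue (``a union of components'') but then waves it away.

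The fix is exactly what the paper does: let $\mathcal{Q}_{k+1}$ split \emph{every} part $Q\in\mathcal{Q}_k$ into the vertex sets of the components of $H_k[Q]$ (not just $Q^{*}$). Then (M3)(a) holds for all parts by construction, and $\mathcal{Q}_{k+1}\neq\mathcal{Q}_k$ still follows from the existence of $Q^{*}$. The rest of your argument — invoking Observation~\ref{obs:characterization} for the existence of an $f$-factor connecting $\mathcal{Q}_{k+1}$, and then contradicting maximality — goes through unchanged.
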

\begin{proof}
  % To prove the forward direction, let us suppose that $H_k$ is not a connected $f$-factor. It follows from Fact \ref{fac:connect-parts} and properties (M2) and (M3),  we compute $\mathcal Q_{k+1}$ satisfying (M3). From the premise and Observation \ref{obs:characterization}, we have $H_{k+1}$ violating the maximality of $\mathcal S$. Converse is trivial.
  Towards showing the forward direction of the claim,
  suppose for a contradiction that $H_k$ is not a connected $f$-factor
  of $G$. Because $H_k$ connects $\mathcal{Q}_k$ (Property (M2)), it follows from Fact \ref{fac:connect-parts} that there is some part $Q \in \mathcal{Q}_k$ such that
  $H_k[Q]$ is not connected. Consider the refinement $\mathcal{Q}_{k+1}$ of
  $\mathcal{Q}_k$ ($\mathcal Q_{k+1}\neq \mathcal Q_k$) that splits every part $Q$ in $\mathcal{Q}_{k}$ into the parts
  corresponding to the components of $H_k[Q]$. Further, because $G$ has a connected $f$-factor and Observation \ref{obs:characterization}, we obtain that there exists a connected \fa $H_{k+1}$ that connects any partition $\mathcal Q_{k+1}$. Now, the sequence $\mathcal{S}$
  could be extended by appending the pair $(H_{k+1},\mathcal Q_{k+1})$ to its
  end, a contradiction to our assumption that $\mathcal{S}$ was a
  maximal sequence.
  The reverse direction is trivial.
\qed \end{proof}

We deploy an algorithm that  incrementally computes a maximal sequence $\mathcal S$ satisfying (M1)--(M3) and thereby use the  above lemma to solve the connected $f$-factor problem by testing whether the last $f$-factor in the sequence is connected. This   involves computing $\mathcal Q_{i+1}$ from $H_i$ and $\mathcal Q_i$ followed by the computation of $H_{i+1}$ connecting $\mathcal Q_{i+1}$. However, if the number of parts in the last partition $\mathcal{Q}_k$ is allowed to grow to $n$, then such an
algorithm would eventually have to solve the connected $f$-factor problem to compute an $H_k$ satisfying (M2). Our algorithm establishes a lower bound on the size of any part $Q\in \mathcal Q_i$ and hence an upper bound on the number of parts in any partition $\mathcal{Q}_i$ in  $\mathcal S$ which in turn bounds the length of the sequence $\mathcal S$. 

The following lemma shows that given the recently computed pair
$(H,\mathcal Q)=(H_i,\mathcal Q_i)$ in the sequence, the partition
$\mathcal Q'=\mathcal Q_{i+1}$ and a candidate $H''$ for $H_{i+1}$,
one can compute a {\em better} candidate $H'$ for $H_{i+1}$ which is
{\em closer to} $H_i$ in the sense that {\em most of} the neighbors
of a vertex $v$ in $H_i$ are retained as it is, in $H'$. The properties of
$H'$ then allow us to lower-bound the size of each part  in $\mathcal Q_{i+2}$ as a function of the size of the smallest part in $\mathcal Q_{i+1}$.

% Our
% algorithm uses Lemma \ref{lem:minimal-circuits} over successions to
% compute $H_{i+1}=H'$ and thereby lower bounds the size of each part  in $\mathcal Q_{i+2}$ as a function of the size of the smallest part in $\mathcal Q_{i+1}$.

%The following lemma, which shows a lower bound 
%the minimum degree and hence a lower bound
%on the size of any part in $\mathcal{Q}_i$, is crucial
%for establishing such an upper bound.

\begin{lemma}
\label{lem:minimal-circuits}
  Let $(H,\mathcal{Q})$, $(H'',\mathcal{Q}')$ be two consecutive pairs 
  occurring in a sequence satisfying properties (M1)--(M3). 
  Then, there is an
  $f$-factor $H'$ of $G$ connecting $\mathcal{Q}'$ such that 
  $|N_{H'}(v) \cap Q'|\geq |N_{H}(v)\cap Q'|-2(|\mathcal{Q}'|-1)$ for
  every $Q' \in \mathcal{Q}'$ and $v \in Q'$.
  Moreover, $H'$ can be
  computed from $\mathcal{Q}'$, $H$, and $H''$ in
  polynomial time.
\end{lemma}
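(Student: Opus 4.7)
The plan is to build $H'$ by modifying $H$ via a small collection of minimal alternating circuits drawn from the symmetric difference $A = H \bigtriangleup H''$, colored red on $E(H) \setminus E(H'')$ and blue on $E(H'') \setminus E(H)$. Since both $H$ and $H''$ are $f$-factors, every vertex has equal red and blue degree in $A$, so each component of $A$ is an alternating circuit. Because a single minimal alternating circuit alters at most two $H$-edges at any given vertex, it suffices to identify $O(|\mathcal{Q}'|)$ such circuits whose joint switching forces the resulting $f$-factor to connect $\mathcal{Q}'$.

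Concretely, since $H''$ connects $\mathcal{Q}'$ by property (M2), the quotient multigraph $H''/\mathcal{Q}'$ is connected, so I extract by BFS a spanning tree of $H''/\mathcal{Q}'$ realized by a set $T$ of exactly $|\mathcal{Q}'|-1$ cross edges of $H''$. Let $S = T \setminus E(H) \subseteq E(A)$: these are the ``missing'' blue cross edges that must be brought into $H$ to form a connector. Apply Fact~\ref{fac:min-ac-set} separately in each component of $A$ (each of which is an alternating circuit) to obtain a set $\mathcal{M}$ of edge-disjoint minimal alternating circuits in $A$ that jointly cover $S$, with each member of $\mathcal{M}$ containing at least one edge of $S$. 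Define $H' = H \bigtriangleup \bigcup_{M \in \mathcal{M}} E(M)$.

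For verification, each minimal alternating circuit has balanced red and blue degree at every vertex, so the symmetric difference preserves all vertex degrees and $H'$ is an $f$-factor. Every edge of $T$ lies in $H'$: edges of $T \cap H \subseteq H \cap H''$ are not in $A$ and are therefore untouched, while edges of $S = T \setminus H$ are blue edges contained in some $M \in \mathcal{M}$ and so are added by the switch; thus $T \subseteq H'$ and $H'$ connects $\mathcal{Q}'$. For the neighbor-preservation bound, fix $v \in Q' \in \mathcal{Q}'$: the only $H$-neighbors of $v$ that can be lost in forming $H'$ correspond to red edges of $\bigcup_{M \in \mathcal{M}} E(M)$ incident to $v$, of which there are at most $2|\mathcal{M}|$ by minimality. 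Since the circuits in $\mathcal{M}$ are edge-disjoint and each contains a distinct edge of $S$, one obtains $|\mathcal{M}| \leq |S| \leq |T| = |\mathcal{Q}'|-1$, yielding a loss of at most $2(|\mathcal{Q}'|-1)$ as required. All constructions run in polynomial time; the main subtlety is this counting step, where one must leverage Fact~\ref{fac:min-ac-set} to bound $|\mathcal{M}|$ by $|S|$ rather than by all blue cross edges in $A$, which is exactly why choosing $T$ as a spanning tree of the quotient (of size $|\mathcal{Q}'|-1$) is the correct move.
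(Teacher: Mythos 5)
Your proof is correct and takes essentially the same approach as the paper: extract a spanning tree $T$ of $H''/\mathcal{Q}'$, take $S=T\setminus E(H)$ as the target blue edges, apply Fact~\ref{fac:min-ac-set} componentwise on $H\bigtriangleup H''$ to obtain edge-disjoint minimal alternating circuits covering $S$, switch on their union, and bound the per-vertex loss by $2|\mathcal{M}|\leq 2|S|\leq 2(|\mathcal{Q}'|-1)$. Your verification that $T\cap E(H)$ is untouched (since those edges lie outside the symmetric difference) and that the $S$-edges are brought in by the switch is a slightly more explicit rendering of the same argument the paper makes.
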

\begin{proof}
 
  From the premise
  that $H''$ is an \fa connecting $\mathcal Q'$, we know that there
  exists a spanning tree $T$ of $H''/\mathcal Q'$. Color the edges in
  $H$ with color red and those in $H''$ with color blue. Let $\mathcal
  A$ be the graph $H\bigtriangleup H''$. Notice that each
  component in $\mathcal A$ is an alternating circuit. Furthermore, note that the
  set $S=E(T\setminus H)$ of blue edges is a subset of $\mathcal A$ as
  $E(T)$ is a subset of $E(H'')$. Let $S_i$ be the set $A_i\cap S$
  where $A_i$ is the $i^{th}$ component in $\mathcal A$. We compute
  the set $\mathcal M_i$ of edge disjoint minimal alternating circuits using Fact \ref{fac:min-ac-set} for each $(A_i,S_i)$
  pair. The size of the set $\mathcal M_i$ is at most $|S_i|$ and hence at most $|S|$ \macs in $\mathcal M= \bigcup_i{\mathcal M_i}$.
   Let $\mathcal M_S=\bigcup_{M\in \mathcal M}M$ and $H'$ be the \fa defined as  Switching($H$,$\mathcal
  M_S$).  We
  argue that this switching operation removes at most
  $2(|\mathcal{Q}'|-1)$ edges 
  incident on any vertex $v$ in $H[Q']$ for every $Q'\in
  \mathcal{Q}'$. 
  
   Considering the fact that the \macs in $\mathcal M$ are edge disjoint, we visualize switching with $\mathcal M_S$ as a sequence of switching operations on $H$ each with a distinct \mac $M$ in $\mathcal M$. In each such $M$, the number of red edges incident on a vertex $v$ that leaves $H$ during
  switching is at most  two and the operation Switching($H$,$\mathcal M_S$) retains at least $N_H(v)-2|\mathcal M|$ neighbors of each vertex. Thus, for any subset $Q'$ of
  $V(G)$ if we consider the subgraph $H[Q']$ alone, it must be the case that
  $|N_{H[Q']}(v)\cap N_{H'[Q']}(v)|\geq |N_{H[Q']}(v)|-2|\mathcal M|$
  for each $v$ in $Q'$. Furthermore, $|\mathcal{M}|$ is at most
  $|S|=|\mathcal Q'|-1$. Since the set $E(T)$ is a subset of $E(H')$,  $H'$ connects $\mathcal Q'$. From Fact  \ref{fac:min-ac-set}, the computation of $\mathcal M$
  and hence of $H'$ takes polynomial time. This completes the proof of the lemma.
\qed \end{proof}

By employing the above lemma, our algorithm ensures that the maximal
sequence $(H_0,\mathcal{Q}_0),$ $\dotsc,(H_k,\mathcal{Q}_k)$ so constructed satisfies the following additional property:
\begin{enumerate}
\item[(M4)] For every $1 \leq i \leq k$, every $Q \in \mathcal{Q}_i$
  and $v \in Q$ it
  holds that
  $|N_{H_{i}}(v)\cap Q|\geq |N_{H_{i-1}}(v)\cap Q|-2(|\mathcal{Q}_i|-1)$.
\end{enumerate}
This property plays a key role in the analysis of our algorithm 
as it allows us to bound the number of parts in each partition
$\mathcal{Q}_i$. Towards this aim we require the following auxiliary
lemma.
\begin{lemma}
\label{lem:min-degree-comp}
  Let $\mathcal{S}=(H_0,\mathcal{Q}_0),\dotsc,(H_k,\mathcal{Q}_k)$ be
  a sequence satisfying properties (M1)--(M4). Then,
  $|N_{H_i}(v) \cap Q|\geq f(v)-\sum_{1 \leq j \leq i}2(|\mathcal{Q}_j|-1)$
  for every $i$ with $1 \leq i \leq k$, $Q \in \mathcal{Q}_i$ and $v \in Q$.
\end{lemma}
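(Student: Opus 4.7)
The plan is to prove the lemma by induction on $i$, where the driver of the induction is Property (M4), which relates the neighborhoods in $H_i$ to those in $H_{i-1}$, and Property (M3a), which ensures that the relevant neighborhoods do not shrink when passing from a part $Q' \in \mathcal{Q}_{i-1}$ to its sub-part $Q \in \mathcal{Q}_i$.

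The key structural observation I would first establish is the following: for every $i \geq 1$, every $Q \in \mathcal{Q}_i$, and every $v \in Q$, if $Q'$ denotes the (unique) part of $\mathcal{Q}_{i-1}$ that contains $Q$ (which exists because $\mathcal{Q}_i$ refines $\mathcal{Q}_{i-1}$ by (M3)), then $N_{H_{i-1}}(v) \cap Q = N_{H_{i-1}}(v) \cap Q'$. This follows because by (M3a), $Q$ is the vertex set of a component of $H_{i-1}[Q']$, so any edge of $H_{i-1}$ incident to $v$ whose other endpoint lies in $Q'$ must in fact have its other endpoint in $Q$.

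For the base case $i=1$, since $\mathcal{Q}_0=\{V(G)\}$ by (M1), the unique part $Q' \in \mathcal{Q}_0$ containing a given $Q \in \mathcal{Q}_1$ is $V(G)$ itself, so $N_{H_0}(v) \cap Q = N_{H_0}(v) \cap V(G) = N_{H_0}(v)$, which has size exactly $f(v)$ since $H_0$ is an $f$-factor by (M2). Applying (M4) then gives $|N_{H_1}(v)\cap Q| \geq f(v) - 2(|\mathcal{Q}_1|-1)$, which is precisely the claim for $i=1$.

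For the inductive step, assume the bound holds at level $i-1$, and let $Q \in \mathcal{Q}_i$, $v \in Q$, and $Q' \in \mathcal{Q}_{i-1}$ with $Q \subseteq Q'$. Combining the structural observation with the induction hypothesis applied to $(Q',v)$ gives
\[
|N_{H_{i-1}}(v) \cap Q| = |N_{H_{i-1}}(v) \cap Q'| \geq f(v) - \sum_{j=1}^{i-1} 2(|\mathcal{Q}_j|-1).
\]
Finally, invoking (M4) yields
\[
|N_{H_i}(v) \cap Q| \geq |N_{H_{i-1}}(v) \cap Q| - 2(|\mathcal{Q}_i|-1) \geq f(v) - \sum_{j=1}^{i} 2(|\mathcal{Q}_j|-1),
\]
closing the induction. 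The proof is essentially a bookkeeping exercise; the only conceptual step is the structural observation above, which I expect to be the main (though still routine) hurdle, since everything else follows mechanically from the properties (M1)--(M4).
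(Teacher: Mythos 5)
Your proof is correct and follows essentially the same route as the paper's: induction on $i$, with the base case grounded in $H_0$ being an $f$-factor (so $|N_{H_0}(v)\cap Q|=f(v)$), the inductive step relying on the observation that $N_{H_{i-1}}(v)\cap Q = N_{H_{i-1}}(v)\cap Q'$ because $Q$ is a component of $H_{i-1}[Q']$ by (M3a), and the final inequality supplied by (M4). The paper simply states this component-based neighborhood equality inline rather than isolating it as a preliminary observation, but the argument is identical.
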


\begin{proof}
  We show the claim by induction on $i$ starting from $i=1$. Let $Q
  \in \mathcal{Q}_1$ and $v \in Q$. Because $H_0$ is an $f$-factor of
  $G$ and $Q$ is a component of $H_0$, we obtain that $|N_{H_0}\cap
  Q|=f(v)$. Using Property (M4) for $i=1$, we obtain
  $|N_{H_{1}}(v)\cap Q|\geq |N_{H_{0}}(v)\cap
  Q|-2(|\mathcal{Q}_1|-1)=f(v)-2(|\mathcal{Q}_1|-1)$, as required. 
  Hence assume that the claim holds for $i-1$ and we want to show the
  claim for $i$. Let $Q_i \in \mathcal{Q}_i$ and $v \in Q_i$ and let $Q_{i-1}$
  be the part in $\mathcal{Q}_{i-1}$ containing $Q_i$. Note that $v\in Q_{i-1}$. 
  From the induction hypothesis we obtain that
  $|N_{H_{i-1}}(v) \cap Q_{i-1}|\geq f(v)-\sum_{1\leq j \leq
    i-1}2(|\mathcal{Q}_j|-1)$. Because $Q_{i}$ is a component of
    $H_{i-1}[Q_{i-1}]$, it holds that $|N_{H_{i-1}}(v) \cap
    Q_{i-1}|=|N_{H_{i-1}}(v) \cap Q_{i}|$. Hence together with Property
  (M4), we obtain 
%  $|N_{H_i}(v) \cap Q_i|\geq |N_{H_{i-1}}(v)\cap
%  Q_i|-2(|\mathcal{Q}_i|-1)=|N_{H_{i-1}}(v)\cap
%  Q_{i-1}|-2(|\mathcal{Q}_i|-1)\geq f(v)-(\sum_{1\leq j \leq
%    i-1}2(|\mathcal{Q}_j|-1))-2(|\mathcal{Q}_i|-1)=f(v)-(\sum_{1\leq j \leq
%    i}2(|\mathcal{Q}_j|-1))$, as required.
    \begin{align*}
            |N_{H_i}(v) \cap Q_i| \geq\; &  |N_{H_{i-1}}(v)\cap
  Q_i|-2(|\mathcal{Q}_i|-1) && 
        \\
        =\; &   |N_{H_{i-1}}(v)\cap
  Q_{i-1}|-2(|\mathcal{Q}_i|-1) &&\\
        \geq \; &     f(v)-(\sum_{1\leq j \leq
    i-1}2(|\mathcal{Q}_j|-1))-2(|\mathcal{Q}_i|-1) &&\\
        =\; &  f(v)-(\sum_{1\leq j \leq
    i}2(|\mathcal{Q}_j|-1))
       \end{align*}
    as required. This completes the proof of the lemma.
\qed \end{proof}

Recall that $f(v)$ is at least $n/g(n)$ for each $v\in V(G)$. Our next step is to show that the length of the 
maximal sequence constructed by our algorithm does not exceed $g(n)+1$.
\begin{lemma}
\label{lem:bound-part-size}
  Let $\mathcal{S}=(H_0,\mathcal{Q}_0),\dotsc,(H_k,\mathcal{Q}_k)$ be
  a maximal sequence satisfying properties (M1)--(M4). Then,
  $|\mathcal{Q}_i|\leq g(n)+1$ for every $i$ with $0\leq i \leq
  k$. Moreover, the length of $\mathcal{S}$ is at most
  $g(n)+1$.
\end{lemma}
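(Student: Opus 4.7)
My plan is to prove the bound $|\mathcal{Q}_i|\leq g(n)+1$ by strong induction on $i$, and then to deduce the bound on the length of $\mathcal S$ as a corollary. The base case $i=0$ holds trivially since $|\mathcal Q_0|=1$.

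For the inductive step, suppose $|\mathcal Q_j|\leq g(n)+1$ for every $0\leq j\leq i-1$. The first preparatory observation I would make is that the strict-refinement requirement (M3b), together with $\mathcal Q_0=\{V(G)\}$, forces $|\mathcal Q_j|\geq j+1$ for every $j$ in the sequence: each step must split at least one part of $\mathcal Q_{j-1}$ into at least two parts. Combined with the induction hypothesis this already gives $i\leq g(n)+1$, and bounds the cumulative sum
\[
S_{i-1} \;:=\; \sum_{j=1}^{i-1}(|\mathcal{Q}_j|-1) \;\leq\; (i-1)\cdot g(n) \;\leq\; g(n)^2.
\]

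The key step is to apply Lemma~\ref{lem:min-degree-comp} at index $i-1$: for every $Q\in\mathcal Q_{i-1}$ and $v\in Q$ this yields $|N_{H_{i-1}}(v)\cap Q|\geq f(v)-2S_{i-1}\geq n/g(n)-2g(n)^2$. Because Property (M3a) forces the parts of $\mathcal Q_i$ contained in a fixed $Q\in\mathcal Q_{i-1}$ to be exactly the connected components of $H_{i-1}[Q]$, and because each such component contains every in-$Q$-neighbor of any of its vertices in $H_{i-1}$, each part of $\mathcal Q_i$ must have size at least $n/g(n)-2g(n)^2+1$. Summing the sizes of all parts of $\mathcal Q_i$ to $n$ then delivers
\[
|\mathcal{Q}_i| \;\leq\; \frac{n}{n/g(n) - 2g(n)^2 + 1}.
\]

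The main (if routine) obstacle will be checking that this quantity is indeed at most $g(n)+1$; after cross-multiplying, this reduces to the inequality $2g(n)^2(g(n)+1)\leq n/g(n)+g(n)+1$, which follows immediately from the standing assumption $n\geq 6g(n)^4$ since then $n/g(n)\geq 6g(n)^3\geq 2g(n)^3+2g(n)^2$ (and the same bound also ensures that the denominator above is positive). This closes the induction. Finally, the length bound is obtained by specialising to $i=k$: combining $|\mathcal Q_k|\geq k+1$ with $|\mathcal Q_k|\leq g(n)+1$ yields $k+1\leq g(n)+1$, i.e., the length of $\mathcal S$ is at most $g(n)+1$.
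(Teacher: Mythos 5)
Your proof is correct and follows essentially the same route as the paper's: both rely on Lemma~\ref{lem:min-degree-comp} together with the observation that $|\mathcal{Q}_j|\geq j+1$ to bound the cumulative degree loss by $2g(n)^2$, deduce that every part of $\mathcal{Q}_i$ has size at least $n/g(n)-2g(n)^2+1$, and close the gap via the standing assumption $n\geq 6g(n)^4$. The only cosmetic difference is that you run a direct strong induction whereas the paper argues by contradiction on the least index $\ell$ with $|\mathcal{Q}_\ell|>g(n)+1$; these are interchangeable formulations of the same argument.
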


\begin{proof}
  The claim clearly holds for $\mathcal{Q}_0$. It also holds for $\mathcal{Q}_1$ because 
  the parts in $\mathcal{Q}_1$ correspond to the components of $H_0$,
  which are at most $g(n)$ due to
  Fact~\ref{fact:component-count}. Assume for a contradiction that
  the claim does not hold and let
  $\mathcal{S}=(H_0,\mathcal{Q}_0),\dotsc,(H_k,\mathcal{Q}_k)$ be a
  maximal sequence satisfying (M1)--(M4) witnessing this and let $\ell$
  be the smallest integer such that $|\mathcal{Q}_\ell|>g(n)+1$. Then,
  $\ell>1$ and $|\mathcal{Q}_{\ell-1}|\leq g(n)+1$. Because
  $|\mathcal{Q}_0|=1$ and for every $i$, $|\mathcal{Q}_i|$ is larger
  than $|\mathcal{Q}_{i-1}|$, we obtain that $i\leq |\mathcal{Q}_i|-1$
  for every $i$. Hence, $\ell-1 \leq |\mathcal{Q}_{\ell-1}|-1\leq
  g(n)+1-1=g(n)$ or in other words $\ell \leq g(n)+1$.

  From Lemma~\ref{lem:min-degree-comp}, we obtain that
  \begin{align*}
    |N_{H_{\ell-1}}(v) \cap Q| \geq\; &  f(v)-\sum_{1 \leq j <
      \ell}2(|\mathcal{Q}_j|-1) && 
    \\
    \geq\; &   \frac{n}{g(n)}-\sum_{1 \leq j <
      \ell}2(g(n)) &&\\ 
    \geq\; & \frac{n}{g(n)}-2(\ell-1) g(n) &&
    \\
    \geq \; &     \frac{n}{g(n)}-2(g(n))^2
  \end{align*}
  for every $Q \in \mathcal{Q}_{\ell-1}$ and $v \in Q$. 
  This implies that every component of $H_{\ell-1}[Q]$ for some $Q \in
  \mathcal{Q}_{\ell-1}$, and hence also every part of $\mathcal{Q}_\ell$ 
  has size at least $n/g(n)-2g(n)^2+1$.
  Since $|\mathcal{Q}_\ell|>g(n)+1$, we conclude that the number $n$ of
  vertices of $G$ is greater than $(n/g(n)-2g(n)^2+1)(g(n)+1)$.
  Rearranging for $n$ we obtain that
  $n<2g(n)^4+2g(n)^3+g(n)^2+g(n)<6g(n)^4$ which contradicts our
  assumption that $n \geq 6g(n)^4$. Since $\mathcal{Q}_i$ is a
  proper refinement of $\mathcal{Q}_{i+1}$ for every $i$ with $1 \leq
  i < k$ and $|\mathcal{Q}_0|=1$, we
  infer that the length of the sequence $\mathcal{S}$ is at most $g(n)+1$. This completes the proof of the lemma.
\qed \end{proof}

We are now ready to prove the main theorem of this section which outlines  how the running time of \textsc{Connected $g$-Bounded $f$-Factor} is dominated by the \textsc{Partition Connector} module.
\begin{proof}[Theorem~\ref{thm:mainalg}]
  We  present an algorithm for \cgfa\  that employs an algorithm for
  \textsc{Partition Connector} as a subroutine. All parts of the
  algorithm apart from the subroutine \textsc{Partition Connector}
  will be deterministic and run in polynomial time. The main idea is
  to construct a maximal sequence $\mathcal{S}=(H_0,\mathcal{Q}_0),
  \dotsc, (H_k,\mathcal{Q}_k)$ satisfying properties
  (M1)--(M4). Recall our assumption that $n\geq 6g(n)^4$. 
  Let $(G,f)$ be an instance of \cgfa. The algorithm starts by
  computing an arbitrary $f$-factor $H_0$. If no $f$-factor exists, then
  clearly the algorithm
   reports failure. If on the other hand the computed $f$-factor $H_0$
  is already connected, then the algorithm returns $H_0$ and exits.

  Observe that $(H_0,\mathcal{Q}_0)$, where $\mathcal{Q}_0=\{V(G)\}$, is a valid
  starting pair for a sequence $\mathcal S$ satisfying properties (M1)--(M4). Further, the algorithm extends the sequence $\mathcal S$ by adding successors as long as one exists. The sequence is extended by invoking a recursive subroutine
  \textbf{Restricted-$f$-factor} with parameters $(G,f)$ and the most recently added pair $(H,
  \mathcal{Q})$ to compute a
  new pair $(H',\mathcal{Q}')$ that can be appended to the sequence,  if one exists. Otherwise, the procedure  
  concludes that $\mathcal S$ can no longer be extended, in which case
  it either returns a connected $f$-factor of $G$ or reports nonexistence of one. 
  The subroutine \textbf{Restricted-$f$-factor} works as follows.
  
  The procedure starts by computing a refinement $\mathcal{Q}'$ of $\mathcal{Q}$
  containing one part $V(C)$ for every component $C$ in $H[Q]$ where $Q$ is a part in $\mathcal Q$. If $\mathcal{Q}'=\mathcal{Q}$ then because of
  Fact~\ref{fac:connect-parts}, $H$ already constitutes a connected
  $f$-factor of $G$ and the procedure correctly returns $H$. Otherwise, 
  the procedure calls the provided algorithm for \textsc{Partition
    Connector} on $G$, $f$, and $\mathcal{Q}'$ to obtain an $f$-factor
  $H''$ connecting $\mathcal{Q}'$. If the provided algorithm for \textsc{Partition
    Connector} returns failure, the procedure also returns failure, relying on Observation \ref{obs:characterization} 
 and assuming that no $f$-factor connecting $\mathcal{Q}'$ exits. Otherwise, observe that the pair
  $(H'',\mathcal{Q}')$ already constitutes a valid successor of the
  pair $(H,\mathcal{Q})$ in any sequence satisfying properties
  (M1)--(M3). To ensure Property (M4), the procedure now
  calls a polynomial time subroutine on the pairs $(H,\mathcal{Q})$ and
  $(H'',\mathcal{Q}')$ to obtain the desired $f$-factor $H'$
  connecting $\mathcal{Q}'$ and such that the pairs $(H,\mathcal{Q})$ and
  $(H',\mathcal{Q}')$ satisfy Property (M4). The existence of such a polynomial time subroutine is from   Lemma~\ref{lem:minimal-circuits}. The procedure now calls
  itself on the pair $(H',\mathcal{Q}')$. This completes the
  description of the algorithm. 

  Note that given the correctness of the
  algorithm for \textsc{Partition Connector} the correctness of
  the algorithm follows from Lemma~\ref{lem:correctness}. Let us now
  analysis the running time of the algorithm. Apart from the calls to the provided subroutine for \textsc{Partition
    Connector}, all parts of the
  algorithm  run in polynomial time. Because the algorithm calls the provided algorithm for
  \textsc{Partition Connector} at most once for every pair
  $(H,\mathcal{Q})$ in a maximal sequence satisfying properties
  (M1)--(M4), we obtain from Lemma~\ref{lem:bound-part-size} that the
  number of those calls is bounded by $g(n)+1$. Moreover, from the same
  lemma, we obtain that the size of a partition $\mathcal{Q}$ given as
  an input to the algorithm for \textsc{Partition Connector} is at
  most $g(n)+1$. Hence, if \textsc{Partition Connector} can be
  solved in time $\bigoh^{*}(n^{2(|\mathcal{Q}|-1)})$, then the algorithm runs
  in time $\bigoh^{*}(g(n)n^{2(g(n))})$ showing the first statement of the
  theorem. Similarly, if \textsc{Partition
    Connector} can be solved in time
  $\bigoh^{*}(2^{|\mathcal{Q}|})$, then the algorithm runs in time
  $\bigoh^{*}(g(n)2^{g(n)+1})$, which given $g(n) \in \bigoh(\log n)$
  shows that the algorithm claimed in the second statement of the
  theorem runs in polynomial time. We use the following lemma to prove the second part of the theorem.
\begin{lemma}
\label{thm:rand-part-conn}
  The \textsc{Partition Connector} can be solved by a
  randomized algorithm with running time $\bigoh^{*}(2^{|\mathcal{Q}|})$ and error probability
  $\bigoh(1-(1-\frac{1}{n^2})^{|\mathcal{Q}|})$.
\end{lemma}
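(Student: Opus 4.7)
The plan is to combine the \textbf{cut-and-count} technique of Cygan et al.\ with the algebraic treatment of $f$-factors via the Tutte matrix. Fix a reference part $Q_1 \in \mathcal{Q}$. For an $f$-factor $H$ of $G$ and a coloring $\chi:\mathcal{Q}\to\{0,1\}$ with $\chi(Q_1)=0$, call the pair $(H,\chi)$ \emph{consistent} if $H$ contains no edge between a vertex whose part is colored $0$ and a vertex whose part is colored $1$. A direct counting argument shows that if $H/\mathcal{Q}$ has exactly $c$ connected components, then there are exactly $2^{c-1}$ colorings that are consistent with $H$. Consequently, modulo $2$, the number of consistent pairs $(H,\chi)$ equals the parity of the number of $f$-factors $H$ with $c=1$, i.e.\ the number of $f$-factors that connect $\mathcal{Q}$.

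To ensure this parity is conclusive whenever a solution exists, I will introduce an isolation scheme. Each edge receives an independent random weight from $\{1,\dots,n^2\}$ (tracked by a formal variable $z$), and we restrict attention to consistent pairs whose total weight equals a fixed target $j$, iterating over $j$ from the smallest possible value upwards. By carefully choosing the isolation universe to be of size $\bigoh(|\mathcal{Q}|)$ — namely the $|\mathcal{Q}|-1$ edges that form a spanning tree of the quotient $H/\mathcal{Q}$ — we can arrange that with probability at least $(1-1/n^2)^{|\mathcal{Q}|}$ there is a unique minimum-weight $f$-factor among those that connect $\mathcal{Q}$, matching the claimed error bound of $\bigoh(1-(1-1/n^2)^{|\mathcal{Q}|})$.

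For each of the $2^{|\mathcal{Q}|-1}$ colorings $\chi$, consistent $f$-factors are precisely the $f$-factors of the subgraph $G_\chi$ obtained by deleting all edges that cross the cut induced by $\chi$. The generating polynomial (in $z$) of $f$-factors of $G_\chi$, taken modulo $2$, can be computed in polynomial time by applying Tutte's reduction of $f$-factors to perfect matchings (via the standard blow-up construction) and evaluating the determinant of the resulting Tutte matrix over $\mathbb{F}_2[z]$, using that permanent coincides with determinant in characteristic $2$. Summing the resulting polynomials across all $\chi$ gives the parity of the number of weighted consistent pairs, and the smallest exponent of $z$ whose coefficient is nonzero certifies the existence (and isolates a particular realization) of a connecting $f$-factor. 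The total running time is $\bigoh^{*}(2^{|\mathcal{Q}|})$.

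The \emph{main technical obstacle} will be the isolation step. A naive application of the isolation lemma over the universe $E(G)$ would yield a failure probability of only $\bigoh(|E(G)|/n^2)$, which may be $\Omega(1)$ and is thus far too weak; obtaining the fine-grained bound $1-(1-1/n^2)^{|\mathcal{Q}|}$ requires pinning the randomness down to an effective universe of size $\bigoh(|\mathcal{Q}|)$. The plan is to exploit the observation, implicit in Lemma~\ref{lem:minimal-circuits}, that the symmetric difference of any two $f$-factors connecting $\mathcal{Q}$ decomposes into alternating circuits whose projections to $H/\mathcal{Q}$ consist of cycles in the quotient — so that distinguishing two candidate connecting $f$-factors is reducible to distinguishing at most $|\mathcal{Q}|-1$ quotient cross-edges. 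Verifying that this refined isolation composes correctly with the cut-and-count parity argument, without corrupting either, will be the most delicate part of the proof.
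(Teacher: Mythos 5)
Your high-level strategy is genuinely related to the paper's --- both ultimately use the Tutte-matrix/\(f\)-blowup machinery over characteristic~$2$ and both exploit a parity cancellation over the $2^{|\mathcal{Q}|-1}$ bipartitions of $\mathcal{Q}$. However, your proposal has a concrete gap exactly where you flag the ``main technical obstacle,'' and as written it does not close.

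The gap is the isolation step. The Isolation Lemma requires the \emph{universe} of weighted elements to be fixed in advance, independent of the (unknown) solution; its failure probability is then $|U|/W$. Your claim that the effective universe can be pinned down to ``the $|\mathcal{Q}|-1$ edges that form a spanning tree of the quotient $H/\mathcal{Q}$'' does not make sense as stated, because $H$ is exactly the object you are trying to isolate, and different candidate $f$-factors connecting $\mathcal{Q}$ need not share a common spanning tree of the quotient, nor is the set of quotient cross-edges of size $\bigoh(|\mathcal{Q}|)$. Moreover, two connecting $f$-factors can differ on edges \emph{internal} to the parts while inducing identical quotient edge sets, so restricting weights to quotient cross-edges cannot distinguish them, and the isolation argument collapses. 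The appeal to Lemma~\ref{lem:minimal-circuits} also does not rescue this: that lemma gives a canonical way to \emph{modify} one $f$-factor towards another via switching, not a bounded, solution-independent universe on which distinct connecting $f$-factors differ.

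The paper sidesteps isolation entirely, and this is the key divergence. Instead of compressing the generating function to a single weight variable $z$ over $\mathbb{F}_2$ (after which distinct solutions genuinely can cancel and isolation becomes necessary), the polynomial $P_{\mathcal{Q}}(\bar{x})$ in Equation~\eqref{eqn:connectivity_polynomial_editing} is kept \emph{multivariate}, with one indeterminate per edge of the $f$-blowup $\mathcal{B}$. Because of this, Lemma~\ref{lem:perfect_matching_determinant} guarantees that each perfect matching of the blowup (hence each pair ``$f$-factor plus choice of $A(v)$-endpoints'') yields a \emph{distinct} monomial, so two different connecting $f$-factors can never cancel one another. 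The only cancellation used is the cut-and-count style one you identified: a disconnected $f$-factor (with $\alpha>1$ components in $H/\mathcal{Q}$) is counted $2^{\alpha}$ times and hence vanishes mod~$2$, while a connecting $f$-factor contributes each of its monomials exactly once (Lemma~\ref{lem:monomial_editing}). Non-vanishing of the polynomial is then certified by Schwartz--Zippel over a field of size $\Omega(n^6)$, giving a clean $1/n^2$ error per decision call (Lemma~\ref{lem:existfactor}), no isolation needed. Finally, the search version (Lemma~\ref{thm:rand-part-conn}) is obtained by a self-reduction: try each cross-edge of $Q_1$, enforce it, merge the two parts it joins, and recurse. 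This makes at most $|\mathcal{Q}|$ successful oracle calls, which is exactly where the $1-(1-\tfrac{1}{n^2})^{|\mathcal{Q}|}$ bound comes from. If you want to pursue your route, you would have to replace the single-variable isolation with either the multivariate trick as above, or a rigorous isolation scheme whose universe is fixed and whose size is genuinely $\bigoh(|\mathcal{Q}|)$; neither is supplied in the proposal.
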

  It remains to show that the randomized algorithm has the stated error probability.
  Towards this aim we calculate a lower bound on the
  success probability of the algorithm, i.e., the probability that the
  algorithm returns a connected $f$-factor of $G$ if such an
  $f$-factor exists. Hence, let us suppose that $G$ has a connected
  $f$-factor. It follows from Observation~\ref{obs:characterization}
  that $G$ contains an $f$-factor connecting $\mathcal{Q}$ for every
  partition $\mathcal{Q}$ of its vertex set. Hence every call to the 
  subroutine \textbf{Partition Connector} is made for a
  ``Yes''-instance, which together with Lemma~\ref{thm:rand-part-conn} implies that
  every such call succeeds with probability at least 
  $(1-\frac{1}{n^2})^{|\mathcal{Q}|}$. Because $|\mathcal{Q}|\leq
  g(n)+1 \in \bigoh(\log n)$, we obtain from Lemma~\ref{lem:prob-bound}
  that this probability is at least $(1-\frac{c\ceil{\log n}^2}{n^2})$
  for some constant $c$.
  Since
  there are at most $g(n)+1=c\log n$ such
  calls, the probability that the algorithm succeeds for all of
  these calls is hence at least $(1-\frac{c\ceil{\log
      n}^2}{n^2})^{c\log n}>0$, as
  required. This completes the proof of the theorem.
\qed \end{proof}

\subsection{A Quasi-polynomial Time Algorithm for Polylogarithmic Bounds}
\label{sec:det-alg}
In this section, we prove Theorem \ref{thm:constant} and Theorem \ref{thm:polylog}. In fact, we prove a more general result, from which both theorems directly follow.

\begin{restatable}{theorem}{genthm}\label{thm:general}
  For every $c>0$ and function $g(n)\in\bigoh((\log n)^c)$, the {\cgfa} problem can be solved in ${\bigoh}^{*}(n^{2g(n)})$ time.
\end{restatable}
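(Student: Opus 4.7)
The plan is to invoke Theorem~\ref{thm:mainalg}(a), which reduces the problem of proving Theorem~\ref{thm:general} to exhibiting a deterministic algorithm for \textsc{Partition Connector} with running time $\bigoh^{*}(n^{2(|\mathcal{Q}|-1)})$. Both Theorem~\ref{thm:constant} and Theorem~\ref{thm:polylog} will follow immediately: substituting $g(n)\in \bigoh(1)$ yields the polynomial-time bound, and $g(n)\in\bigoh((\log n)^c)$ yields $n^{2(\log n)^c}=n^{(\log n)^{\alpha(c)}}$ with $\alpha(c)\in\bigoh(1)$, matching the two earlier statements.

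The remaining task is therefore to design the required \textsc{Partition Connector} subroutine. My approach is a brute-force enumeration over the ``skeleton'' that witnesses connectivity across $\mathcal{Q}$. Concretely, given $(G,f,\mathcal{Q})$, I would iterate over every subset $S\subseteq E(G)$ of size exactly $|\mathcal{Q}|-1$. For each $S$, first test in polynomial time whether $S$ forms a spanning tree of the quotient multigraph $G/\mathcal{Q}$ (i.e.\ each edge of $S$ crosses between parts and the $|\mathcal{Q}|-1$ edges connect $\mathcal{Q}$). If so, I would then apply Fact~\ref{fac:enforcedg} to $S$ in polynomial time to decide whether some $f$-factor $H$ of $G$ contains $S$; any such $H$ automatically satisfies $H/\mathcal{Q}$ connected and is returned. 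If the enumeration exhausts without success, report failure.

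Correctness is immediate in both directions. Any $H$ produced by the algorithm contains a spanning tree of $G/\mathcal{Q}$ and hence connects $\mathcal{Q}$. Conversely, if an $f$-factor $H^{*}$ connecting $\mathcal{Q}$ exists, then $H^{*}/\mathcal{Q}$ is connected and so contains some spanning tree; lifting its $|\mathcal{Q}|-1$ edges to arbitrary pre-image edges in $E(H^{*})$ produces a set $S^{*}\subseteq E(G)$ that the enumeration will eventually try, and Fact~\ref{fac:enforcedg} applied to $S^{*}$ certainly succeeds (witnessed by $H^{*}$ itself). For the running time, the number of candidate subsets is at most $\binom{|E(G)|}{|\mathcal{Q}|-1}\leq |E(G)|^{|\mathcal{Q}|-1}\leq n^{2(|\mathcal{Q}|-1)}$, and each candidate is processed in polynomial time, giving the bound $\bigoh^{*}(n^{2(|\mathcal{Q}|-1)})$.

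I do not expect a genuine obstacle here; the heavy lifting is done by Theorem~\ref{thm:mainalg}(a), by Fact~\ref{fac:enforcedg} (which converts the ``must contain $S$'' requirement into a plain $f$-factor existence question on a derived instance), and by the fact that Lemma~\ref{lem:bound-part-size} keeps $|\mathcal{Q}|$ bounded by $g(n)+1$ throughout the generic algorithm, so the cost of the enumeration never exceeds $n^{2g(n)}$ per call. The only point requiring a moment of care is the observation that enumerating subsets of size exactly $|\mathcal{Q}|-1$ suffices (rather than enumerating actual spanning trees of the quotient), which holds because each spanning tree of $G/\mathcal{Q}$ lifts to such a subset in $E(G)$ by choosing one pre-image edge per quotient edge.
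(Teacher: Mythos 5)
Your proposal is correct and follows essentially the same route as the paper: both invoke Theorem~\ref{thm:mainalg}(a) and then supply the required \textsc{Partition Connector} subroutine by enumerating all $\binom{|E(G)|}{|\mathcal{Q}|-1}$ candidate edge sets that could serve as a spanning tree of $G/\mathcal{Q}$ and checking each via Fact~\ref{fac:enforcedg}. This is precisely the paper's Lemma~\ref{lem:alg-part-conn-poly} (relying on Lemma~\ref{lem:connect-parts} for correctness), so there is no meaningful divergence.
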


We make use of the following simple lemma.
\begin{lemma}
Let $G$ be a graph having a connected $f$-factor. Let $\mathcal Q$ be  a partition of the vertex set $V(G)$. There exists a spanning tree $T$ of $G/\mathcal Q$ such that for some \fa $H$ of $G$, $E(T)\subseteq E(H)$. Furthermore, $H$ can be computed from $T$ in polynomial time.
\label{lem:connect-parts}
\end{lemma}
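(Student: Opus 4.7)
The plan is to combine two observations: existence of a suitable spanning tree follows immediately from the assumed connected $f$-factor, and computability of the corresponding $H$ from $T$ is a direct consequence of Fact~\ref{fac:enforcedg}.

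First, for the existence of $T$, let $H^*$ be any connected $f$-factor of $G$ (which exists by the hypothesis of the lemma). Since $H^*$ is connected and spans $V(G)$, the quotient multigraph $H^*/\mathcal{Q}$ is also connected. Pick any set of $|\mathcal{Q}|-1$ edges of $H^*$ between distinct parts of $\mathcal{Q}$ whose images in $H^*/\mathcal{Q}$ form a spanning tree of $H^*/\mathcal{Q}$; these edges form a subgraph $T$ of $G$ that, by the paper's definition in Section~\ref{sec:prelim}, is a spanning tree of $G/\mathcal{Q}$. By construction, $E(T) \subseteq E(H^*)$, so taking $H = H^*$ witnesses the existence claim.

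Second, for the computability of $H$ from $T$, I would apply Fact~\ref{fac:enforcedg} with $S := E(T)$. The fact guarantees that, given any edge set $S \subseteq E(G)$, we can decide in polynomial time whether some $f$-factor of $G$ contains $S$ and, if so, output such an $f$-factor. Since $H^*$ is an $f$-factor of $G$ containing $E(T)$, at least one such $f$-factor exists, so the algorithm of Fact~\ref{fac:enforcedg} succeeds and returns an $f$-factor $H$ with $E(T) \subseteq E(H)$ in polynomial time. Note that $H$ may or may not equal $H^*$; the lemma only asserts the existence of some $f$-factor with this property, which is exactly what Fact~\ref{fac:enforcedg} produces.

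There is no real obstacle here; the only subtlety is to be careful with the two meanings of ``spanning tree of $G/\mathcal{Q}$'' used in the paper (as a subgraph of $G$ versus as a subgraph of the multigraph $G/\mathcal{Q}$) and to invoke Fact~\ref{fac:enforcedg} on the correct set of forced edges. The overall proof is thus a short two-step argument: extract $T$ from a connected $f$-factor, then reconstruct an $f$-factor containing $E(T)$ via Tutte's classical reduction to perfect matching as captured by Fact~\ref{fac:enforcedg}.
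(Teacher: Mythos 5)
Your proof is correct and follows essentially the same route as the paper's: take a connected $f$-factor $H^*$, extract a spanning tree $T$ of $H^*/\mathcal{Q}$, and use Fact~\ref{fac:enforcedg} with $S=E(T)$ to recover an $f$-factor containing $T$ in polynomial time. The only cosmetic difference is that the paper cites Observation~\ref{obs:characterization} to see $H^*/\mathcal{Q}$ is connected, whereas you argue it directly; both amount to the same observation.
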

\begin{proof}
Let $H'$ be a connected $f$-factor of $G$. For any partition $\mathcal Q$ of the vertex set, it follows from Observation \ref{thm:characterization} that $H'/\mathcal Q$ is connected. Consider a spanning tree $T$ of $H'/\mathcal Q$.  Clearly, there exists at least one $f$-factor $H$ containing $E(T)$ and hence $H/\mathcal Q$ is connected. Once we have $E(T)$, $H$ can be computed in polynomial time using Fact \ref{fac:enforcedg}. 
 \qed \end{proof}

In light of Theorem~\ref{thm:mainalg}, it now suffices to prove the following Lemma~\ref{lem:alg-part-conn-poly}, from which Theorem~\ref{thm:general} immediately follows.

\begin{lemma}\label{lem:alg-part-conn-poly}
  \textsc{Partition Connector} can
  be solved in time $\bigoh^{*}(n^{2(|\mathcal{Q}|-1)})$.
\end{lemma}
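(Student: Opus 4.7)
The plan is to reduce \textsc{Partition Connector} to at most $n^{2(|\mathcal{Q}|-1)}$ invocations of the $f$-factor subroutine provided by Fact~\ref{fac:enforcedg}. The pivotal observation, already packaged as Lemma~\ref{lem:connect-parts}, is that if $G$ admits an $f$-factor connecting $\mathcal{Q}$, then there exists a set $S\subseteq E(G)$ of exactly $|\mathcal{Q}|-1$ edges whose projection to $G/\mathcal{Q}$ is a spanning tree of the quotient graph and which is simultaneously contained in some $f$-factor $H$ of $G$. Conversely, any $f$-factor containing such an $S$ automatically connects $\mathcal{Q}$, because $H/\mathcal{Q}$ then contains a spanning tree of $G/\mathcal{Q}$ as a subgraph.

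On the algorithmic side, I would therefore simply enumerate every subset $S \subseteq E(G)$ of size $|\mathcal{Q}|-1$. For each such $S$, I would first check in polynomial time whether every edge of $S$ joins two distinct parts of $\mathcal{Q}$ and whether the image of $S$ in $G/\mathcal{Q}$ is a spanning tree (i.e.\ acyclic and spanning all $|\mathcal{Q}|$ parts). Whenever $S$ passes this test, I would invoke the subroutine of Fact~\ref{fac:enforcedg} to test in polynomial time whether $G$ has an $f$-factor $H$ containing $S$; if yes, return $H$, which by construction connects $\mathcal{Q}$. If no iteration succeeds, report failure.

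Correctness in both directions is immediate: Lemma~\ref{lem:connect-parts} ensures that whenever a connecting $f$-factor exists, at least one enumerated $S$ will pass both tests and produce a valid $H$; and every $H$ the algorithm returns is, by the construction above, an $f$-factor that connects $\mathcal{Q}$. For the running time, $|E(G)| \leq \binom{n}{2} < n^2$, so the number of candidate edge sets is bounded by
\[
\binom{|E(G)|}{|\mathcal{Q}|-1} \;\leq\; |E(G)|^{|\mathcal{Q}|-1} \;\leq\; n^{2(|\mathcal{Q}|-1)},
\]
and each iteration performs only polynomial work (a tree test in $G/\mathcal{Q}$ plus one call to Fact~\ref{fac:enforcedg}). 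The total running time is therefore $\bigoh^{*}(n^{2(|\mathcal{Q}|-1)})$, as required. I do not expect a substantive obstacle; the entire argument is a brute-force enumeration whose only leverage is Lemma~\ref{lem:connect-parts}, and the counting bound $|E(G)| \leq n^2$ is exactly what matches the target exponent.
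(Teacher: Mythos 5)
Your proposal matches the paper's proof essentially verbatim: both enumerate the at most $\binom{|E(G)|}{|\mathcal{Q}|-1}\leq n^{2(|\mathcal{Q}|-1)}$ candidate spanning trees of $G/\mathcal{Q}$, invoke Fact~\ref{fac:enforcedg} to test for an $f$-factor forced to contain the chosen edges, and justify correctness via Lemma~\ref{lem:connect-parts}. The only difference is that you spell out the subset-enumeration and tree-membership test explicitly, which the paper compresses into ``going over all spanning trees of $G/\mathcal{Q}$.''
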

\begin{proof}
  It follows from Lemma~\ref{lem:connect-parts} that we can solve
  \textsc{Partition Connector} by going over all spanning
  trees $T$ of $G / \mathcal{Q}$ and checking for each of them 
  whether there is an $f$-factor of $G$ containing the edges of $T$. 
  The lemma now follows because the number of spanning trees of
  $G / \mathcal{Q}$ is at most $\binom{|E(G)|}{|\mathcal{Q}|-1}$,
  which is upper bounded by $\bigoh(n^{2(|\mathcal{Q}|-1)})$, and for every
  such tree $T$ we can check the existence of an $f$-factor containing
  $T$ in polynomial time.
\qed \end{proof}

\section{A Randomized Polynomial Time Algorithm for Logarithmic Bounds}
\label{sec:rand-alg}
\label{sec:random}
 In this section we prove Theorem \ref{thm:log}. Due to
 Theorem~\ref{thm:mainalg}, it is sufficient for us to provide a
 randomized algorithm for \textsc{Partition Connector} 
 with running time $\bigoh^{*}(2^{|\mathcal{Q}|})$ and error probability
    $\bigoh(g(n)^2/n^2)$.
    This is precisely what we do in the rest of this section (Lemma \ref{thm:rand-part-conn}). As a first step, we design an algorithm for the ``existential version'' of the problem which we call $\exists$-Partition Connector and define as follows.\\

\defproblem{\textsc{$\exists$-Partition Connector}}{A graph $G$ with $n$ vertices,
  $f:V(G)\rightarrow \mathbb{N}$, and a partition $\mathcal Q$ of
  $V(G)$.}{Is there an $f$-factor of $G$ that connects $\mathcal Q$?}

We then describe how to use our algorithm for this problem as a subroutine in our algorithm to solve \textsc{Partition Connector}.

\subsection{Solving {\sc $\exists$-Partition Connector} in Randomized Polynomial Time}

The objective of this subsection is to prove the following lemma which implies a randomized polynomial time algorithm for {\sc $\exists$-Partition Connector} when $g(n)\in\bigoh(\log n)$.
%The following lemma gives a randomized polynomial time algorithm to solve the $g$-$\exists$-Partition Connector problem whenever the number of partitions is $\bigoh(\log n)$. 

\begin{lemma}
\label{lem:existfactor} There exists an algorithm that, given the graph $G$, a function $f:V(G)\to {\mathbb N}$, and a partition $\cal Q$ of $V(G)$, runs in time $\bigoh^{*}(2^{|{\cal Q}|})$ and outputs

\begin{itemize} \item {\sc NO} if $G$ has no $f$-factor connecting $\cal Q$
\item {\sc YES} with probability at least  $1-\frac{1}{n^2}$ otherwise. 	
\end{itemize}

\end{lemma}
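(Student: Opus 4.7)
The plan is to combine the \emph{Cut-and-Count} framework of Cygan et al.\ with Tutte's classical reduction from $f$-factors to perfect matchings, using the Isolation Lemma of Mulmuley, Vazirani and Vazirani to rule out parity cancellation. Fix an arbitrary reference part $Q_0 \in \mathcal{Q}$. For each labeling $\chi\colon \mathcal{Q} \to \{0,1\}$ with $\chi(Q_0) = 0$, let $G_\chi$ be the spanning subgraph of $G$ obtained by deleting every edge whose endpoints lie in parts with distinct $\chi$-values, and let $\mathcal{F}_\chi$ be the set of $f$-factors of $G_\chi$. An $f$-factor $H$ of $G$ lies in $\mathcal{F}_\chi$ precisely when $\chi$ is constant on each component of $H/\mathcal{Q}$, so if $H/\mathcal{Q}$ has $k$ components then $H$ appears in exactly $2^{k-1}$ of the $\mathcal{F}_\chi$'s (the component containing $Q_0$ is forced to value $0$, the others are free). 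Since $2^{k-1}$ is even whenever $k \geq 2$, this yields the parity identity
\[
\sum_{\chi} |\mathcal{F}_\chi| \;\equiv\; |\{H : H \text{ is an $f$-factor of } G \text{ with } H/\mathcal{Q} \text{ connected}\}| \pmod 2.
\]

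To prevent the right-hand side from vanishing, I assign to each edge of $G$ an independent uniform random weight $w(e) \in \{1,\dots,N\}$ with $N = \Theta(n^4)$. By the Isolation Lemma, conditional on a connected-quotient $f$-factor existing, with probability at least $1 - 1/(2n^2)$ there is a unique such factor of some minimum weight $W^\star$; the identity above, restricted to the weight class $W^\star$, then equals $1$. It therefore suffices, for each $\chi$, to compute the generating polynomial $P_\chi(x) = \sum_{H \in \mathcal{F}_\chi} x^{w(H)} \bmod 2$, sum over $\chi$, and test whether the result is nonzero.

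Each $P_\chi$ is produced algebraically. Tutte's classical reduction turns $(G_\chi, f)$ in polynomial time into a graph $G_\chi'$ carrying a weight-preserving bijection between $f$-factors of $G_\chi$ and perfect matchings of $G_\chi'$ (assign weight $0$ to every gadget edge). I then form the Tutte matrix $T_\chi$ over $\mathbb{F}_{2^t}[x]$ whose entry for each edge $e'$ is $\pm\, x^{w'(e')} z_{e'}$ with $z_{e'}$ drawn uniformly from $\mathbb{F}_{2^t}$. In characteristic two,
\[
\det T_\chi \;=\; \Bigl(\sum_{M} x^{w'(M)} \prod_{e' \in M} z_{e'}\Bigr)^{2},
\]
where $M$ ranges over perfect matchings of $G_\chi'$; summing across labelings, the $x^{2W^\star}$-coefficient of $\sum_\chi \det T_\chi$ is the square of $\sum_{H : w(H) = W^\star,\, H/\mathcal{Q}\,\textrm{connected}} \prod z_{e'}$, which under isolation is the square of a single nonzero monomial in the $z$'s. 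Hence by Schwartz--Zippel it is nonzero with probability at least $1 - 1/(2n^2)$ provided $2^t$ is polynomial in $n$ of sufficient size. A union bound yields overall failure probability at most $1/n^2$, and the running time is $\bigoh^{*}(2^{|\mathcal{Q}|})$ because each of the $2^{|\mathcal{Q}|-1}$ labelings requires only one polynomial-time determinant evaluation in $\mathbb{F}_{2^t}[x]$ (whose $x$-degree is bounded by $2nN$).

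The main obstacle I anticipate is coupling the weighted parity count with Tutte's reduction so that $w(H)$ matches $w'(M_H)$ and the $x$-degree of $\det T_\chi$ remains polynomial; both are handled by setting gadget edges to weight $0$. A secondary care point is that the Isolation Lemma must be applied to the edges of $G$ rather than of $G_\chi'$ so that it is the unique minimum-weight $f$-factor of $G$ that gets isolated, and the two randomized ingredients (Isolation Lemma and Schwartz--Zippel) must each be tuned so that their individual error is at most $1/(2n^2)$.
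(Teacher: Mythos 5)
Your high-level plan --- inclusion--exclusion over $2^{|\mathcal{Q}|-1}$ labelings $\chi$, a parity identity that kills $f$-factors whose quotient is disconnected, and an algebraic realization via Tutte matrices over a characteristic-two field --- is close in spirit to the paper's proof, which sums over subsets $I$ with $\{1\}\subseteq I\subseteq[\ell]$ rather than over labelings. However, your concrete realization breaks at the step where you sum $\det T_\chi$. You build $G_\chi'$ by applying Tutte's reduction to $G_\chi$, and the vertex set of $\mathcal{B}_f(G_\chi)$ \emph{depends on $\chi$}: for every edge $e$ deleted by $\chi$, the gadget vertices $v_e,w_e$ are absent. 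Consequently, when a disconnected-quotient $f$-factor $H$ appears in $\mathcal{F}_{\chi_1}$ and $\mathcal{F}_{\chi_2}$ (with different cut-edge sets), the perfect matchings of $G_{\chi_1}'$ and $G_{\chi_2}'$ that project to $H$ live on different vertex sets and hence contribute \emph{different} $z$-monomials to $\det T_{\chi_1}$ and $\det T_{\chi_2}$. The promised mod-$2$ cancellation simply does not occur at the monomial level: in fact, each $z$-monomial determines the matching $M$, which determines the saturated vertex set, which forces a unique $G_\chi$, so (generically) each monomial appears for exactly one $\chi$. As a result, on a \textsc{NO} instance a disconnected $H$ leaves $\sum_\chi\det T_\chi$ nonzero, and your algorithm would report a false \textsc{YES}, violating the one-sided guarantee of the lemma.

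The fix is exactly what the paper's polynomial is engineered to supply. Instead of shrinking the blowup per $\chi$, keep the vertex set of $\mathcal{B}_f(G)$ fixed and, for each labeling, merely forbid the $A(v)$--$v_e$ edges for the $\chi$-cut edges $e$, so that $(v_e,w_e)$ is forced into every matching; then all matchings projecting to a fixed $H$ are subsets of $E(\mathcal{B}_f(G))$ and agree on the padding edges, and the parity cancellation goes through. This is precisely the role of the factor $m_I$ multiplying $\det\mathcal{T}(\mathcal{B}[\mathcal{Q}(I)])\cdot\det\mathcal{T}(\mathcal{B}[\overline{\mathcal{Q}(I)}])$ in Equation~\eqref{eqn:connectivity_polynomial_editing}. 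Once you make this repair, your route still works but is more complicated than the paper's: because the paper works with the full multivariate Tutte polynomial, distinct matchings yield distinct monomials, so a connected-quotient $f$-factor's monomials automatically survive with coefficient one and the Isolation Lemma is not needed at all (a single Schwartz--Zippel application over a field of size $\Omega(n^6)$ suffices). Two smaller imprecisions worth noting: ``assign weight $0$ to every gadget edge'' is ambiguous, since every edge in the blowup is a gadget edge --- you need $w'(v_e,w_e)=w(e)$ and $w'(u,v_e)=0$, and then $w'(M)$ tracks $\sum_{e\notin H_M}w(e)$ rather than $w(H_M)$; and under isolation the $x$-coefficient is a sum of $\prod_v f(v)!$ distinct squared $z$-monomials (one per matching realizing $H^\star$), not a single monomial, though this is harmless for Schwartz--Zippel.
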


 We design this algorithm by starting from the exact-exponential algorithm in \cite{PhilipR14} and making appropriate  modifications. During the description, we point out the main differences between our algorithm and that in \cite{PhilipR14}. We now proceed to the details of the algorithm. 
We begin by recalling a few important definitions and known results on $f$-factors. These are mostly standard and are also present in \cite{PhilipR14},  but since they are required in the description and proof of correctness of our algorithm, we state them here.

\begin{definition}[\(f\)-Blowup]\label{def:f_blowup}
  Let \(G\) be a graph and let \(f:V(G)\to\mathbb{N}\) be such
  that \(f(v)\leq{}deg(v)\) for each \(v\in{}V(G)\). Let \(H\) be the graph  constructed as follows:
  \begin{enumerate}
  \item For each vertex \(v\) of \(G\), we add a vertex set
    \(A(v)\) of size $f(v)$ to \(H\). 
  
  \item For each edge \(e=\{v,w\}\) of \(G\) we add to \(H\)
    vertices $v_e$ and $w_e$ and edges $(u,v_e)$ for every $u\in
    A(v)$ and $(w_e,u)$ for every $u\in A(w)$.  Finally, we add the edge $(v_e,w_e)$.
       \end{enumerate}
  This completes the construction. The graph \(H\) is called the 
  \emph{\(f\)-blowup} of graph \(G\).  We use ${\cal B}_{f}(G)$ to denote the
$f$-blowup of $G$. We omit the subscript when there is no scope for ambiguity.
\end{definition}

\begin{definition}[Induced $f$-blowup]
  For a subset $S\subseteq V(G)$, we define the $f$-blowup
  of $G$ \emph{induced} by $S$ as follows. Let the 
  $f$-blowup of $G$ be $H$. Begin with the graph $H$ and for
  every edge $e=(v,w)\in E(G)$ such that $v\in S$ and $w\notin S$,
  delete the vertices $v_e$ and $w_e$ from $H$.  Let the graph $H'$ be the union of those connected
  components of the resulting graph which contain the vertex sets
  \(A(v)\) for vertices \(v\in{}S\). Then, the graph $H'$ is
  called the $f$-blowup of $G$ \emph{induced} by the set
  $S$ and is denoted by ${\cal B}_f(G)[S]$.

\end{definition}

%The construction of the $f$-blowup of $G$ can be informally described as taking the graph $G$, making $f(v)$ ``equivalent'' copies of every vertex $v\in V(G)$ (implies making copies of the edges as well), replacing every edge of the resulting graph by a path of length 3 (see Figure \ref{fig:blowup}). Similarly, the construction of the $f$-blowup of $G$ induced by a subset $S\subseteq V(G)$ can be described analogously starting with the graph $G[S]$. 
We now recall the relation between perfect matchings in the $f$-blowup and $f$-factors (see Figure \ref{fig:blowup}). 

\begin{lemma}[\cite{WT54}]
\label{lem:matching_equivalence}
%  A graph \(G\) has an \(f\)-factor if and only if the \(f\)-blowup of \(G\)
% has a perfect matching.
  A graph \(G\) has an \(f\)-factor if and only if the \(f\)-blowup of \(G\)
 has a perfect matching.  \end{lemma}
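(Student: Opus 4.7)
The plan is to prove this classical equivalence (which is essentially Tutte's gadget construction) by exhibiting a direct, constructive correspondence between $f$-factors of $G$ and perfect matchings of ${\cal B}_{f}(G)$. The proof splits naturally into a forward direction that turns an $f$-factor into a perfect matching and a reverse direction that extracts an $f$-factor from a perfect matching, and both should be transparent once the local structure of the gadget attached to each edge $e=\{v,w\}$ is understood: $v_e$ is adjacent only to $w_e$ and to the $f(v)$ vertices of $A(v)$, and symmetrically for $w_e$.

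For the forward direction, I would start with an $f$-factor $F$ of $G$ and build a perfect matching $M$ of ${\cal B}_{f}(G)$ as follows. For every edge $e=\{v,w\}\notin F$, I add the ``crossbar'' edge $(v_e,w_e)$ to $M$, which simultaneously saturates both gadget vertices of $e$. For every edge $e=\{v,w\}\in F$, the vertex $v_e$ will be matched into $A(v)$ and $w_e$ into $A(w)$. The critical counting observation is that since $F$ is an $f$-factor, the number of edges $e\in F$ incident to a fixed vertex $v\in V(G)$ equals $f(v)=|A(v)|$. Therefore, the set of gadget vertices $\{v_e : e\in F, v\in e\}$ can be matched bijectively with $A(v)$, and any such bijection chosen independently at each $v$ produces a matching that saturates every vertex of ${\cal B}_{f}(G)$.

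For the reverse direction, I would start from a perfect matching $M$ of ${\cal B}_{f}(G)$ and define $F\subseteq E(G)$ by declaring $e=\{v,w\}\in F$ precisely when the crossbar edge $(v_e,w_e)$ is not in $M$. Using the local structure of the gadget, whenever $(v_e,w_e)\notin M$ the vertex $v_e$ must be matched into $A(v)$, and symmetrically $w_e$ must be matched into $A(w)$, so the definition is endpoint-symmetric and well-posed. To check that $F$ is an $f$-factor, fix $v\in V(G)$. Every vertex of $A(v)$ is saturated by $M$ and can only be matched to some $v_e$ with $e$ incident to $v$. Hence the number of edges $e$ incident to $v$ whose crossbar edge is absent from $M$ is exactly $|A(v)|=f(v)$, giving $\deg_F(v)=f(v)$.

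The argument is essentially routine and I do not anticipate any real obstacle. The one point that deserves explicit mention is the ``symmetry'' step in the reverse direction, namely the claim that $v_e$ is matched into $A(v)$ iff $w_e$ is matched into $A(w)$; this is forced because each of $v_e$ and $w_e$ has exactly one neighbor lying outside its associated $A$-set, and that neighbor is the other gadget vertex of $e$. Once this is noted, both directions fit together cleanly and in fact yield a (non-canonical) bijection between $f$-factors of $G$ and equivalence classes of perfect matchings of ${\cal B}_{f}(G)$ under relabeling within each $A(v)$.
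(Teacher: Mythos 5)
Your proof is correct and is the standard argument for Tutte's gadget reduction. Note, however, that the paper does not prove this lemma at all: it is cited directly to Tutte's 1954 paper \cite{WT54} and treated as a known black box, so there is no in-paper proof to compare against. Your two-directional construction---crossbar edges for non-factor edges, $A(v)$-matchings for factor edges, and the counting argument $|\{e\in F: v\in e\}| = f(v) = |A(v)|$ driving both directions---is exactly what one would expect to see if the paper had included a proof, and the ``symmetry'' point you flag (that $v_e$ goes into $A(v)$ iff $w_e$ goes into $A(w)$, forced by each gadget vertex having the other gadget vertex as its unique non-$A$ neighbor) is indeed the one detail worth spelling out. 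No gaps.
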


 \begin{figure}[t]
 \begin{center}
  \includegraphics[height=220 pt, width=200 pt]{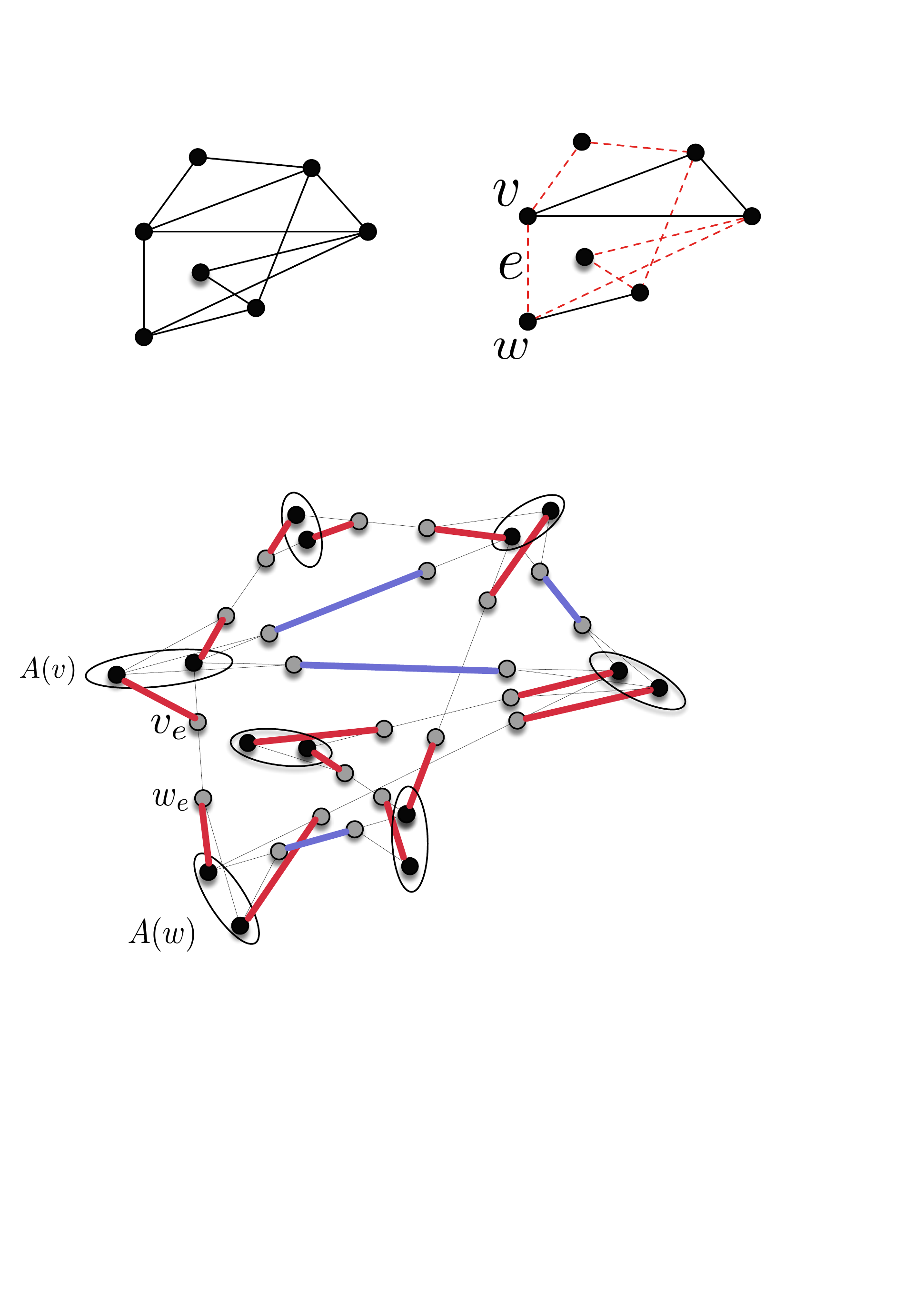}
%  \vspace{-0.7cm}
  \caption{An illustration of a graph $G$ with a 2-factor $H$ (the red dashed edges) and one possible corresponding perfect matching in $\cB(G)$ (thick edges). It is important to note that an edge $e=(v,w)$ is \emph{not} in $H$ if and only if the edge $(v_e,w_e)$ is present in the corresponding perfect matching.}
%  \vspace{-0.5cm}
  \label{fig:blowup}
  \end{center}
\end{figure}

%  \begin{figure}[t]
%  \begin{center}
%   \includegraphics[height=220 pt, width=200 pt]{factorexample}
%   \end{center}
%   \caption{}
% \end{figure}

The relationship between the Tutte matrix and perfect matchings is well-known and this has already been exploited in the design of fixed-parameter and exact algorithms \cite{Wahlstrom13,GutinWY13,PhilipR14}.

 \begin{definition}[Tutte matrix] The \emph{Tutte matrix} of a
  graph $G$ with $n$ vertices is an $n\times n$ skew-symmetric
  matrix $T$ over the set $\{x_{ij}|1\leq i< j\leq \vert V(G)\vert\}$ of indeterminates whose \((i,j)^{th}\) element is defined
  to be 
  
\[ 
T(i,j) = \left \{ \begin{array}{lll}
    x_{ij} & \mbox{if $\{i,j\}\in E(G)$ and $i<j$}\\
    -x_{ji} & \mbox{if $\{i,j\}\in
      E(G)$ and $i>j$}\\
        
    0 & \mbox{otherwise}\\
        
  \end{array} \right.  
\] We use \({\cal T}(G)\) to denote the Tutte matrix of the graph \(G\) % over
% the finite field \(GF_{2}\)
. 

\end{definition}

 Following terminology in \cite{PhilipR14},  when we refer to expanded forms of \emph{succinct} representations
(such as summations and determinants) of polynomials, we use the
term \emph{na{i}ve expansion} (or summation) to denote that
expanded form of the polynomial which is obtained by merely
writing out the operations indicated by the succinct
representation. We use the term \emph{simplified expansion} to
denote the expanded form of the polynomial which results after we
apply all possible simplifications (such as cancellations) to a
na{i}ve expansion. We call a monomial \(m\) which has a non-zero
coefficient in a simplified expansion of a polynomial \(P\), a
\emph{surviving} monomial of \(P\) in the simplified
expansion. Let \(\det{{\cal T}(G)}\) denote the determinant of the Tutte matrix of the graph $G$. 

\begin{proposition}[\cite{WT47}]
\label{prop:perfect_matching_determinant}
% 
%  In particular, 
  \(\det{{\cal T}(G)}\) is identically zero
  when expanded and simplified over a field of characteristic two
  if and only if the graph \(G\) does not have a perfect matching.
\end{proposition}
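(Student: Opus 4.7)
The plan is to prove the statement directly from the Leibniz expansion of the determinant, exploiting the fact that in characteristic two every sign becomes trivial. First, I would write
\[
\det \mathcal{T}(G) \;=\; \sum_{\sigma \in S_n} \operatorname{sgn}(\sigma) \prod_{i=1}^n T(i, \sigma(i)),
\]
where $T = \mathcal{T}(G)$, and observe that any permutation having a fixed point contributes $0$ because $T(i,i)=0$. Moreover, a term survives only when $\{i, \sigma(i)\} \in E(G)$ for every $i$, so every surviving monomial is indexed by the edge set traced out by the cycle structure of $\sigma$.

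The crucial step is a pairing argument on the remaining permutations via the involution $\sigma \mapsto \sigma^{-1}$ on $S_n$. Applying the skew-symmetric identity $T(\sigma(j), j) = -T(j, \sigma(j))$ together with the reindexing $j = \sigma^{-1}(i)$ yields
\[
\prod_{i=1}^n T(i, \sigma^{-1}(i)) \;=\; (-1)^n \prod_{j=1}^n T(j, \sigma(j)),
\]
which in characteristic two collapses to $\prod_i T(i, \sigma^{-1}(i)) = \prod_j T(j, \sigma(j))$. Combined with $\operatorname{sgn}(\sigma) = \operatorname{sgn}(\sigma^{-1})$, this shows $\sigma$ and $\sigma^{-1}$ contribute identically. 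Now $\sigma = \sigma^{-1}$ iff $\sigma$ is an involution; together with the absence of fixed points this pins the surviving permutations down to the fixed-point-free involutions, which are in bijection with the perfect matchings of $V(G)$. Every other $\sigma$ comes in a genuine pair $\{\sigma, \sigma^{-1}\}$, and two identical contributions sum to $0$ in characteristic two.

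What remains is the sum
\[
\det \mathcal{T}(G) \;=\; \sum_{M} \; \prod_{\{i,j\} \in M} x_{ij}^{2},
\]
where $M$ ranges over the perfect matchings of $G$, using $T(i,j)T(j,i) = -x_{ij}^2 = x_{ij}^2$ and the fact that involutions using a non-edge of $G$ contribute zero. Each term is a distinct nonzero monomial in the indeterminates $x_{ij}$, so no further cancellation is possible, and the simplified expansion is identically zero if and only if $G$ has no perfect matching. The main obstacle I anticipate is the careful bookkeeping of the skew-symmetric identity that equates the contributions of $\sigma$ and $\sigma^{-1}$; once that reindexing is set up correctly, the characteristic-two setting collapses the rest of the argument, because the factor of $2$ that would obstruct the analogous clean statement over $\mathbb{Q}$ simply disappears.
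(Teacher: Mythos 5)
Your proof is correct and follows the same overall strategy the paper uses for its self-contained Lemma~\ref{lem:perfect_matching_determinant} (the proposition itself is only cited from Tutte~1947): expand $\det \mathcal{T}(G)$ as a permanent over a field of characteristic two, discard permutations with fixed points, cancel the remaining ``bad'' permutations in pairs via a monomial-preserving involution, and observe that the surviving fixed-point-free involutions are in bijection with perfect matchings of $G$ and yield pairwise distinct monomials, so no further cancellation occurs. The one genuine difference is the choice of pairing: you use the global involution $\sigma \mapsto \sigma^{-1}$, whose fixed points are exactly the involutive permutations, while the paper's bijection $\beta$ inverts only the cycle of length at least three containing the smallest such vertex (with respect to a fixed ordering of $V(G)$) and leaves every permutation outside $S_n^{\geq 3}$ untouched. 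Both maps are monomial-preserving because inverting a cycle of length $k$ multiplies the corresponding product by $(-1)^k$, which is $1$ in characteristic two, and both have as fixed points precisely the involutions among fixed-point-free permutations. Your global map is a little cleaner to state, since it requires no vertex ordering and makes the identification ``fixed points of the pairing $=$ involutions'' immediate; the paper's local map is the more standard sign-reversing-involution device and generalizes more readily to refined cancellation arguments, but for this particular proposition it buys nothing extra.
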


The following basic facts about the Tutte matrix \(\cT(G)\) of a
graph \(G\) are well-known. 
%When evaluated over any field of
%characteristic two, the determinant and the permanent of the
%matrix \(\cT(G)\) (indeed, of any matrix) coincide:
%
%\begin{equation}\label{eqn:determinant_permanent}
%\det{\cT(G)}=\textup{perm}(\cT(G))=\sum_{\sigma\in{}S_{n}}\prod_{i=1}^{n}\cT(G)(i,\sigma(i)),
%\end{equation}
%
%where \(S_{n}\) is the set of all \emph{permutations} of
%\([n]\). Moreover, 
When evaluated over any field of
characteristic two, the determinant and the permanent of the
matrix \(\cT(G)\) (indeed, of any matrix) coincide. That is, 
\begin{equation}\label{eqn:determinant_permanent}
\det{\cT(G)}=\textup{perm}(\cT(G))=\sum_{\sigma\in{}S_{n}}\prod_{i=1}^{n}\cT(G)(i,\sigma(i)),
\end{equation}

where \(S_{n}\) is the set of all \emph{permutations} of
\([n]\). Furthermore, there is a one-to-one correspondence between
the set of all \emph{perfect matchings} of the graph \(G\) and the
\emph{surviving monomials} in the above expression for
\(\det{\cT(G)}\) when its simplified expansion is computed over any
field of characteristic two. We formally state and give a proof of the latter fact for the sake of completeness and because we intend to use this particular formulation of it.

\begin{lemma}\label{lem:perfect_matching_determinant}
	  Let ${\cT(G)}$ and $\det{\cT(G)}$ be as defined above. Then the following statements hold.
\begin{enumerate}\item  If \(M=\{(i_{1},j_{1}),(i_{2},j_{2}),\dotsc,(i_{\ell},j_{\ell})\}\)
  is a perfect matching of a graph \(G\), then the product
  \(\prod_{(i_{k},j_{k})\in{}M}x_{i_{k}{}j_{k}}^2\) appears exactly once in the naive expansion and hence as a
  surviving monomial in the sum on the right-hand side of
  \autoref{eqn:determinant_permanent} when this sum is expanded
  and simplified over any field of characteristic two. 
  \item Conversely,
  each surviving monomial in a simplified expansion of this sum
  over a field of characteristic two must be of the form
  \(\prod_{(i_{k},j_{k})\in{}M}x_{i_{k}{}j_{k}}^2\) where 
  \(M=\{(i_{1},j_{1}),(i_{2},j_{2}),\dotsc,(i_{\ell},j_{\ell})\}\) is a perfect matching 
  of \(G\).
  \end{enumerate}
\end{lemma}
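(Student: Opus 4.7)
The plan is to expand $\det \cT(G)$ via the permanent expression in \eqref{eqn:determinant_permanent}, which holds because we work in characteristic $2$, and to analyse which permutations $\sigma \in S_n$ leave surviving monomials. Throughout, I will use that $\cT(G)(i,i)=0$ for every $i$, so any $\sigma$ with a fixed point contributes the zero product and can be discarded from the outset.

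For part (1), fix a perfect matching $M=\{(i_1,j_1),\dots,(i_\ell,j_\ell)\}$ with $i_k<j_k$, and let $\sigma_M$ be the involution that swaps $i_k$ with $j_k$ for each $k$. Then
\[
\prod_{i=1}^n \cT(G)(i,\sigma_M(i)) \;=\; \prod_{k=1}^{\ell}\cT(G)(i_k,j_k)\,\cT(G)(j_k,i_k) \;=\; \prod_{k=1}^{\ell} x_{i_k j_k}\cdot(-x_{i_k j_k}) \;=\; \prod_{k=1}^{\ell} x_{i_k j_k}^{2},
\]
where the last equality uses $-1=1$ in characteristic $2$. I will argue that $\sigma_M$ is the \emph{only} permutation in the sum producing this specific monomial: since each variable $x_{i_k j_k}$ appears with exponent exactly $2$, and the entry $\cT(G)(a,b)$ involves $x_{i_k j_k}$ only when $\{a,b\}=\{i_k,j_k\}$, any term yielding this monomial must pair $i_k$ with $j_k$ under $\sigma$. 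Moreover, different perfect matchings use different edge sets and hence produce distinct monomials, ruling out cross-cancellations between matchings.

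For part (2), the core of the argument is to show that any permutation $\sigma$ having at least one cycle of length $\geq 3$ contributes a monomial that cancels in the simplified expansion. I will define an involution $\sigma \mapsto \sigma^{*}$ on such permutations by reversing the unique cycle of length $\geq 3$ that contains the smallest element of $[n]$ lying in such a cycle. Because the reversed cycle has length $\geq 3$ we have $\sigma^{*}\neq \sigma$, and since the same cycle (as a set) remains distinguished after reversal, $(\sigma^{*})^{*}=\sigma$. For any cycle $C=(a_1,\dots,a_k)$ with $k\geq 3$ (indices taken mod $k$), iterated use of skew-symmetry gives
\[
\prod_{j=1}^{k} \cT(G)(a_j,a_{j+1}) \;=\; (-1)^{k}\prod_{j=1}^{k} \cT(G)(a_{j+1},a_j),
\]
so the contributions of $\sigma$ and $\sigma^{*}$ involve the same unordered multiset of variables and differ at most by a sign; in characteristic $2$ their sum vanishes. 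After removing these paired contributions, only permutations whose non-trivial cycles all have length exactly $2$ remain, and these are precisely the $\sigma_M$ arising from a perfect matching $M$ of $G$ as in part (1).

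The main obstacle I anticipate is keeping the cycle-reversal involution well-defined and fixed-point-free: one has to verify carefully that selecting the cycle via ``smallest element in a cycle of length $\geq 3$'' produces the \emph{same} distinguished cycle for $\sigma$ and $\sigma^{*}$ (so that the map is indeed an involution), and that reversal genuinely changes $\sigma$ (which is where the hypothesis $k \geq 3$ is essential, since reversing a transposition is the identity operation). Once this combinatorial bookkeeping is in place, parts (1) and (2) both drop out of the characteristic-$2$ simplification of the Leibniz/permanent formula.
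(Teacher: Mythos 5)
Your proposal is correct and follows essentially the same strategy as the paper's proof: both expand via the permanent formula, discard fixed-point permutations using $\cT(G)(i,i)=0$, and pair up permutations containing a cycle of length $\geq 3$ by reversing the distinguished cycle (chosen via the smallest participating vertex), so that the paired contributions coincide over characteristic two and thus cancel, leaving exactly the involutions that encode perfect matchings. The only cosmetic difference is your phrasing ``differ at most by a sign; in characteristic 2 their sum vanishes,'' which is a slightly indirect way of saying the two monomials are identical mod 2 and hence sum to zero, but the reasoning is sound.
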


\begin{proof}
%We use terminology from \cite{Wahlstrom13}. 
For the first statement, consider the permutation $\sigma\in S_n$ comprising precisely the 2-cycles $\{(i_{1},j_{1}),(i_{2},j_{2}),\dotsc,(i_{\ell},j_{\ell})\}$. The corresponding monomial given by the definition of ${\cT(G)}$ over a field of characteristic two is precisely \(\prod_{(i_{k},j_{k})\in{}M}x_{i_{k}{}j_{k}}^2\). For every other permutation $\sigma'\in S_n$, the corresponding monomial given by the definition of ${\cT(G)}$ contains at least one variable $x_{i_rj_r}$ where $i_r$ is not mapped to $j_r$ in $\sigma$. This implies that no other monomial in the naive expansion of  \autoref{eqn:determinant_permanent} is equal to \(\prod_{(i_{k},j_{k})\in{}M}x_{i_{k}{}j_{k}}^2\) even when considered over a field of characteristic two.  This completes the argument for the first statement.

We now consider the second statement.
First of all, since we only consider simple graphs, we have that for any permutation $\sigma\in S_n$ with a fixed point, the corresponding monomial is 0 since $x_{ii}=0$ for every $i\in |V(G)|$. 
Let $S_n^{\geq 3}$ denote the set of all permutations in $S_n$ with a cycle of length at least 3.
We now argue that for any permutation $\sigma\in S_n^{\geq 3}$, the corresponding monomial vanishes in the simplified expansion of \autoref{eqn:determinant_permanent}. 
 In order to do so, we give a bijection $\beta:S_n\to S_n$ such that (a) for every $\sigma\in S_n\setminus S_n^{\geq 3}$, $\beta(\sigma)=\sigma$, (b) for every $\sigma\in S_n$, $\beta(\beta(\sigma))=\sigma$, and (c) for every $\sigma\in S_n$, the monomials corresponding to $\sigma$ and $\beta(\sigma)$ are equal over any field of characteristic two. 
 
 We first define $\beta(\sigma)$ for a $\sigma\in S_n^{\geq 3}$ as follows. Note that we have  already fixed an ordering of the vertices of $G$. Let $v$ be the first vertex of $G$ in this ordering which appears in a cycle of length at least 3 in $\sigma$ and let $C$ denote this cycle.  We now define $\beta(\sigma)$ to be the permutation obtained from $\sigma$ by inverting $C$ and leaving every other cycle unchanged. Finally, for every $\sigma\in S_n\setminus S_n^{\geq 3}$, simply set $\beta(\sigma)=\sigma$.

 It is straightforward to see that the resulting mapping $\beta$ is indeed a bijection and moreover, $\beta(\beta(\sigma))=\sigma$ for every $\sigma\in S_n$ as required. Finally, it follows from the definition of $\det{\cT(G)}$ that over a field of characteristic two, the factor of the monomial corresponding to $\sigma$ contributed by any cycle $C$ is the same as that contributed by the inverse of this cycle to the monomial corresponding to $\beta(\sigma)$. Hence we have the third property and conclude that for any permutation $\sigma\in S_n^{\geq 3}$, the corresponding monomial vanishes in the simplified expansion of \autoref{eqn:determinant_permanent}. This implies that the only surviving monomials are those corresponding to permutations in $S_n\setminus S_n^{\geq 3}$ without a fixed point, implying that these permutations comprise only 2-cycles. This in turn implies that any such surviving monomial must correspond to a perfect matching of $G$ as required. 
 This completes the proof of the lemma.
	\qed
\end{proof}

\begin{lemma}[Schwartz-Zippel Lemma, \cite{Schwartz80,Zippel79}]
Let $P(x_1,\dots,x_n)$ be a multivariate polynomial of degree at most $d$ over a field ${\mathbb F}$ such that $P$ is not identically zero. Furthermore, let $r_1,\dots, r_n$ be chosen uniformly at random from ${\mathbb F}$. Then, $$Prob[P(r_1,\dots,r_n)=0]\leq \frac{d}{\vert {\mathbb F}\vert}.$$
\end{lemma}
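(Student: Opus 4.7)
The plan is to prove the Schwartz--Zippel Lemma by induction on the number $n$ of variables. For the base case $n=1$, a nonzero univariate polynomial of degree at most $d$ has at most $d$ roots in a field (this is the standard consequence of polynomial division / unique factorization over $\mathbb{F}[x]$), so a uniformly random element of $\mathbb{F}$ is a root with probability at most $d/|\mathbb{F}|$.

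For the inductive step, assume the statement holds for $n-1$ variables and consider a nonzero polynomial $P(x_1,\dots,x_n)$ of total degree at most $d$. I would write $P$ as a polynomial in the distinguished variable $x_1$ with coefficients in $\mathbb{F}[x_2,\dots,x_n]$, namely
\[
P(x_1,\dots,x_n) = \sum_{i=0}^{k} x_1^{i}\, Q_i(x_2,\dots,x_n),
\]
where $k$ is the largest exponent of $x_1$ actually appearing in $P$, so $Q_k$ is not identically zero. Since each term $x_1^i Q_i$ contributes to the total degree, the total degree of $Q_k$ is at most $d-k$.

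The heart of the argument is a case split based on whether $Q_k$ vanishes at $(r_2,\dots,r_n)$. Let $E$ be the event that $P(r_1,\dots,r_n)=0$, and let $F$ be the event that $Q_k(r_2,\dots,r_n)=0$. By the induction hypothesis applied to $Q_k$ (which has $n-1$ variables and total degree at most $d-k$), we get $\Pr[F]\leq (d-k)/|\mathbb{F}|$. Conditional on $\neg F$, the polynomial $P(x_1,r_2,\dots,r_n)$ is a nonzero univariate polynomial in $x_1$ of degree exactly $k$, so by the base case $\Pr[E \mid \neg F]\leq k/|\mathbb{F}|$. Combining via the law of total probability,
\[
\Pr[E] \;\leq\; \Pr[F] + \Pr[E\mid \neg F] \;\leq\; \frac{d-k}{|\mathbb{F}|} + \frac{k}{|\mathbb{F}|} \;=\; \frac{d}{|\mathbb{F}|},
\]
which completes the induction.

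The only subtle step I foresee is justifying that $P(x_1,r_2,\dots,r_n)$ really has degree $k$ in $x_1$ when $Q_k(r_2,\dots,r_n)\neq 0$; this is immediate from how $k$ was chosen as the largest exponent with a nonzero coefficient polynomial, so no extra work is needed. Everything else is a clean two-term union bound, and since the proof is a purely formal fact about polynomials over a field it applies verbatim to any field $\mathbb{F}$, which is exactly the generality in which the lemma is invoked in the randomized algorithm for \textsc{$\exists$-Partition Connector}.
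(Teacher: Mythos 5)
The paper cites the Schwartz--Zippel Lemma as a known result from Schwartz and Zippel and does not include a proof of its own, so there is no internal argument to compare against. Your proof is the standard textbook induction on the number of variables and it is correct: the base case uses the fact that a nonzero univariate polynomial of degree at most $d$ over a field has at most $d$ roots, and the inductive step correctly isolates the leading coefficient $Q_k$ in $x_1$, applies the induction hypothesis to bound $\Pr[Q_k(r_2,\dots,r_n)=0]$ by $(d-k)/|\mathbb{F}|$, uses the base case conditionally on $Q_k(r_2,\dots,r_n)\neq 0$ to get $k/|\mathbb{F}|$, and combines the two via the inequality $\Pr[E]\leq \Pr[F]+\Pr[E\mid\neg F]$. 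The one implicit assumption you rely on is that $r_1,\dots,r_n$ are sampled independently, which is the intended reading of ``chosen uniformly at random'' and matches how the lemma is invoked in the algorithm for \textsc{$\exists$-Partition Connector}. Your remark that the degree of $P(x_1,r_2,\dots,r_n)$ in $x_1$ is exactly $k$ when $Q_k$ does not vanish is precisely the point that makes the conditional step sound, so it is good that you flagged it. No gaps.
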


\begin{definition}\label{def:monomial}
For a partition of $V(G)$, $\cQ=\{Q_1,\dots, Q_\ell\}$ and a subset $I\subseteq [\ell]$, we denote by $\cQ(I)$ the set $\bigcup_{i\in I}Q_i$. Furthermore, 
  with every set $\emptyset\neq I\subset [\ell]$, we associate a specific
  monomial $m_I$ which is defined to be the product of the  terms
  $x_{ij}^2$ where $i<j$ and $\{i,j\}=\{v_e,w_e\}$,  $e=(v,w)\in E(G)$ crosses the cut $(\cQ(I),\overline{\cQ(I)})$ and  \(v_{e},w_{e},\) are as
  in Definition~\ref{def:f_blowup} of the 
  \(f\)-blowup ${\cal B}(G)$ of \(G\). For $I=[\ell]$, we define $m_I=1$.
  \end{definition}

From now on, for a set $X\subseteq V(G)$, we denote by $\overline{X}$ the set $V(G)\setminus X$. Also, since we always deal with a fixed graph $G$ and function $f$,  for the sake of notational convenience, we refer to the graph $\cB_f(G)$ simply as $\cB$.
We now define a
polynomial $P_\cQ(\bar x)$ over the indeterminates from the 
Tutte matrix ${\cal T}({\cal B})$ of the \(f\)-blowup of \(G\), as
follows:
 
%  \begin{equation}\label{eqn:connectivity_polynomial_editing}
%    P_\cQ(\bar{x})=\sum_{S\subseteq{}V(G)\;;\; v^{\star}\in{}S;\;S is union of...}
%    (\det{{\cal T}({\cal E}(G)[S])})\cdot (\det{{\cal T}({\cal E}(G)[V(G)\setminus{}S])}) \cdot m_S,
%  \end{equation} 
  \begin{equation}\label{eqn:connectivity_polynomial_editing}
    P_\cQ(\bar{x})=\sum_{\{1\}\subseteq I\subseteq{}[\ell]\; }
    (\det{{\cal T}({\cal B}[\cQ(I)])})\cdot (\det{{\cal T}({\cal B}[\overline{\cQ(I)}])}) \cdot m_{I},
  \end{equation} 
  where if a graph \(H\) has no vertices or edges then we set
  \(\det{{\cal T}(H)}=1\).   
%  In the sequel we use {$\mathbb F$} to denote an
%  arbitrary field of characteristic two and size $\Omega(n^6)$. 
  In what follows, we always deal with a fixed partition $\cQ=\{Q_1,\dots, Q_\ell\}$ of $V(G)$.

  \begin{remark} The definition of the polynomial $P_\cQ(\bar x)$ is  the main difference between our algorithm and the algorithm in \cite{PhilipR14}. The rest of the details are identical.  The main algorithmic consequence of this difference is the time it takes to evaluate this polynomial at a given set of points. This is captured in the following lemma whose proof follows from the fact that determinant computation is a polynomial time solvable problem. \end{remark}

\begin{lemma}\label{lem:evaluation_fast}
	Given values for the variables $x_{ij}$ in matrix $\cT(\cB)$, the polynomial $P_\cQ(\bar x)$ can be evaluated over a field $\mathbb F$ of character 2 and size $\Omega(n^6)$ in time $\bigoh^{*}(2^\ell)$.
\end{lemma}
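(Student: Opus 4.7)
The plan is to directly evaluate each summand in the definition of $P_\cQ(\bar x)$ and then sum over all $I$. Since the outer sum in Equation~\ref{eqn:connectivity_polynomial_editing} ranges over subsets $I\subseteq [\ell]$ that are required to contain the element $1$, the number of summands is exactly $2^{\ell-1}$. Hence it suffices to show that the contribution of each fixed $I$ can be computed in polynomial time, after which we add the $2^{\ell-1}$ values. For a fixed $I$, the three quantities we need are $\det \cT(\cB[\cQ(I)])$, $\det \cT(\cB[\overline{\cQ(I)}])$, and $m_I$, all evaluated at the prescribed values $x_{ij}$ in the field $\mathbb{F}$.

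First I would bound the sizes of the relevant objects. By Definition~\ref{def:f_blowup}, the blowup $\cB$ has $\sum_{v\in V(G)} f(v) + 2|E(G)| = \bigoh(n^2)$ vertices, so the induced blowups $\cB[\cQ(I)]$ and $\cB[\overline{\cQ(I)}]$ have at most $N=\bigoh(n^2)$ vertices and their Tutte matrices are $N\times N$. These induced blowups can be constructed in polynomial time by reading off Definition~\ref{def:f_blowup} and deleting the appropriate gadget vertices for edges crossing the cut, exactly as in the definition of the induced $f$-blowup. Since we are given the values of the indeterminates $x_{ij}$, substituting them yields a numerical matrix over $\mathbb F$.

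Next I would compute the two determinants over $\mathbb F$ by Gaussian elimination, which uses $\bigoh(N^3)=\bigoh(n^6)$ field operations. Because $|\mathbb F|=\Omega(n^6)$, every element of $\mathbb F$ can be encoded in $\bigoh(\log n)$ bits, so each field operation (addition, multiplication, inversion) runs in time polynomial in $\log n$; hence each determinant evaluation runs in polynomial time. The monomial $m_I$ is a product of at most $|E(G)|\leq \binom{n}{2}$ factors of the form $x_{v_e w_e}^2$ over the edges $e=(v,w)$ of $G$ that cross the cut $(\cQ(I),\overline{\cQ(I)})$, and these edges can be enumerated by scanning $E(G)$ once; so evaluating $m_I$ at the given values is also polynomial time.

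Multiplying the three values and summing over the $2^{\ell-1}$ subsets $I$ therefore yields a total running time of $2^{\ell-1}\cdot n^{\bigoh(1)} = \bigoh^{*}(2^\ell)$, as claimed. There is no real technical obstacle here: the lemma reduces to the polynomial-time computability of determinants plus the observation that the outer sum has $\bigoh(2^\ell)$ terms. The only points that need to be checked are the $\bigoh(n^2)$ size bound on the blowup and the $\bigoh(\log n)$ bit-length of field elements, both of which are routine given that $|\mathbb F|=\Omega(n^6)$.
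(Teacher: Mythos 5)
Your proof takes exactly the same route as the paper's: iterate over the $\bigoh(2^\ell)$ subsets $I$ containing $1$, evaluate the two determinants and the monomial $m_I$ for each $I$ in polynomial time, and sum. You simply fill in more of the routine bookkeeping (the $\bigoh(n^2)$ size of the blowup, the bit-length of field elements, Gaussian elimination) that the paper leaves implicit.
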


\begin{proof}
The algorithm to evaluate $P_\cQ(\bar x)$ over the field $\mathbb F$ proceeds as follows. Given the values for the variables $x_{ij}$ in the matrix $\cT(\cB)$, we go over all $\{1\}\subseteq I\subseteq{}[\ell]$ and for each $I$, we evaluate $\det{{\cal T}({\cal B}[\cQ(I)])}$ and $\det{{\cal T}({\cal B}[\overline{\cQ(I)}])}$ in polynomial time via standard polynomial time determinant computation. Once this value is computed, we multiply their product with the evaluation of the monomial $m_I$. Since we go over $2^\ell$ possible sets $I$ and for each $I$ the computation takes polynomial time, the claimed running time follows. \qed
\end{proof}
Having shown that this polynomial can be efficiently evaluated, we will now turn to the way we use it in our algorithm. Our algorithm for {\sc $\exists$-Partition Connector} takes as input $G,f,\cQ$, evaluates the polynomial $P_\cQ(\bar x)$ at points chosen independently and uniformly at random from a field $\mathbb F$ of size $\Omega(n^6)$ and characteristic 2 and returns {\sc Yes} if and only if the polynomial does not vanish at the chosen points. In what follows we will prove certain properties of this polynomial which will be used in the formal proof of correctness of this algorithm.
%
%\begin{definition}
%	Let $H$ be an $f$-factor of $G$. Then, we denote by $H(\cQ)$ the graph defined as follows. The vertex set of $H(\cQ)$ is $\{h_1,\dots, h_\ell\}$ and there is an edge between $h_i$ and $h_j$ if and only if there is an edge in $H$ with one endpoint in $Q_i$ and one in $Q_j$.
%\end{definition}
%
%We make the following observation regarding the connectivity of $H(\cQ)$.
%\begin{observation}\label{obs:connectivity} An $f$-factor $H$ connects $\cQ$ if and only if $H(\cQ)$ is connected.
%	\end{observation}
%
We need another definition before we can state the main lemma capturing the properties of the polynomial. 
Recall that for every $v\in V(G)$, the set $A(v)$ is the set of `copies' of $v$ in the $f$-blowup of $G$. Furthermore, for a set $X\subseteq V(G)$, we say that an edge $e\in E(G)$ \emph{crosses} the cut $(X,\overline{X})$ if $e$ has exactly one endpoint in $X$.

\begin{definition}\label{def:monomial_editing} 
  We say that an $f$-factor $H$ of $G$ \emph{contributes} a monomial $x_{i_1j_1}^2\dots
  x_{i_rj_r}^2$ to the na{i}ve expansion of the right-hand side of
  \autoref{eqn:connectivity_polynomial_editing} if and only if the
  following conditions hold. 
  \begin{enumerate}
  \item For every $e=(v,w)\in E(H)$, there is a $u\in A(v)$, $u'\in A(w)$ and
    $1\leq p,q\leq r$ such that $\{u,v_e\}=\{i_p,j_p\}$ and
    $\{u',w_e\}=\{i_q,j_q\}$.

  \item For every $e=(v,w)\in E(G)\setminus E(H)$, there is a $1\leq p\leq r$ such
    that $\{v_e,w_e\}=\{i_p,j_p\}$.

  \item For every $1\leq p,q\leq r$, if $\{u,v_e\}=\{i_p,j_p\}$
    and $\{u',w_e\}=\{i_q,j_q\}$ for some $e\in E(G)$, then $e\in E(H)$.

  \item For every $1\leq p\leq r$, if $\{i_p,j_p\}=\{v_e,w_e\}$
    for some $e\in E(G)$, then $e\notin E(H)$.

  \item For every $1\in I\subseteq [\ell]$ such that $H$ has no edge crossing the cut $(\cQ(I),\overline {\cQ(I)})$, there is a pair of monomials $m_1$
    and $m_2$ such that $m_1$ is a surviving monomial in the
    simplified expansion of $ \det{\cT(\cB[\cQ(I)])}$, $m_2$ is a
    surviving monomial in the simplified expansion of
    $\det{\cT(\cB[\overline{\cQ(I)}])} $, and $m_1\cdot m_2\cdot
    m_I= x_{i_1j_1}^2\dots x_{i_rj_r}^2$.

\end{enumerate}

\end{definition}

Having set up the required notation, we now state the main lemma
which allows us to show that monomials contributed by
$f$-factors that do not connect $\cQ$, do not survive in the simplified
expansion of the right hand side
of~\autoref{eqn:connectivity_polynomial_editing}.

\begin{lemma}
\label{lem:monomial_editing}\label{lem:disconnected_factors_appear_even_times_editing} Every monomial in the polynomial $P_\cQ(\bar x)$ which is a surviving monomial in the simplified expansion of the right-hand side of \autoref{eqn:connectivity_polynomial_editing} is contributed by an $f$-factor of $G$ to the naive expansion of the right-hand size of  \autoref{eqn:connectivity_polynomial_editing}. Furthermore, for any $f$-factor  of $G$, say $H$, the following statements hold.
  \begin{enumerate}
  \item If
  \(H\) does not connect $\cQ$ then
  every monomial contributed by $H$  occurs an \emph{even} number of
  times in the polynomial $P_\cQ(\bar x)$ in the na{i}ve expansion of the right-hand side of \autoref{eqn:connectivity_polynomial_editing}.
  
  \item If \(H\)  connects $\cQ$, then every monomial contributed by $H$ occurs exactly once in the polynomial $P_\cQ(\bar x)$ in the na{i}ve expansion of the right-hand side of \autoref{eqn:connectivity_polynomial_editing}.

  \end{enumerate}
\end{lemma}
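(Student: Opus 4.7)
The plan is to establish all three claims by combining the characteristic-two matching/determinant correspondence (Lemmas \ref{lem:perfect_matching_determinant} and \ref{lem:matching_equivalence}) with a clean combinatorial counting argument over the indices $I$ in the sum \autoref{eqn:connectivity_polynomial_editing}. For the first claim I would start from a surviving monomial $\mu$ in the simplified expansion of $P_\cQ(\bar x)$ and pick any index $I$ whose term contributes $\mu$. Inside that term I would apply Lemma \ref{lem:perfect_matching_determinant} independently to $\det{\cT(\cB[\cQ(I)])}$ and $\det{\cT(\cB[\overline{\cQ(I)}])}$, so that the pieces of $\mu$ arising from these two determinants are identified with perfect matchings of $\cB[\cQ(I)]$ and $\cB[\overline{\cQ(I)}]$ respectively. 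Together with the $(v_e,w_e)$ pairs that $m_I$ records for every edge crossing the cut $(\cQ(I),\overline{\cQ(I)})$, these matchings combine into a perfect matching of the full $f$-blowup $\cB$, which by Lemma \ref{lem:matching_equivalence} corresponds to an $f$-factor $H$ of $G$; by construction $H$ has no edge crossing the chosen cut, and $\mu$ is contributed by $H$ in the sense of Definition \ref{def:monomial_editing}.

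For parts (1) and (2), I would fix an $f$-factor $H$ together with a monomial $\mu$ contributed by $H$ and count the indices $I \ni 1$ for which $\mu$ appears in the corresponding term of \autoref{eqn:connectivity_polynomial_editing}. By conditions (1)--(4) of Definition \ref{def:monomial_editing}, $\mu$ encodes the edges of $H$ via $x_{uv_e}^2$-type factors and the non-edges of $H$ via $x_{v_ew_e}^2$-factors, so $m_I$ can divide $\mu$ only when every edge crossing $(\cQ(I),\overline{\cQ(I)})$ is absent from $H$. This is equivalent to $\cQ(I)$ being the union of vertex sets of some collection of connected components of the quotient graph $H/\cQ$, with the component containing $Q_1$ being forced. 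For each such valid $I$, the restrictions of $\mu$ to the variable sets of $\cB[\cQ(I)]$ and $\cB[\overline{\cQ(I)}]$ uniquely determine perfect matchings of the two induced $f$-blowups by Lemma \ref{lem:perfect_matching_determinant}, so $\mu$ appears exactly once in the naive expansion of the term indexed by $I$.

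The remaining step is a pure counting argument: letting $k$ denote the number of connected components of $H/\cQ$ and $C^{*}$ the one containing $Q_1$, valid indices $I$ correspond bijectively to subsets of the other $k-1$ components, giving $2^{k-1}$ occurrences of $\mu$ in the naive expansion. When $H$ connects $\cQ$ we have $k=1$, so $\mu$ appears exactly once and part (2) follows. When $H$ does not connect $\cQ$ we have $k \geq 2$, so $2^{k-1}$ is even and part (1) follows. The main obstacle I anticipate is the careful bookkeeping in the first stage: the cycle-reversal argument underlying Lemma \ref{lem:perfect_matching_determinant} operates within a single determinant, so one must verify that non-matching permutation contributions from the two determinants do not interact in unwanted ways once they are multiplied together and further multiplied by $m_I$, so that every surviving monomial in $P_\cQ(\bar x)$ genuinely arises from a pair of matching-type contributions and hence from a true $f$-factor of $G$.
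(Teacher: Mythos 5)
Your proposal matches the paper's proof in essence: recover a perfect matching of $\cB$ from a surviving monomial via Lemma~\ref{lem:perfect_matching_determinant} applied inside a single term (using the disjointness of the variable sets of the two determinants and of $m_I$, which you correctly flag as the point requiring care), translate it to an $f$-factor $H$ via Lemma~\ref{lem:matching_equivalence}, and then count the valid indices $I$ by identifying them with unions of components of $H/\cQ$ with $Q_1$'s component forced. Your count $2^{k-1}$ is in fact the precise one; the paper's text writes $2^{\alpha}$ at the analogous step, but since the sum only ranges over $I\ni 1$ the component containing $Q_1$ is forced, so $2^{\alpha-1}$ is correct --- a harmless discrepancy since both are even for $\alpha\ge 2$, and the paper separately handles $\alpha=1$ by noting $I=[\ell]$ gives the single contribution.
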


\begin{proof}
For the first statement, let $m$ be a monomial which survives in the simplified expansion of the right-hand side of \autoref{eqn:connectivity_polynomial_editing}. Then it must be of the form $x_{i_1j_1}^2\dots
  x_{i_rj_r}^2$ and must correspond to a perfect matching of $\cT(\cB)$. This is a direct consequence of Lemma \ref{lem:perfect_matching_determinant}~(2). Let $M$ be this perfect matching. We now define an $f$-factor $H$ based on $M$ and argue that $H$ indeed contributes this monomial $m$ to the naive expansion of the right-hand size of  \autoref{eqn:connectivity_polynomial_editing} as per Definition \ref{def:monomial_editing}. The $f$-factor $H$ is defined as follows. An edge $(v,w)\in E(G)$ is in $H$ if and only if the edge $(v_e,w_e)\notin M$. We now argue that $H$ contributes $m$. 
  
  Consider the first condition in Definition \ref{def:monomial_editing}. Since $e=(v,w)\in E(H)$, it must be the case that $(v_e,w_e)\notin M$. Since $M$ is a perfect matching and the vertices $v_e$ and $w_e$ each have exactly one neighbor other than each other, it must be the case that $M$ contains edges $e_1$ and $e_2$ where $e_1=(u,v_e)$ for some $u\in A(v)$ and $e_2=(u',w_e)$ for some $u'\in A(w)$. The fact that the second condition is satisfied follows directly from the definition of $H$. For the third condition, suppose that for some $1\leq p,q\leq r$, and $e=(u,v)\in E(G)$, it holds that $\{u,v_e\}=\{i_p,j_p\}$ and $\{u',w_e\}=\{i_q,j_q\}$. The fact that $M$ corresponds to $m$ implies that the edges $(u,v_e)$ and $(u',w_e)$ are in $M$, which in turn implies that the edge $(v_e,w_e)$ is \emph{not} in $M$. Hence, by definition of $H$, we conclude that $e\in E(H)$. An analogous argument implies that the fourth condition is satisfied as well. We now come to the final condition. Suppose that $1\in I\subseteq[\ell]$ such that $H$ has no edge crossing the cut $(\cQ(I),\overline{\cQ(I)})$.  Now, observe that for every $(v,w)\in E(G)$ which crosses the cut $(\cQ(I),\overline{\cQ(I)})$ the edge $e\notin E(H)$, which by definition implies that  $(v_e,w_e)\in M$. We define $\hat M$ to be the subset of edges $(v_e,w_e)\in M$ which cross the cut $(\cQ(I),\overline{\cQ(I)})$. Hence, for every edge $(v_e,w_e)$ in $M\setminus \hat M$, the vertices $v$ and $w$ lie on the same side of the cut $(\cQ(I),\overline{\cQ(I)})$. We now define a partition $M'\uplus M''$ of $M\setminus \hat M$ as follows. For $v\in V(G)$ and $u\in V(\cB)$, an edge $(u,v_e)\in M$ is in $M'$ if and only if $v\in \cQ(I)$. Clearly, $M'\uplus M''\uplus \hat M$ is now a partition of $M$. Furthermore, it is easy to see that $M'$ is a perfect matching of $\cT(\cB[\cQ(I)])$, $M''$ is a perfect matching of $\cT(\cB[\overline{\cQ(I)]})$. 
   
  Due to Proposition \ref{prop:perfect_matching_determinant}, we know that $M'$ corresponds to a surviving monomial $m'$ in the simplified expansion of $\det \cT(\cB[\cQ(I)])$ and $M''$ corresponds to a surviving monomial $m''$ in the simplified expansion of $\det \cT(\cB[\overline{\cQ(I)}])$. Finally, let $\hat m$ denote the monomial \(\prod_{(i_{k},j_{k})\in{}\hat M}x_{i_{k}{}j_{k}}^2\). It is easy to see that $m=m'\cdot m''\cdot \hat m$. Furthermore, $\hat m=m_I$. Hence we conclude that $m$ is indeed contributed by $H$ and proceed to the remaining two statements of the lemma.  However, before we prove the remaining statements, we  need the following claim.
 
 \begin{claim} Let $1\in I\subseteq [\ell]$.
 \begin{enumerate}
 \item If there is no edge of $H$ crossing the cut $(\cQ(I),\overline{\cQ(I)})$, then each monomial contributed by  $H$ to the naive expansion of the polynomial
   $\det \cT(\cB[\cQ(I)])\cdot \det \cT(\cB[\overline{\cQ(I)}])
   \cdot m_I$ is contributed exactly once.
 
 \item If  there is an edge of $H$ crossing the cut $(\cQ(I),\overline{\cQ(I)})$ then $H$  does not
   contribute a monomial to the naive expansion of the polynomial $\det
   \cT(\cB[\cQ(I)])\cdot \det \cT(\cB[\overline{\cQ(I)}])\cdot
   m_I$.
   \end{enumerate}

 \end{claim}
 
 \begin{proof}
 We begin with the proof of the first statement. By Definition \ref{def:monomial_editing} it holds that every monomial contributed by $H$ contains $m_I$. Let $H'$ be the subgraph of $H$ induced on $\cQ(I)$ and let $H''$ be the subgraph of $H$ induced on $\overline{\cQ(I)}$. Observe that $H'$ is an $f$-factor of $G[\cQ(I)]$ and $H''$ is an $f$-factor of $G[\overline{\cQ(I)}]$. By Proposition \ref{prop:perfect_matching_determinant} and Lemma \ref{lem:matching_equivalence}, we know that every $f$-factor of $G[\cQ(I)]$ ($G[\overline{\cQ(I)}]$) appears exactly once in the naive expansion of $\det \cT(\cB[\cQ(I)])$ ($\det \cT(\cB[\overline{\cQ(I)}])$) (since it is nothing but a perfect matching of the $f$-blowup induced by $\cQ(I)$ or $\overline{\cQ(I)}$). 
 
 Therefore, each monomial corresponding to a perfect matching of $\cB[\cQ(I)]$ which is equivalent to $H'$ appears exactly once in the naive expansion of the polynomial $\det \cT(\cB[{\cQ(I)}])$; similarly, each monomial corresponding to a perfect matching of $\cB[\overline{\cQ(I)}]$ which is equivalent to $H''$ appears exactly once in the naive expansion of $\det \cT(\cB[\overline{\cQ(I)}])$. Since every 
  monomial contributed by $H$ to the naive expansion of $\det \cT(\cB[\cQ(I)])\cdot \det 
  \cT(\cB[\overline{\cQ(I)}]) \cdot m_I$ is a product of $m_I$ and a monomial each from $\det \cT(\cB[{\cQ(I)}])$ and $\det \cT(\cB[\overline{\cQ(I)}])$, and these monomials themselves occur exactly once in the naive expansion of $\det \cT(\cB[{\cQ(I)}])$ and $\det \cT(\cB[\overline{\cQ(I)}])$ respectively, the first statement follows.

 We now prove the second statement of the claim. 
 Here, there must be vertices
 $v,w\in V(G)$ such that
 $v\in \cQ(I)$, $w\in \overline{\cQ(I)}$ and $(v,w)\in H$. Therefore, by
 Definition \ref{def:monomial_editing}, we have that no monomial
 contributed by $H$ has the term $x_{jk}^2$ where
 $\{j,k\}=\{v_e,w_e\}$. However, $m_I$
 contains the term $x_{jk}^2$ by definition. Therefore,
 $H$ does not contribute a monomial to $\det
 \cT(\cB[{\cQ(I)}])\cdot \det \cT(\cB[\overline{\cQ(I)}])\cdot
 m_I$. This completes the proof of the claim.
\qed \end{proof}

Let $\alpha$ be the number of connected components of the graph $H/\cQ$. If $H$ is an $f$-factor of $G$ that does not connect $\cQ$ it must be the case that $\alpha>1$. Due to the above claim, observe that there are exactly $2^\alpha$ sets $I$ such that $H$ contributes each of its monomials exactly once to the simplified expansion of the right hand side
of~\autoref{eqn:connectivity_polynomial_editing} and 
   $H$ does not contributes any monomials to any other sets $I$. Since $2^\alpha$ is even for $\alpha\geq 1$, we conclude that Statement 1 holds.

 We now move on to Statement 2. That is, we assume that $H$ is an $f$-factor that connects $\cQ$. Due to the above claim, we know that $H$ does not contribute a monomial to any polynomial $\det \cT(\cB[\cQ(I)])\cdot \det \cT(\cB[\overline{\cQ(I)}]) \cdot m_I$ where $1\in I\subset [\ell]$ is such that $H$ has an edge which crosses the cut $(\cQ(I),\overline{\cQ(I)})$. However, since $H$ connects $\cQ$, it crosses 
     \emph{every} $(\cQ(I),\overline{\cQ(I)})$ cut where $1\in I\subset [\ell]$. But observe that since $H$ is an $f$-factor of $G$ it will contribute a monomial to the polynomial $\det \cT(\cB[\cQ(I)])\cdot \det \cT(\cB[\overline{\cQ(I)}]) \cdot m_I$ when $I=[\ell]$. Hence, we conclude that any monomial contributed by $H$ occurs exactly once in the na{i}ve expansion of the right-hand side of \autoref{eqn:connectivity_polynomial_editing}, completing the proof of the lemma. 
\qed \end{proof}

This implies the following result, which is the last ingredient we need to prove Lemma~\ref{lem:existfactor}.

\begin{lemma}\label{lem:non-zero_coefficient} The polynomial $P_\cQ(\bar x)$ is not identically zero over ${\mathbb F}$ if and only if $G$ has an $f$-factor connecting $\cQ$.
\end{lemma}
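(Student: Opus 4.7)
The plan is to leverage Lemma~\ref{lem:monomial_editing} together with the standard bijection between perfect matchings of the $f$-blowup $\cB$ and $f$-factors of $G$ (Lemma~\ref{lem:matching_equivalence}). Since $\mathbb F$ has characteristic two, a monomial survives the simplified expansion of the right-hand side of~\eqref{eqn:connectivity_polynomial_editing} precisely when its multiplicity in the naive expansion is odd, so what we need is to account for these multiplicities on a per-$f$-factor basis and then argue there is no cross-cancellation between different $f$-factors.

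For the $(\Leftarrow)$ direction, assume $H$ is an $f$-factor of $G$ connecting $\cQ$, and fix any monomial $m$ contributed by $H$ in the sense of Definition~\ref{def:monomial_editing}. By Lemma~\ref{lem:monomial_editing}(2), $m$ appears exactly once in the naive expansion of~\eqref{eqn:connectivity_polynomial_editing} among the contributions of $H$. To conclude that $m$ survives simplification, I still need to rule out the possibility that some other $f$-factor $H' \neq H$ also contributes $m$: a monomial of the form $\prod_k x_{i_kj_k}^2$ determines the unordered edge set $\{\{i_k,j_k\}\}$ and hence a unique perfect matching of $\cB$, which in turn determines a unique $f$-factor of $G$ via the correspondence in Lemma~\ref{lem:matching_equivalence} (take $(v,w)\in E(G)$ to lie in the factor precisely when $(v_e,w_e)$ is \emph{not} in the matching). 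Thus no $f$-factor other than $H$ can contribute $m$, so $m$ has multiplicity one in the full naive expansion and survives, witnessing that $P_\cQ$ is not identically zero.

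For the $(\Rightarrow)$ direction I argue the contrapositive: suppose $G$ has no $f$-factor connecting $\cQ$. By the first statement of Lemma~\ref{lem:monomial_editing}, any surviving monomial $m$ in the simplified expansion must be contributed by some $f$-factor $H$ of $G$; by hypothesis $H$ does not connect $\cQ$, so Lemma~\ref{lem:monomial_editing}(1) says $m$ appears an even number of times in the naive expansion among contributions of $H$. The same uniqueness observation as above shows $H$ is the only $f$-factor that can contribute $m$, so the total multiplicity of $m$ in the naive expansion is even. In characteristic two this forces $m$ to cancel, contradicting $m$ being a surviving monomial. Hence $P_\cQ$ has no surviving monomials and is identically zero.

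The only nontrivial step is the uniqueness-of-contributor argument: without it, odd contributions from a non-connecting $f$-factor could combine with contributions from another non-connecting $f$-factor to leave a surviving monomial, breaking the backward direction. Everything else is bookkeeping atop Lemma~\ref{lem:monomial_editing}, so I expect the write-up itself to be short once the uniqueness remark is properly stated.
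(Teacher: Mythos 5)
Your proof is correct and takes essentially the same route as the paper, which leaves Lemma~\ref{lem:non-zero_coefficient} unproven beyond the assertion that it follows from Lemma~\ref{lem:monomial_editing}; you simply spell out that implication. Your uniqueness-of-contributor observation is the right thing to verify: since each surviving monomial $\prod_k x_{i_kj_k}^2$ encodes an edge set that is a perfect matching of $\cB$, and conditions (1)--(4) of Definition~\ref{def:monomial_editing} pin down a unique $f$-factor from that matching, there is exactly one possible contributing $H$ per monomial, so the parities from Lemma~\ref{lem:monomial_editing} translate directly into survival/cancellation over characteristic two. One tiny point worth making explicit in the $(\Leftarrow)$ direction: an $f$-factor $H$ connecting $\cQ$ does contribute at least one monomial, namely the one corresponding to any perfect matching of $\cB$ that Lemma~\ref{lem:matching_equivalence} associates to $H$ (condition (5) is vacuous since $I=[\ell]$ is then the only set with no crossing $H$-edge), so the forward implication is not vacuously true.
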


\begin{proof}[Lemma \ref{lem:existfactor}] It follows from the definition of $P(\bar x)$ that its degree is $\bigoh(n^4)$ since  the number of vertices in the $f$-blowup of $G$ is $\bigoh(n^2)$. As mentioned earlier, our algorithm for {\sc $\exists$-Partition Connector} takes as input $G,f,\cQ$, evaluates the polynomial $P_\cQ(\bar x)$ at points chosen independently and uniformly at random from a field $\mathbb F$ of size $\Omega(n^6)$ and characteristic 2 and returns {\sc Yes} if and only if the polynomial does not vanish at the chosen points. Due to Lemma \ref{lem:non-zero_coefficient}, we know that the polynomial $P_\cQ(\bar x)$ is identically zero if and only if $G$ has an $f$-factor containing $\cQ$ and by the Schwartz-Zippel Lemma, the probability that the polynomial is not identically zero and still vanishes upon evaluation is at most $\frac{1}{n^2}$. This completes the proof of the lemma.
\qed \end{proof}

Having obtained the algorithm for \textsc{$\exists$-Partition Connector}, we now return to the algorithm for the computational version, \textsc{Partition Connector}.

\subsection{Solving \textsc{Partition Connector} in Randomized Polynomial Time}

\begin{proof}[Lemma \ref{thm:rand-part-conn}]
Consider the following algorithm $\cA$.  Algorithm $\cA$ takes as input an $n$-vertex instance of   \textsc{Partition Connector} with the partition $\cQ=\{Q_1,\dots, Q_\ell\}$, along with a separate set of edges $F$   that have been previously selected to be included in the partition connector. Let $F$ be initialized as $\emptyset$. As its first step, Algorithm $\cA$ checks if $\ell=1$; if this is the case, then it computes an arbitrary $f$-factor $H$, and outputs $H\cup F$. To proceed, let us denote the algorithm of Lemma~\ref{lem:existfactor} as $\cA'$. If $\ell>1$, then $\cA$ first calls $\cA'$ and outputs {\sc NO} if $\cA'$ outputs {\sc NO}. Otherwise, it fixes an arbitrary ordering $E^\leq$ of the edge set $E$ and recursively proceeds as follows.

$\cA$ constructs the set $E_1$ of all edges with precisely one endpoint in $Q_1$, and loops over all edges in $E_1$ (in the ordering given by $E^\leq$). For each processed edge $e=(v,w)$ between $Q_1$ and some $Q_i$ ($i\neq 1$), it will compute a subinstance $(G^e, f^e, \cQ^e)$ defined by setting: 
\begin{itemize}
\item $G^e=G-e$, and 
\item $f^e(v)=f(v)-1$, $f^e(w)=f(w)-1$ and $f^e=f$ for all the remaining vertices of $G$, and
\item $\cQ^e$ is obtained from $\cQ$ by merging $Q_1$ and $Q_i$ into a new set; formally (assuming $i<\ell)$, $\cQ^e=\{Q^e_1=Q_1\cup Q_i, Q_2,\dots, Q_{i-1}, Q_{i+1},\dots,
Q_\ell\}$.
\end{itemize}
Intuitively, each such new instance corresponds to forcing the $f$-factor to choose the edge $e$.
$\cA$ then queries $\cA'$ on $(G^e, f^e, \cQ^e)$. If $\cA'$ answers {\sc NO} for each such tuple $(G^e, f^e, \cQ^e)$ obtained from each edge $e$ in $E_1$, then $\cA$ immediately terminates and answers {\sc NO}. Otherwise, let $e$ be the first edge where $\cA'$ answered {\sc YES}; then $\cA$  adds $e$ into $F$. If $|\cQ^e|=1$ then the algorithm computes an arbitrary $f$-factor $H$ of $(G^e, f^e)$ and outputs $H\cup F$. On the other hand, if $|\cQ^e|>1$ then $\cA$ restarts the recursive procedure with $(G, f, \cQ):=(G^e, f^e, \cQ^e)$; observe that $|\cQ^e|\leq |\cQ|-1$.

Before arguing correctness, we show that the algorithm runs in the
required time. Since each edge in the partitioning is processed at
most $\ell$ times, the runtime of $\cA$ is asymptotically
upper-bounded by its at most $\ell\cdot n^2\leq n^3$ many calls to
$\cA'$. From Lemma~\ref{lem:existfactor}, we then conclude that the
total runtime of ${\cA}(G,f,\cQ)$ is upper-bounded by $2^{\ell}\cdot
n^{\bigoh(1)}$.

For correctness, let us first consider the hypothetical situation
where $\cA'$ always answers correctly. If no partition connector
exists, then $\cA$ correctly outputs {\sc NO} after the first call
to $\cA'$. Otherwise, there exists a partition connector, and such a
partition connector must contain at least one edge in $E_1$ at every recursion of the algorithm. This implies that $\cA'$ would output {\sc YES}
for at least one edge $e$ of $E_1$. Moreover, it is easily seen that
for any partition connector $T$ containing $e$, $T\setminus \{e\}$ is also
a partition connector in $(G^e, f^e, \cQ^e)$, and so by the same
argument $\cA'$ would also output {\sc YES} for at least one edge in the individual sets $E_1$ constructed in the recursive calls of $\cA$. In particular, if $\cA'$ would always answer correctly, then $\cA$ would correctly output a partition connector $H\cup F$ at the end of its run. For further considerations, let us fix the set $F$ which would be computed by $\cA$ under the assumption that $\cA'$ always answers correctly; in other words, $F$ is the lexicographically first tuple of edges in $E_1$ which intersects a partition connector.

We are now ready to argue that $\cA$ succeeds with the desired
probability; recall that $\cA'$ only allows one-sided errors. So, if
the input is a no-instance, then $\cA$ is guaranteed to correctly
output {\sc NO} after the first query to $\cA'$. Furthermore, by the
definition of $F$, for each edge $e\not \in F$ processed by $\cA$, the
algorithm $\cA'$ must also answer {\sc NO} on $(G^e, f^e, \cQ^e)$. So, assuming $\cA'$ always answers correctly, in
total $\cA'$ would only be called at most times on
yes-instances, and in all remaining calls it receives a
no-instance. Given that $\cA'$ has a success probability of at least
$1-\frac{1}{n^2}$, the probability that $\cA'$ is called at most $|F|+1=\ell$ times on  {\sc YES}-instances (not counting the initial call on $G$), and that it succeeds in all these calls,
is at least $(1-\frac{1}{n^2})^\ell$. Hence the error probability of
the algorithm is at most $1-(1-\frac{1}{n^2})^\ell$. This completes
the proof of the lemma.
\qed \end{proof}

\section{Classification Results}
\label{sec:npi}

In this section, we prove Theorem \ref{thm:lowerbound} which we restate for the sake of completeness.

\lowerbound*

The result relies on the established Exponential Time Hypothesis, which we recall below.
\begin{definition}[Exponential Time Hypothesis (ETH),~\cite{RF01}]
There exists a constant $s>0$ such that \textsc{3-SAT} with $n$ variables and $m$ clauses cannot be solved in time $2^{sn}(n+m)^{\mathcal{O}(1)}$.
\end{definition}

We first show that the problem is not \NP-hard unless the ETH fails. We remark that we can actually prove a stronger statement here by weakening the premise to ``\NP\  is not contained in Quasi-Polynomial Time''. However, since we are only able to show the other part of Theorem \ref{thm:lowerbound} under the ETH, we phrase the statement in this way.

\begin{lemma}\label{lem:non-npc}
For every $c>1$ and for every $g(n)\in\Theta((\log n)^c)$, \textsc{Connected $g$-Bounded $f$-Factor} is not \NP-hard unless the Exponential Time Hypothesis fails.
	\end{lemma}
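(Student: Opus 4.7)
The plan is to derive a contradiction with the Exponential Time Hypothesis by combining a hypothetical NP-hardness reduction with the quasi-polynomial time algorithm guaranteed by Theorem~\ref{thm:polylog}. Since $g(n) \in \Theta((\log n)^c) \subseteq \bigoh((\log n)^c)$, Theorem~\ref{thm:polylog} applies to our problem, yielding an algorithm with running time $n^{(\log n)^{\alpha(c)}}$ for some constant $\alpha(c)$. The key observation is that this running time is quasi-polynomial, and composing it with any polynomial-time reduction from 3-SAT yields a sub-exponential time algorithm for 3-SAT.

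Concretely, I would argue as follows. Suppose for contradiction that \textsc{Connected $g$-Bounded $f$-Factor} is \NP-hard. Then there exists a polynomial time many-one reduction from \textsc{3-SAT} to \textsc{Connected $g$-Bounded $f$-Factor}. Given a \textsc{3-SAT} instance $\varphi$ with $n$ variables and $m$ clauses, this reduction produces, in time polynomial in $n+m$, an equivalent instance $(G, f)$ of \textsc{Connected $g$-Bounded $f$-Factor} whose size $N = |V(G)|$ is bounded by $(n+m)^{d}$ for some constant $d$ depending only on the reduction.

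Applying the algorithm from Theorem~\ref{thm:polylog} to $(G, f)$, we can decide $\varphi$ in total time
\begin{equation*}
(n+m)^{\bigoh(1)} + N^{(\log N)^{\alpha(c)}} = 2^{\bigoh((\log N)^{\alpha(c)+1})} = 2^{\bigoh((\log(n+m))^{\alpha(c)+1})}.
\end{equation*}
Since $\alpha(c)$ is a constant, this running time is $2^{\textup{polylog}(n+m)}$, which is $2^{o(n)}$. In particular, for every fixed $s > 0$, the running time is eventually bounded by $2^{sn}(n+m)^{\bigoh(1)}$, contradicting the Exponential Time Hypothesis.

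There is no real obstacle here beyond carefully verifying that the exponent arithmetic works out, which it does because $N = (n+m)^{\bigoh(1)}$ gives $\log N = \bigoh(\log(n+m))$ and the whole expression remains polylogarithmic in the input size of the original 3-SAT instance. I would note, as the paragraph already does, that this argument in fact only uses the weaker assumption that \NP\ is not contained in quasi-polynomial time, but we phrase it using ETH to align with the complementary statement (absence from \P) that will be established in the remainder of the section.
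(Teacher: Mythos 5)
Your proof is correct and follows essentially the same approach as the paper: apply Theorem~\ref{thm:polylog} to get a quasi-polynomial algorithm, then observe that composing a hypothetical polynomial-time \NP-hardness reduction with it would place \NP\ in quasi-polynomial time, contradicting ETH. You simply unfold the exponent arithmetic explicitly where the paper states the step $\NP \subseteq \QP \Rightarrow \neg\mathrm{ETH}$ more tersely, and you make the same closing remark about the weaker $\NP \not\subseteq \QP$ assumption that the paper makes just before the lemma.
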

	
	\begin{proof} Due to Theorem~\ref{thm:polylog}, we know that  when $g(n)\in\Theta((\log n)^c)$, \textsc{Connected $g$-Bounded $f$-Factor} can be solved in quasi-polynomial time. Hence, this problem cannot be \NP-hard unless \NP\  is contained in the complexity-class Quasi-Polynomial Time, \QP. Furthermore, observe that \NP$\subseteq$ \QP\ implies that the ETH is false. Hence, we conclude that \textsc{Connected $g$-Bounded $f$-Factor} is not \NP-hard unless the Exponential Time Hypothesis fails.
%Assume for a contradiction that \textsc{Connected $g$-Bounded $f$-Factor} is \NP-hard for some fixed function $g(n)=\Theta(\log^c n)$ and constant $c$.
%  This implies there exists a polynomial time reduction $\cA$ from \textsc{3-SAT} to \textsc{Connected $g$-bounded $f$-factor}. In particular, $\cA$ takes as input an instance of $I$ \textsc{3-SAT} and outputs an $n$-vertex instance $(G,f)$ of \textsc{Connected $g$-Bounded $f$-Factor} such that $n\leq |I|^{\bigoh(1)}$. By Theorem~\ref{thm:polylog}, this instance $(G,f)$ can then be solved in time at most $n^{\log^{\bigoh(1)}n}$. But this would imply that every instance $I$ of \textsc{3-SAT} can be solved in time $|I|^{\log^{\bigoh(1)}|I|}$, which contradicts the ETH.
  \qed \end{proof}

%  The following lemma states this observation.
%%Both $f$ and $G$ should be of size polynomial in the instance of $A$. If the reduction ensures that $f(v)\geq n/g(n)$ for every $v$ in $V$ where $g(n)$ in $\mathcal \bigoh(\polylog{n})$, then this means we can solve all \NP-Complete problems in quasi-polynomial time which is very unlikely. A significant consequence is the following theorem.
%\begin{lemma}
%The {\cgfa} problem for $g(n)$ in $\mathcal \bigoh(\polylog  n)$  cannot be \NP-Complete unless the ETH is false.  
%\label{lem:non-npc}
%\end{lemma}

Next, we use a reduction from \textsc{Hamiltonian Cycle} to obtain:
\begin{lemma}
\label{lem:non-p}
	For every $c>1$ and for every $g(n)\in\Theta((\log n)^c)$, \textsc{Connected $g$-Bounded $f$-Factor} is not in \P\ unless the Exponential Time Hypothesis fails.
\end{lemma}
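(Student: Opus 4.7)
The plan is to reduce \textsc{Hamiltonian Cycle} to \textsc{Connected $g$-Bounded $f$-Factor} by a polynomial-size blow-up, and then invoke the standard consequence of the ETH that \textsc{Hamiltonian Cycle} on $n$-vertex graphs admits no $2^{o(n)}$-time algorithm. Given a \textsc{Hamiltonian Cycle} instance $G$ on $n$ vertices, I would build an instance $(G', f)$ of \textsc{Connected $g$-Bounded $f$-Factor} as follows. Fix an integer $k = 2^{D n^{1/c}}$ for a sufficiently large constant $D$ chosen below. For every $v \in V(G)$, introduce a clique $V_v$ on $k$ vertices and single out one vertex $p_v \in V_v$ as the \emph{portal}. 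Then add the edge $\{p_u, p_v\}$ for every $\{u,v\} \in E(G)$, and define $f(x) = k-1$ whenever $x$ is a non-portal vertex and $f(p_v) = k+1$ for every portal. Write $N = nk$ for the number of vertices of $G'$.

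For the correctness of the reduction, observe that every non-portal vertex $x \in V_v$ has exactly $k-1$ neighbours in $G'$, so the $f$-condition forces every $f$-factor to contain all internal edges of every clique $V_v$. Hence each portal $p_v$ uses $k-1$ internal edges towards its $f$-value of $k+1$, leaving exactly two external edges $\{p_u, p_v\}$ to be selected. Thus the external edges used by any $f$-factor form a $2$-regular subgraph on the portals, and because each $V_v$ is internally connected by the forced edges, a connected $f$-factor exists in $G'$ iff this $2$-regular subgraph is a single cycle covering all portals, i.e., iff $G$ admits a Hamiltonian cycle. The reverse direction is immediate: any Hamiltonian cycle of $G$, together with all internal clique edges, constitutes a connected $f$-factor of $G'$.

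It remains to verify the $g$-bounded constraint and derive the ETH contradiction. With the chosen $k$, $\log N = \log n + D n^{1/c} = (1+o(1))\, D n^{1/c}$ and $(\log N)^c = (1+o(1))\, D^c n$. As $g(N) \in \Theta((\log N)^c)$, there is a constant $\alpha > 0$ such that $g(N) \geq \alpha (\log N)^c$ for all sufficiently large $N$, whence $N/g(N) \leq k /(\alpha D^c (1+o(1)))$. Choosing $D$ large enough that $\alpha D^c > 2$ then yields $N/g(N) \leq k - 1 \leq f(v)$ for every $v$, so $(G',f)$ is a valid instance. The reduction is polynomial in $N$; if \textsc{Connected $g$-Bounded $f$-Factor} admitted a polynomial-time algorithm, the Hamiltonian cycle instance would be decided in $N^{O(1)} = 2^{O(n^{1/c})}$ time. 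Because $c>1$, this is $2^{o(n)}$, contradicting the ETH.

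The only delicate step is calibrating $k$. It must be large enough for the forced minimum degree $k-1$ to dominate $N/g(N)$ even in the worst case for the implicit constants in $g(N) = \Theta((\log n)^c)$, yet small enough that $N = nk$ remains $2^{o(n)}$. Both constraints can be satisfied simultaneously precisely because $c>1$, which is the sole place in the argument where this hypothesis is invoked.
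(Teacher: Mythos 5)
Your proposal is correct and follows essentially the same route as the paper: blow each vertex of the \textsc{Hamiltonian Cycle} instance up into a clique of size roughly $2^{O(n^{1/c})}$ so that all clique edges are forced, leaving exactly two external edge slots per original vertex; the only differences from the paper's proof are cosmetic (you single out a portal inside each clique where the paper attaches the clique to the original vertex, and you parameterize the clique size via a constant $D$ where the paper uses a closed-form expression for $s$). The calibration argument, the verification of the $f(v)\geq n/g(n)$ lower bound, and the invocation of the ETH-based $2^{\Omega(n)}$ lower bound for \textsc{Hamiltonian Cycle} all match the paper's reasoning.
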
 

\begin{proof} 
Assume for a contradiction that \textsc{Connected $g$-Bounded $f$-Factor} is in \P\ for some $g(n)\in\Theta((\log n)^c)$ and $c>1$. Let us fix this function $g$ for the remainder of the proof. In particular, there exists constants $c_1$ and $\epsilon>0$ such that $g(n)\geq c_1(\log n)^{1+\epsilon}$ for sufficiently large $n$.
The proof is structured as follows. First, we present a \emph{subexponential time} reduction from \textsc{Hamiltonian Cycle} to \textsc{Connected $g$-Bounded $f$-Factor}. We then show that such a reduction would imply a subexponential time algorithm for \textsc{Hamiltonian Cycle}, which is known to violate ETH~\cite{RF01}.

The reduction algorithm $\textsc{R}_\epsilon$ takes a graph $G$ on $z$ vertices as input, computes $s=\frac{2^{(\frac z{c_1})^{1/(1+\epsilon)}}}{z}$, and outputs an $n$-vertex instance $(G',f)$ of \textsc{Connected $g$-Bounded $f$-Factor} which satisfies the following conditions:
\begin{enumerate}
\item $f(v)\geq\frac{n}{c_1(\log n)^{1+\epsilon}}$ for every $v$ in $G'$.
\item $n$ is upper-bounded by a subexponential function of $z$.
\item $G$ has a Hamiltonian cycle if and only if $(G',f)$ contains a connected $f$-factor. 
% \item For each $G'$ output by $\textsf R_{\epsilon}$, the size of $G'$ is upper bounded by a function in $2^{o(N)}$ where $N$ is the number of vertices in $G$.
\end{enumerate}
Crucially, observe that for sufficiently large $z$, we have $s>z$.
The algorithm $\textsc{R}_\epsilon$ works as follows. Given a graph $G$, for each vertex $v$ it constructs a clique $C_v$ of size $\ceil{s}-1$ and makes each vertex in $C_v$ adjacent to $v$. For each $x\in C_v$, it sets $f(x)=\ceil{s}-1$, while for $v$ it sets $f(v)=\ceil{s}+1$.

Next, we argue that $(G',f)$ satisfies conditions~$(1)$,~$(2)$ and~$(3)$. For Condition~$(1)$, we need to ensure that the bound on $f$ holds for vertices in each $C_v$, meaning that we need to verify that $\ceil{s}-1\geq\frac{n}{c_1(\log n)^{1+\epsilon}}=\frac{\ceil{s}\cdot z}{c_1(\log (\ceil{s}\cdot z))^{1+\epsilon}}$ holds. By replacing $s$ with its function of $z$, we obtain:

$$
\frac{\ceil{s}\cdot z}{c_1(\log (\ceil{s}\cdot z))^{1+\epsilon}}= \frac{\ceil{s}\cdot z}{c_1(\log \ceil{\frac{2^{(\frac z{c_1})^{1/(1+\epsilon)}}}{z}}\cdot z)^{1+\epsilon}}$$

We proceed by using $s>z$ to bound the effects of rounding up the numerator in the fraction.

$$
\frac{\ceil{s}\cdot z}{c_1(\log (\frac{2^{(\frac z{c_1})^{1/(1+\epsilon)}}}{z}\cdot z))^{1+\epsilon}}\geq 
\frac{\ceil{s}\cdot z}{c_1(\log (\ceil{\frac{2^{(\frac z{c_1})^{1/(1+\epsilon)}}}{z}}\cdot z))^{1+\epsilon}}
$$

It remains to show that $\ceil{s}-1$ is at least the left expression.

$$\frac{\ceil{s}\cdot z}{c_1(\log (\frac{2^{(\frac z{c_1})^{1/(1+\epsilon)}}}{z}\cdot z))^{1+\epsilon}}
=\frac{\ceil{s}\cdot z}{z\cdot (\log 2)^{1+\epsilon}}= \frac{\ceil{s}}{(\log 2)^{1+\epsilon}}.
$$

Since $\ceil{s}-1\geq\frac{\ceil{s}}{(\log 2)^{1+\epsilon}}$, Condition~$(1)$ holds. For Condition~$(2)$, it suffices to note that $n=\ceil{s}\cdot z= z\cdot \ceil{\frac{2^{(\frac z{c_1})^{1/(1+\epsilon)}}}{z}}$, which is clearly a subexponential function.

Finally, for Condition~$(3)$, observe that every edge in each clique $C_v$ must be used in every connected $f$-factor of $(G',f)$. Furthermore, all the other edges in every such connected $f$-factor must induce a connected subgraph of $G$ with degree $2$, which is a Hamiltonian cycle. Hence there is a one-to-one correspondence between Hamiltonian cycles in $G$ and connected $f$-factors of $(G',f)$, and Condition~$(3)$ also holds.

To complete the proof, recall that we assumed that there exists a polynomial time algorithm for \textsc{Connected $g$-Bounded $f$-Factor} for our choice of $g$. Then, given an instance $G$ of \textsc{Hamiltonian Cycle}, we can apply $\textsc{R}_\epsilon$ on $G$ followed by the hypothetical polynomial time algorithm on the resulting instance $(G',f)$ (whose size is subexponential in $|V(G)|$) to solve $G$ in subexponential time. As was mentioned earlier in the proof, such an algorithm would violate ETH.
\qed \end{proof}

Lemmas \ref{lem:non-npc} and \ref{lem:non-p} together give us Theorem
\ref{thm:lowerbound}.

\section{Concluding remarks}

We obtained new complexity results for \textsc{Connected $f$-Factor}
with respect to lower bounds on the function $f$.
As our main results, we showed that when $f(v)$ is required to be at least
$\frac{n}{(\log n)^c}$,  the problem can be solved in quasi-polynomial
time in general and in randomized polynomial time if $c\leq
1$. Consequently, we show that the problem can be solved in
polynomial time when $f(v)$ is at least $\frac{n}{c}$ for any constant
$c$. We complement the picture with matching classification results.

As a by-product we obtain a generic approach reducing
\textsc{Connected $f$-Factor} to the ``simpler'' \textsc{Partition
  Connector} problem. Hence future algorithmic improvements of \textsc{Partition
  Connector} carry over to the \textsc{Connected $f$-Factor} problem.
Finally, it would be interesting to investigate the possibility of
derandomizing the polynomial time algorithm for the case when $g(n)
= \mathcal{O}(\log n)$.

\paragraph*{Acknowledgments.} \hspace{0.05cm} The authors wish to thank the anonymous reviewers for their helpful comments. The authors acknowledge support by the Austrian Science Fund (FWF, project P26696), and project TOTAL funded by the European Research Coun-
cil (ERC) under the European Unions Horizon 2020 research and innovation programme (grant agreement No 677651). Robert Ganian is also affiliated with FI MU, Brno, Czech Republic.

\bibliographystyle{plain}
\bibliography{biblio}

\end{document}